\documentclass[sigconf]{acmart}

\usepackage[linesnumbered,ruled,vlined]{algorithm2e}
\usepackage{amsmath,graphics,float,subfigure}
\usepackage{balance}
\usepackage{pifont}
\usepackage{wasysym}
\usepackage{multirow}
\usepackage{color}
\usepackage{xcolor}
\newcommand{\stitle}[1]{\vspace{1ex} \noindent{\bf #1}}
\long\def\comment#1{}
\newcommand{\kw}[1]{{\ensuremath{\mathsf{#1}}}\xspace}

\newcommand{\minCliques}{\kw{mineCliques}}
\newcommand{\inset}{\kw{insert}}
\newcommand{\acorn}{\kw{ACORN}}

\newcommand{\mci}{\kw{MCI}}
\newcommand{\mcc}{\kw{MCC}}
\newcommand{\tmcc}{$\tau$-\kw{MCC}}
\newcommand{\knng}{$k$-NNG\xspace}

\begin{document}


\title{MCI: A Maximal Clique Index for Efficient Arbitrary-Filtered Approximate Nearest Neighbor Search}

\settopmatter{authorsperrow=3}

\author{Xiaowei Ye}
\affiliation{%
	\institution{Beijing Institute of Technology}
	\city{Beijing}
	\country{China}
}
\email{yexiaowei@bit.edu.cn}

\author{ Rong-Hua Li}
\affiliation{%
	\institution{Beijing Institute of Technology}
	\city{Beijing}
	\country{China}
}
\email{ lironghuabit@126.com}

\author{Guoren Wang}
\affiliation{%
		\institution{Beijing Institute of Technology}
		\city{Beijing}
		\country{China}
	}
\email{ wanggrbit@126.com}

\author{ Kaiwen Xue}
\affiliation{%
	\institution{Huawei}
	\city{Dongguan}
	\country{China}
}
\email{ xuekaiwen6@huawei.com}

\author{ Daiyin Wang}
\affiliation{%
	\institution{Huawei}
	\city{Dongguan}
	\country{China}
}
\email{ wangdaiyin@huawei.com}

\author{ Xubin Li}
\affiliation{%
	\institution{Huawei}
	\city{Dongguan}
	\country{China}
}
\email{ lixubin@huawei.com}

\renewcommand{\shortauthors}{Rong-Hua Li et al.}

\begin{abstract}
	Approximate Nearest Neighbor Search with arbitrary filtering predicates (AFANNS) is essential for modern data applications, yet existing methods often incur substantial storage and computational costs. In this work, we introduce the Maximal Clique Index (\mci), a novel graph-based index designed for robust and efficient AFANNS. The core idea of \mci is to approximate a dense Nearest Neighbor Graph (NNG) through a compact, clique-based representation. We propose two key techniques: (1) Maximal Clique Cover (\mcc), which exploits the geometric transitivity of high-dimensional spaces to encode dense neighborhoods as maximal cliques, achieving an index with high compression and  connectivity; and (2) Local Neighborhood Graph Geometric Densification, a strategy that constructs an index approximating a large NNG from a sparse initial NNG,  recovers global connectivity by progressively increasing distance thresholds to locally densify the structure. The index is built in a lock-free parallel manner for scalability and queried via a carefully-designed multi-seed strategy to handle fragmented predicate-induced subgraphs. Extensive experiments on 10 datasets show that \mci significantly outperforms state-of-the-art methods by up to one order of magnitude in QPS at high recall while using substantially smaller space, and remains competitive even on range/keyword filtering tasks, demonstrating robust general-purpose performance.
\end{abstract}

\maketitle


\section{Introduction} \label{sec:intro}
With the proliferation of powerful vector embeddings, Approximate Nearest Neighbor Search (ANNS) has become a foundational component in numerous real-world applications, including multimedia retrieval \cite{DBLP:conf/kdd/BorisyukMM0LMBR21,DBLP:journals/pacmmod/MohoneyPCMIMPR23}, recommendation systems \cite{DBLP:conf/www/DasDGR07}, and Retrieval-Augmented Generation (RAG) \cite{DBLP:conf/kdd/FanDNWLYCL24}. In this landscape, Arbitrary Filtered Approximate Nearest Neighbor Search (AFANNS) is particularly crucial. For instance, in a RAG system, a user may need to retrieve a document chunk that satisfies specific metadata constraints (e.g., keywords, publication date) while simultaneously relying on vector similarity for semantic matching.

State-of-the-art (SOTA) ANNS methods predominantly rely on proximity graphs such as the Relative Neighborhood Graph (RNG) \cite{RNG1980}. A fundamental requirement for effective search in such graphs is connectivity, ensuring a traversal path exists from any starting node to the target. However, this property is inherently compromised in AFANNS due to query-specific predicates. Proximity graphs limit each node to a small, fixed number of out-neighbors (e.g., $M$ in HNSW \cite{HNSW} or NSG \cite{NSG}). Under a predicate with selectivity $s$ (the fraction of vectors that satisfy the predicate), the expected number of valid neighbors per node reduces to $M \cdot s$. When $s$ is small (e.g., $s=0.01$, $M=32$), many nodes are likely to have zero valid out-neighbors, fragmenting the induced subgraph and halting traversal. Consequently, traditional proximity-graph indexes with limited neighbor lists are fundamentally ill-suited for AFANNS.

To address this, the state-of-the-art ACORN algorithm \cite{ACORN} proposes a predicate-agnostic proximity graph index, extending the Hierarchical Navigable Small World (HNSW) structure. While standard HNSW maintains $M$ neighbors per node, ACORN expands this list to $M \cdot \gamma$, with $\gamma \approx 1/s$, to statistically preserve $M$ valid neighbors after filtering. To mitigate the resulting memory explosion, ACORN approximates the expanded neighborhood via 2-hop connections. Despite this optimization, ACORN still suffers from significant storage overhead and suboptimal search performance in practice \cite{iRangeGraph,MochengSurvey,UNG}. Alternative strategies like pre-filtering (filter first, search after) and post-filtering (search first, filter after) also exhibit well-known efficiency bottlenecks, especially under varying selectivity \cite{MochengSurvey} (detailed in Section~\ref{subsec:exist-methods}).

The limitations of existing approaches reveal that traditional graph structures are inadequate for AFANNS. We therefore propose a fundamentally new index: the Maximal Clique Index (\mci). The core idea is to approximate a dense $k$-Nearest Neighbor Graph ($k$-NNG)—where $k$ is large enough to ensure connectivity under filtering (e.g., $k \approx M/s$)—but to do so efficiently via a much smaller $k'$-NNG ($k' < k$). This is made possible by a key geometric insight: in high-dimensional spaces, neighbors of a node tend to be neighbors of each other. Formally, we prove that nodes within a local neighborhood of a sparse $k'$-NNG are highly likely to form dense cliques (Theorem~\ref{the:nnarenn}). Thus, a large $k$-NNG's connectivity can be inferred from the maximal cliques within a sparse $k'$-NNG.

Based on this observation, we propose two pivotal techniques for constructing \mci from a sparse $k'$-NNG:
(1) Maximal Clique Cover (\mcc), which selects a covering set of maximal cliques (via a linear-time greedy algorithm \cite{CoreCliqueRemoval}) to avoid redundancy and ensure every node is covered; and (2) Geometric Densification of Neighborhood Graphs, which iteratively expands local neighborhood graphs by increasing a distance threshold, mining cliques progressively until full coverage is achieved. This prioritizes the early discovery of tight, high-quality cliques. For query processing, we design a beam-search algorithm augmented with a carefully-designed multiple seeds selection strategy, ensuring robust traversal even when the predicate-induced subgraph is fragmented.

\mci offers three key advantages over prior work. First, it eliminates the need for impractically large neighbor lists (e.g., $M\cdot\gamma$ in ACORN), constructing instead from a modest $k'$-NNG (e.g., $k'\le 200$), which remains computationally tractable at scale. Notably, the final index discards the original $k'$-NNG, incurring no dependency overhead. Second, it achieves a superior QPS–Recall trade-off. Our experiments show that \mci can deliver up to an order of magnitude higher QPS at high recall (e.g., recall@10 $> 0.95$) under mixed selectivity regimes. Third, our index is storage-efficient, with a footprint comparable to the highly compact IVFPQ algorithm and up to $10\times$ smaller than ACORN.

In summary, our main contributions are:
\begin{itemize}
    \item \textbf{A novel maximal clique index (\mci):} We introduce the first clique-based index for AFANNS, which compactly approximates a dense $k$-NNG via a maximal clique cover mined from a sparse $k'$-NNG, grounded in a formal geometric theorem. To the best of our knowledge, our work pioneers the use of maximal cliques for indexing in high-dimensional vector retrieval.

    \item \textbf{Efficient construction algorithm:} We design an algorithm that geometrically expands local neighborhood graphs to efficiently mine a covering set of cliques, ensuring complete node coverage with near-linear practical complexity.
    
    \item \textbf{Robust search algorithm:} We develop a search algorithm combining beam search with a multi-seed initialization, guaranteeing robust performance even over disconnected predicate subgraphs.
    
    \item \textbf{Comprehensive evaluation:} We conduct extensive experiments on 10 benchmark datasets. Results demonstrate that \mci consistently outperforms SOTA methods (ACORN, HNSW, IVFPQ, and specialized filters) in QPS at high recall, maintains compact index size, and exhibits robustness across all selectivity regimes.
    \item \textbf{Reproducibility :} Our source code is available at the anonymous repository \url{https://anonymous.4open.science/r/MCI_ANNS}.
\end{itemize}

\section{Preliminaries} \label{sec:preliminaries}
\subsection{Notations and problem definition}
Let $D = \{(v_i, f_i)\}_{i=1}^n$ be a dataset of $n$ vector-feature pairs, where each vector $v_i \in \mathbb{R}^{D}$ is associated with a corresponding feature $f_i$. Denote by $V=\{v_i\}_{i=1}^n$ and $F=\{f_i\}_{i=1}^{n}$. A query \( Q = (v_q, f_q, P_q) \) contains a query vector \( v_q \), a feature \( f_q \), and a predicate \( P_q \). For a data pair \( (v_i, f_i) \), the predicate evaluates \( P_q(f_i, f_q) \in \{\text{True}, \text{False}\} \). The pair \( (v_i, f_i) \) is \emph{valid} to \( Q \) if and only if \( P_q(f_i, f_q) \) holds. We denote the Euclidean distance between two vectors by \( d(\cdot, \cdot) \).

The \textit{selectivity} of a query on the dataset, denoted as $s(D,Q)$, is the fraction of elements that satisfy the predicate:
\begin{equation}
    s(D,Q) = \frac{\sum_{i=1}^{n}{1[P_q(f_i,f_q)=True]}}{n}. 
\end{equation}
Note that $0 \le s(D,Q) \le 1$. We use $s$ instead of $s(D,Q)$ when the context is clear.

Let $G(U,E)$ be a graph where $U$ is the set of nodes and $E\subseteq U\times U$ is the set of edges. A clique is a subgraph of $G$ that all pairs of nodes are connected by an edge. A maximal clique is a clique that cannot add any other node to reach a larger clique.

A \textit{$k$-nearest neighbor graph} ($k$-NNG) is a directed graph where an edge exists from $u$ to $v$ if $v$ is among the $k$ nearest neighbors of $u$ \cite{NN-Decent, survey21}. Here, the \emph{$k$-nearest neighbors} of \( u \) with respect to \( {V} \), denoted \( \mathcal{N}_k(u) \), is a set \( \mathcal{N}_k(u) \subseteq {V} \) of size \( k \) that satisfies:
\begin{equation}
\max_{v \in \mathcal{N}_k(u)} d(u, v) \ \le \ \min_{v' \in \mathcal{V} \setminus \mathcal{N}_k(u)} d(u, v').
\end{equation}

\stitle{Arbitrary-Filtered Approximate Nearest Neighbor Search (AFANNS).} 
Given a dataset $D=(V,F)$, a query $Q(v_q, f_q, P_q)$, and an integer $k$, the goal is to find the $k$ nearest neighbors of $v_q$ that satisfy the predicate $P_q(f_i, f_q) = True$.
Let $R$ be the set of the exact $k$ closest \textit{True} elements to $v_q$. The AFANNS problem aims to retrieve a set $R'$ of $k$ \textit{True} elements to maximize recall and efficiency, where Recall@$k = \frac{|R \cap R'|}{k}$.

\subsection{Challenges of AFANNS}
The \emph{arbitrary filtering} nature of AFANNS brings two basic challenges that together make it hard to design efficient indexing and query methods, as discussed below.

\comment{
\stitle{The ``Unhappy Middle'' of Selectivity Regimes.}
AFANNS confronts the ``unhappy middle'' problem across varying selectivity ranges \cite{SIEVE, UNG, MochengSurvey}:

At \textit{very low selectivity} (where few elements are valid), brute-force scanning of candidates is feasible due to the small subset size. At \textit{high selectivity} (where most elements satisfy the predicate), the top nearest neighbors in the full dataset are highly likely to include sufficient valid elements, rendering standard ANNS algorithms effective.

However, the challenge becomes acute at \textit{moderate selectivity} \cite{MochengSurvey,UNIFY,SIEVE}. This regime offers no clear optimization path: (1) brute-force scanning is computationally prohibitive due to the larger volume of valid elements compared to the low-selectivity case; (2) standard ANNS is inefficient because the top unfiltered neighbors often lack sufficient valid elements; and (3) searching solely within the predicate-induced subgraph is ineffective, as the subgraph often fragments into multiple disconnected components, hindering graph traversal. Consequently, the moderate selectivity regime constitutes the primary bottleneck for AFANNS.
}

\stitle{C1: Heterogeneous feature types rule out specialized indexing.}
Specialized ANNS variants use specific feature properties: keyword- or tag-based filtering ($f_i$ is a set of labels, attributes, or tags) relies on discrete set membership \cite{UNG,filteredDiskann,NHQ,TFANNS}, while range-based filtering ($f_i$ is a numeric scalar) uses the order of numerical attributes \cite{iRangeGraph,DBLP:conf/icml/EngelsLYDS24,SeRF,DSG,UNIFY,DIGRA,RangePQ}. In addition, some recent range-filtering indexes also support dynamic feature updates, such as DSG, DIGRA, and RangePQ+ \cite{DSG,DIGRA,RangePQ}. In contrast, AFANNS must handle a mix of data types (e.g., integers, floats, strings, and timestamps), which may also change over time. No single indexing paradigm can take advantage of the different properties of such diverse types at the same time, since the discrete or order-based assumptions behind specialized indexes do not always hold. As a result, general-purpose indexes often index only the vectors $V$, and treat the feature space $F$ as a generic metadata store.

\stitle{C2: Ad-hoc predicates break distribution assumptions.}
Several existing ANNS methods design indexing strategies under the assumption that future queries follow past distributions of both features and their predicates \cite{SIEVE,DBLP:conf/sigir/VecchiatoLNB24, DBLP:journals/jmlr/HyvonenJR24, DBLP:conf/dac/0003LYCWM024}. This assumption does not hold for AFANNS, where predicates are ad-hoc and unknown at index construction time. In particular, even for the same query vector, different predicates can produce very different valid subsets. This unpredictability breaks the assumption of stable query patterns, and makes distribution-based optimizations ineffective for arbitrary filtering workloads.


\subsection{Existing methods and their defects} \label{subsec:exist-methods}
Existing methods related to AFANNS can be broadly grouped into four categories: pre-filtering, post-filtering, predicate-subgraph traversal (represented by \acorn \cite{ACORN}), and specialized filtering methods tailored to specific predicate families. In the following discussion, the time complexity for the first three categories is measured by the number of distance computation operations during search.

\stitle{Pre-filtering.}
This strategy first identifies all elements that satisfy the predicate $P_q$, then performs similarity search (often via brute-force scanning or simple structures like IVF) over this filtered set. It guarantees high recall and is effective for queries with very low selectivity (small filtered sets). However, its time complexity is $O(sn)$, where $s$ is the selectivity of $P_q$, making it non-scalable for moderate or large selectivity on large datasets.

\stitle{Post-filtering.}
This approach performs an unfiltered ANNS to retrieve $O(k/s)$ candidates, then filters out those that do not satisfy $P_q$. Its time complexity is $O(\log n + k/s)$, making it efficient for highly selective queries (large $s$). However, when selectivity is low (i.e., $s$ is small), the number of candidates $O(k/s)$ may become impractically large, leading to poor efficiency.

\stitle{Predicate-subgraph traversal (\acorn).}
\acorn \cite{ACORN} extends the HNSW \cite{HNSW} index to support predicate filtering by enlarging the neighbor candidate pool during graph construction. Specifically, it scales the construction parameter $ef_c$ to $M \cdot \gamma$ (where $\gamma > 1$ is a scaling factor) to ensure sufficient valid neighbors exist for traversal under arbitrary predicates. To control memory overhead, it stores only the first $M_\beta$ candidates explicitly and approximates the remainder via 2-hop neighbors. \acorn achieves time complexity of $O((M \cdot \gamma \cdot s + \gamma) \cdot \log(sn) + \log(\frac{1}{s}))$ and requires $O(n(M_\beta + M + \frac{M\gamma}{\ln{M}}))$ index storage. While more scalable than pure pre- or post-filtering, \acorn still incurs significant storage costs and may suffer from suboptimal recall in complex selectivity scenarios, limiting its applicability in high-performance arbitrary-filtering ANNS \cite{iRangeGraph}.

\stitle{Specific filtering methods.} Several methods are designed for specific predicate types, such as keyword-based filtering \cite{UNG,filteredDiskann,NHQ,TFANNS}, range-based filtering \cite{iRangeGraph,DBLP:conf/icml/EngelsLYDS24, SeRF, DSG, UNIFY, DIGRA, RangePQ}, and categorical/numerical attribute filtering \cite{BeyondVectorSearch}. While they can achieve superior performance within their specialized domains, their underlying assumptions about feature types and predicate logic render them inapplicable to the general AFANNS setting, which must accommodate heterogeneous features and ad-hoc predicates. Table~\ref{tab:methods} summarizes the key characteristics of these methods.

\begin{table}[t!]
	\scriptsize
    \centering
    \vspace*{-0.2cm}
    \caption{Characteristics of representative filtered ANNS methods. ``Upd.'' indicates whether the feature values attached to indexed vectors can be updated after index construction without rebuilding.}
    \vspace*{-0.3cm}
    \label{tab:methods}
	\setlength{\tabcolsep}{2pt}
    \begin{tabular}{l|c|c|c|c}
        \toprule
        \textbf{Method} & \textbf{$F$ type} & \textbf{Upd.} & \textbf{$f_q$} & \textbf{$P_q(f_i,f_q)$} \\
        \midrule
        \multicolumn{5}{c}{\textit{Range-filtering methods}} \\
        \midrule
        SeRF \cite{SeRF}              & scalar (float) & \ding{55} & interval $[l,r]$ & $f_i\in[l,r]$ \\
        iRangeGraph \cite{iRangeGraph} & scalar (float) & \ding{55} & interval $[l,r]$ & $f_i\in[l,r]$ \\
        DSG \cite{DSG}                 & scalar (float) & \ding{51} & interval $[l,r]$ & $f_i\in[l,r]$ \\
        UNIFY \cite{UNIFY}             & scalar (float) & \ding{55} & interval $[l,r]$ & $f_i\in[l,r]$ \\
        Engels et al. \cite{DBLP:conf/icml/EngelsLYDS24} & scalar (float) & \ding{55} & interval $[l,r]$ & $f_i\in[l,r]$ \\
        DIGRA \cite{DIGRA}             & scalar (float) & \ding{51} & interval $[l,r]$ & $f_i\in[l,r]$ \\
        RangePQ+ \cite{RangePQ}        & scalar (float) & \ding{51} & interval $[l,r]$ & $f_i\in[l,r]$ \\
        \midrule
        \multicolumn{5}{c}{\textit{Keyword/label-filtering methods}} \\
        \midrule
        Filtered-DiskANN \cite{filteredDiskann} & label set & \ding{55} & label(s) & $f_q\subseteq f_i$ \\
        NHQ \cite{NHQ}                 & label set  & \ding{55} & label(s) & $f_q= f_i$ \\
			UNG \cite{UNG}                 & label set  & \ding{55} & label(s) & \begin{tabular}[c]{@{}c@{}}$f_q\in f_i$, $f_i\in f_q$,\\$f_i=f_q$, $f_i\cap f_q\neq \emptyset$\end{tabular} \\
        TFANNS \cite{TFANNS}           & label set  & \ding{55} & label(s) & $f_q\in f_i$ \\
        \midrule
		\multicolumn{5}{c}{\textit{Range plus label-filtering methods}} \\
		\midrule
			Xie et al. \cite{BeyondVectorSearch} & \begin{tabular}[c]{@{}c@{}}scalar (float) or label set\end{tabular} & \ding{55} & \begin{tabular}[c]{@{}c@{}}$[l,r]$ or\\label(s)\end{tabular} & $f_i\in [l,r]$ or $f_i\in f_q$ \\
		\midrule
        \multicolumn{5}{c}{\textit{Arbitrary-filtering methods}} \\
        \midrule
        Pre-/Post-filtering            & arbitrary & \ding{51} & arbitrary & arbitrary \\
        \acorn \cite{ACORN}            & arbitrary & \ding{51} & arbitrary & arbitrary \\
        \mci (ours)                    & arbitrary & \ding{51} & arbitrary & arbitrary \\
        \bottomrule
    \end{tabular}
    \vspace*{-0.3cm}
\end{table}

\comment{
\section{A Novel Maximal Clique Index} \label{sec:mc-index}
The inherent limitations of existing approaches suggest that traditional indexing mechanisms are inadequate for the specific requirements of AFANNS. To address these challenges, we propose a fundamentally new index structure, the \textit{Maximal Clique Index} (\mci), which is specifically engineered to handle the complexities of AFANNS.

\subsection{Core Ideas}\label{subsec:coreIdeas}
Traditional proximity graph-based indexes typically control the average out-degree and neighbor distribution of each node to optimize performance. The Relative Neighborhood Graph (RNG) pruning technique \cite{RNG1980} is widely employed in unfiltered vector retrieval indexes \cite{HNSW, NSG}. Formally, for a target node $u$ and a candidate neighbor $v$, the edge $(u, v)$ is retained only if there is no other node $w$ such that $d(u, w) \le d(u, v)$ and $d(w, v) \le d(u, v)$ (assuming the edge is not the longest side of the triangle $uvw$). This technique enforces an omnidirectional neighbor distribution and ensures graph sparsity, thereby accelerating the search process \cite{survey21}.

However, under query-specific filter conditions, the neighbors selected by RNG may fail to satisfy the filter constraints, leading to poor graph connectivity within the subset of valid nodes \cite{ACORN}. \textbf{Consequently, to adapt to varying, ad-hoc filter conditions, a robust index must retain a dense set of $k$-nearest neighbors ($k$-NN).} When selectivity $s$ is low, \textbf{the required $k$ must be large}. Specifically, to ensure an expected out-degree of $m$ valid neighbors under selectivity $s$, each node must maintain $k \approx m/s$ neighbors. This value can be prohibitively large, leading to a substantial increase in both index size and Time-to-Index (TTI).

\begin{example}
	\textit{Consider a simplified example with $k=6$ to illustrate the limitation of RNG pruning in filtered search. Let a node $u$ have candidate neighbors $V=\{v_0, v_1, v_2, v_3, v_4, v_5\}$, partitioned into two dense clusters: $(v_0, v_1, v_2)$ and $(v_3, v_4, v_5)$. RNG pruning might select only $v_0$ and $v_3$ as neighbors to represent each cluster. If a query's filter predicate renders $v_0$ as \textit{False} (invalid) and $v_1$ as \textit{True} (valid), selecting only $v_0$ cuts off the path to $v_1$. The search terminates at $v_0$, failing to retrieve the valid neighbor $v_1$. Thus, preserving all candidates is crucial for AFANNS.}
\end{example}

Clearly, building and storing a massive Nearest Neighbor Graph (NNG) directly is impractical. To address this, we propose the \textit{Maximal Clique Index} (\mci), which approximates a large NNG using two core ideas.
}

\section{A novel Maximal Clique Index} \label{sec:mc-index}


To overcome the limitations of existing methods, we propose the \textit{Maximal Clique Index} (\mci), a novel index designed for efficient and robust AFANNS. Below, we first present the key design principles of \mci, then detail its construction algorithm and analyze its complexity.

\begin{figure}[t!]
		\centering
		\begin{minipage}[b]{0.19\linewidth}
			\centering
			\vfill
			\subfigure[A kNN Graph]{\label{sfig:graph}
				\includegraphics[width=\linewidth]{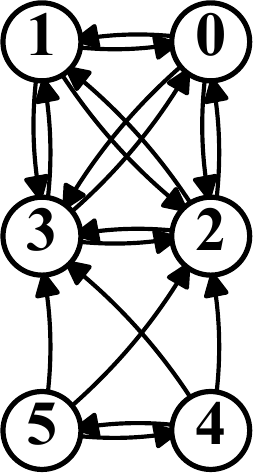}
			}
			\vfill
		\end{minipage}
		\hfill
		\begin{minipage}[b]{0.65\linewidth}
			\centering
			\vfill
			\subfigure[Three different indexes]{\label{sfig:idea}
				\includegraphics[width=\linewidth]{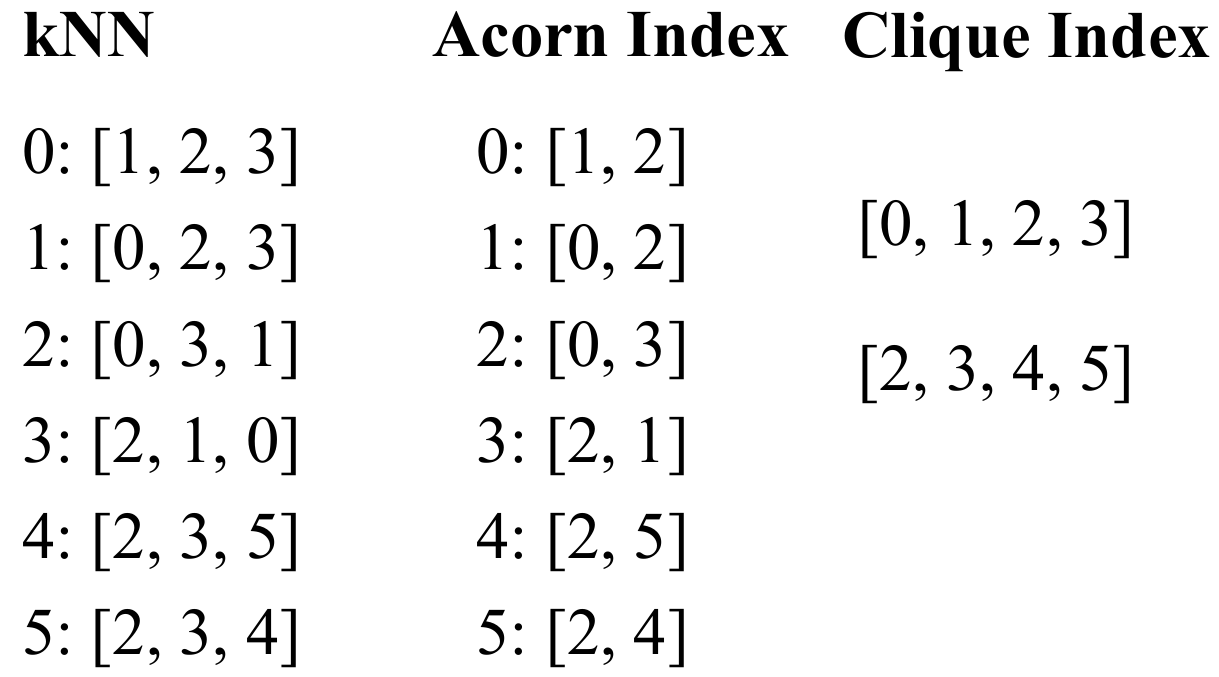}
			}
			\vfill
		\end{minipage}
        \vspace{-0.3cm}
		\caption{Illustration of \knng, \acorn index with two-hop pruning ($M_{\beta}=1$), and our maximal clique index.}  \vspace{-0.3cm}
		\label{fig:idea}
\end{figure}

\subsection{Key ideas of \mci}\label{subsec:coreIdeas}
Traditional proximity graphs optimize for unfiltered search by controlling node degree and neighbor distribution. A common technique is \textit{Relative Neighborhood Graph} (RNG) pruning \cite{RNG1980}, widely used in indexes like HNSW \cite{HNSW} and NSG \cite{NSG}. For a node $u$ and candidate neighbor $v$, the edge $(u, v)$ is retained only if no other node $w$ satisfies both $d(u, w) \le d(u, v)$ and $d(w, v) \le d(u, v)$. This ensures sparse, well-distributed edges, accelerating traversal in standard settings \cite{survey21}.

However, in AFANNS, query-specific predicates can invalidate neighbors selected by RNG, fragmenting the induced subgraph over valid nodes and halting search prematurely \cite{ACORN}. To maintain connectivity under arbitrary filters, a fundamental approach is to increase the \emph{average degree} of nodes in the proximity graph. With a sufficiently large average degree, the index can preserve reachability within the filtered subset \emph{with high probability}, even when many edges are invalidated by ad-hoc predicates. 

Unfortunately, naively increasing the degree by explicitly storing more neighbors incurs prohibitive storage overhead and index construction time. A key challenge, therefore, is to effectively \emph{amplify} the graph's connectivity without explicitly storing a dense adjacency structure.

To this end, we propose a novel solution based on maximal cliques. Our method, the \textbf{M}aximal \textbf{C}lique \textbf{I}ndex (\mci), achieves high effective connectivity through a compact, clique-based representation, which we detail in the following.



\comment{
\begin{example}
    Consider node $u$ with six candidate neighbors $\{v_0,\dots,v_5\}$ forming two dense clusters: $\{v_0,v_1,v_2\}$ and $\{v_3,v_4,v_5\}$. RNG pruning might retain only $v_0$ and $v_3$ as representative neighbors. If a query’s predicate invalidates $v_0$ but validates $v_1$, the path to $v_1$ is lost, causing search failure. Preserving all candidates avoids this issue but is infeasible at scale.
\end{example}
}

\stitle{Maximal Clique Cover (\mcc).} 
Our approach exploits a fundamental property of high-dimensional vector spaces: spatial proximity exhibits strong \emph{transitivity}, meaning that \emph{neighbors of neighbors are highly likely to be neighbors themselves} \cite{NN-Decent}. Formally, if $d(a,b)$ and $d(b,c)$ are small, then $d(a,c)$ is also likely small. Consequently, mutually proximate nodes in a $k$-NNG tend to form \emph{cliques}.

Cliques offer inherent compression benefits. In a standard adjacency-list representation, a clique of size $c$ requires storing $c(c-1)/2$ edges. By representing the clique implicitly via its $c$ member nodes, we achieve a space reduction from $O(c^2)$ to $O(c)$, corresponding to a compression factor of $(c-1)/2$. This property enables a dense graph index to be compactly represented as a collection of cliques.

However, mining all maximal cliques is inefficient because their number can be exponential (up to $O(3^{n/3})$) and they often overlap heavily \cite{DBLP:journals/tcs/TomitaTT06}. Storing every maximal clique would lead to significant redundancy. Therefore, we focus on a subset that captures the essential connectivity: the \textit{Maximal Clique Cover}.

\begin{definition}[Maximal Clique Cover (\mcc)]\label{def:mcc}
	A set of maximal cliques is a Maximal Clique Cover if every node in the graph is contained in at least one maximal clique in the set.
\end{definition}

A \mcc compactly represents all maximal cliques, providing full node coverage without redundancy. However, small maximal cliques (e.g., of size 3) often carry limited structural information and may introduce noise into the graph index by linking weakly associated nodes. To prioritize meaningful connectivity, we introduce a size threshold $\tau$ and define a refined structure:

\begin{definition}[$\tau$-Maximal Clique Cover (\tmcc)]\label{def:tmcc}
	Let $\tau$ be a positive integer. A \mcc is a \tmcc if every maximal clique in the set has a size $|C| \ge \tau$.
\end{definition}

Note that Definition~\ref{def:tmcc} guarantees each node participates in at least one clique of size at least $\tau$, implying a lower bound of $(\tau-1)$ on its degree within that clique. In practice, a node often belongs to multiple cliques, which collectively yield a high effective degree in the compressed index. By filtering out small cliques, the \tmcc retains only cohesive, information-rich subgraphs, thereby enhancing both search precision and traversal efficiency. 

Our \mci is formally a \tmcc, and the neighborhood of a node $u$ is defined as the union of the node sets of all maximal cliques containing $u$. The following example illustrates the concept.

\begin{example}\textit{
     Consider a \knng with $n=6$ and $k=3$, as shown in \textit{Figure~\ref{sfig:graph}}. Storing this graph via an adjacency list requires 18 integers (Figure~\ref{sfig:idea}, left). For comparison, methods like \acorn approximate two-hop neighborhoods; e.g., node 0's effective neighbors include $\{1, 2\}$ plus the neighbors of 1 and 2. In contrast, our \mci encodes the graph using only two maximal cliques: $C_1=\{0, 1, 2, 4\}$ and $C_2=\{2, 3, 4, 5\}$. This representation requires only 8 integers (Figure~\ref{sfig:idea}, right) to capture all adjacency relationships, while achieving a high average effective degree. For instance, node 3 belongs to both $C_1$ and $C_2$, resulting in a high effective degree of $5$.}
\end{example}

While a \tmcc can be computed efficiently (e.g., via a linear-time greedy algorithm that iteratively extracts and removes a maximal clique until all nodes are covered \cite{CoreCliqueRemoval}), a fundamental challenge remains: to guarantee coverage by cliques of size at least $\tau$, the underlying \knng must be sufficiently dense. Constructing such a dense \knng directly is prohibitively expensive. In the following, we address this by showing how to obtain a high-quality \tmcc from a much sparser $k'$-NNG (with $k' \ll k$).

\stitle{Geometric densification of neighborhood graphs.}
We construct a \tmcc from a sparse $k'$-NNG (with $k' \ll k$) by using a geometric densification technique. For each node $u$, let $V' = \mathcal{N}_{k'}(u) \cup \{u\}$ be its local neighborhood. To obtain cliques of size at least $\tau$ within $V'$, we employ a geometric densification strategy: we progressively lower the distance threshold for forming edges, thereby gradually adding connections to $G(V')$.

Formally, let $x$ be the top-1 nearest neighbor of $u$. For a parameter $\alpha > 1$, we define a distance threshold $t = \alpha \cdot d(u, x)$ and construct an edge set $E_\alpha = \{(v_i, v_j) \mid v_i, v_j \in V',\; d(v_i, v_j) \le t\}$. Maximal cliques are then greedily mined from the induced subgraph $G(V', E_\alpha)$.

A single fixed $\alpha$, however, may not yield a subgraph in which every node belongs to a clique of size $\ge \tau$ (i.e., a valid \tmcc for $V'$). To maximize coverage, we iteratively increase $\alpha$: starting from an initial value (e.g., $\alpha_0 = 1.2$), we double $\alpha$ in each round, greedily mine cliques from the corresponding denser subgraph $G(V', E_\alpha)$, and continue until every node in $V'$ is covered by at least one maximal clique of size $\ge \tau$. This process is repeated for all nodes $u$ in the vector dataset to build the global \mci. Below, we provide a formal analysis of the underlying rationale for this geometric densification strategy.

A key property of high-dimensional data is the \textit{Distance Concentration Phenomenon}: pairwise distances tend to concentrate around their mean with relatively small variance \cite{Vershynin_2018}. For analytical clarity, we model this phenomenon using a Gaussian distribution $\mathcal{N}(\mu, \sigma^2)$ with mean $\mu$ and variance $\sigma^2$. Theorem~\ref{the:nnarenn} leverages this model to establish a strong local transitivity property under high distance concentration.

\begin{theorem}[Local Connectivity under Distance Concentration]
	\label{the:nnarenn}
	Let $V = \{x_1, \dots, x_n\} \subset \mathbb{R}^d$ be a vector dataset whose pairwise distances are distributed as $\mathcal{N}(\mu, \sigma^2)$, with $\frac{\mu}{\sigma} > \sqrt{2\ln n}$. For a node $u \in V$, denote its nearest neighbor by $x$. If $v, w \in \mathcal{N}_k(u)$ satisfy $d(v, w) \le \alpha \, d(u, x)$ for a small constant $\alpha$, then, with probability approaching $1$ as $n \to \infty$, $v$ is also a $k$-nearest neighbor of $w$.
\end{theorem}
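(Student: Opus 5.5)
The plan is to treat pairwise distances as (approximately independent) draws from $\mathcal{N}(\mu,\sigma^2)$ and compare the distance $d(v,w)$ with the $k$-th smallest distance from $w$ to the other points of $V$. The statement then reduces to showing that $d(v,w)\le \alpha\, d(u,x)$ lies, with high probability, below the threshold $r_k(w)=\max_{y\in\mathcal{N}_k(w)} d(w,y)$. Since $d(u,x)$ is itself a minimum over $n-1$ Gaussian distances, both quantities are extreme order statistics of Gaussian samples. The argument will compare them via standard Gaussian extreme-value estimates.

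\textbf{Step 1: Locate the nearest-neighbour distance.} The minimum of $n-1$ draws from $\mathcal{N}(\mu,\sigma^2)$ concentrates at $\mu-\sigma\sqrt{2\ln n}\,(1+o(1))$. I will use the union/Mills-ratio bound $\Pr[\min < \mu - \sigma\sqrt{2\ln n}\,] \le n\,\bar\Phi(\sqrt{2\ln n}) \to 0$ together with the matching lower tail. The hypothesis $\mu/\sigma>\sqrt{2\ln n}$ is exactly what makes this location positive, so the Gaussian model is consistent with nonnegative distances. Hence $d(u,x)=\mu-\sigma\sqrt{2\ln n}\,(1+o(1))$, a quantity that is positive but can be much smaller than $\mu$.

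\textbf{Step 2: Locate the $k$-NN radius of $w$.} For fixed $k$ (or $k=o(n)$), the $k$-th order statistic of the distances from $w$ is $r_k(w)=\mu-\sigma\sqrt{2\ln(n/k)}\,(1+o(1))$. I will obtain this by a binomial count argument: the number of points within radius $\mu-\sigma z$ is $\mathrm{Bin}(n,\bar\Phi(z))$. Chernoff bounds then show this count exceeds $k$ once $n\bar\Phi(z)\gg k$.

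\textbf{Step 3: Compare the two scales.} Condition on $v,w\in\mathcal{N}_k(u)$ and on the event $d(v,w)\le\alpha d(u,x)$. It then suffices to show
\[\Pr\big[d(v,w)\le r_k(w)\,\big|\,d(v,w)\le \alpha d(u,x)\big]\to 1 .\]
The conditioning truncates the distribution of $d(v,w)$ to the far lower tail. Conditional on lying in that tail, a Gaussian concentrates near the truncation point, within $O(\sigma/\sqrt{\ln n})$ of it. The key inequality I need is therefore that $\alpha d(u,x)$, up to this slack, stays below $r_k(w)$. Equivalently,
\[\alpha\big(\mu-\sigma\sqrt{2\ln n}\big) \le \mu-\sigma\sqrt{2\ln(n/k)}\]
asymptotically. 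This holds when the gap $\mu-\sigma\sqrt{2\ln n}$ is small relative to $\sigma\big(\sqrt{2\ln n}-\sqrt{2\ln(n/k)}\big)\approx \sigma\ln k/\sqrt{2\ln n}$, and $\alpha$ is a small constant.

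\textbf{Main obstacle.} I expect Step 3 to be the hardest, for two reasons. First, the inequality depends delicately on how close $\mu/\sigma$ is to $\sqrt{2\ln n}$, so the statement likely needs to be read as holding in the regime where this margin is $o(\ln k/\sqrt{\ln n})$, or with $\alpha$ chosen close to $1$ relative to that margin. Second, the distances from $w$ are not truly independent of $d(v,w)$ and of the event $v,w\in\mathcal{N}_k(u)$. I would handle this by adopting the paper's modelling assumption that pairwise distances are i.i.d. Gaussian. I would also try to make the constant $\alpha$ explicit in terms of that margin before finishing with a union bound over the $O(k^2)$ pairs.
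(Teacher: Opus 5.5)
\textbf{Comparison with the paper's route.} Your route differs from the paper's. The paper never looks at the order statistics of the distances from $w$. It chooses $t$ with $\alpha(\mu-t)=\mu-\sigma\sqrt{2\ln n}$ and asserts that $d(u,x)\le\mu-t$ with probability tending to one. From this it deduces $d(v,w)\le\mu-\sigma\sqrt{2\ln n}$, and then uses Mills' inequality to get $\mathbb{E}[N_{closer}]\le(n-2)\Phi(-\sqrt{2\ln n})\to0$. On that argument $v$ would essentially be the nearest neighbour of $w$, and $k$ would play no role. Your Step~1 union bound, $\Pr[d(u,x)<\mu-\sigma\sqrt{2\ln n}]\le n\Phi(-\sqrt{2\ln n})\to0$, shows that this asserted step fails for every $\alpha\ge1$, which is the regime the construction uses. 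The reason is that then $\mu-t=(\mu-\sigma\sqrt{2\ln n})/\alpha\le\mu-\sigma\sqrt{2\ln n}$. So you are right that everything is decided by the comparison in your Step~3, and right to suspect that the statement cannot be proved as written.

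\textbf{The gap in Step~3.} The gap is that Step~3 cannot be closed by any small-margin hypothesis, because first-order locations are too coarse.

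For fixed $k$, proving $\alpha\,d(u,x)\le r_k(w)$ with probability tending to one needs an upper bound $\ell'$ on $d(u,x)$ and a lower bound $\ell$ on $r_k(w)$. These hold w.h.p.\ only if $n\Phi((\ell'-\mu)/\sigma)\to\infty$ and $n\Phi((\ell-\mu)/\sigma)\to0$, respectively. Your Chernoff argument in Step~2 gives the opposite bound, an upper bound on $r_k(w)$. Consequently $\ell<\ell'\le\alpha\ell'$, and the chain $\alpha d(u,x)\le\alpha\ell'\le\ell\le r_k(w)$ never closes when $\alpha\ge1$.

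Quantitatively, the gap you rely on is $\sigma(\sqrt{2\ln n}-\sqrt{2\ln(n/k)})\approx\sigma\ln k/\sqrt{2\ln n}$. This is of the same order as the Gumbel fluctuations of $d(u,x)$ and $r_k(w)$, which are $\Theta(\sigma/\sqrt{\ln n})$. It is also smaller than the term Step~1 discards. The minimum sits at $L=\mu-\sigma\sqrt{2\ln n}+\sigma\frac{\ln\ln n+\ln 4\pi}{2\sqrt{2\ln n}}(1+o(1))$, and multiplying by $\alpha$ leaves an uncancelled $(\alpha-1)\sigma\frac{\ln\ln n}{2\sqrt{2\ln n}}$.

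\textbf{The statement fails for $\alpha\ge1$.} For fixed $k$ and $\alpha\ge1$ the conclusion is false under the i.i.d.\ model, whatever the margin.
\begin{itemize}
\item At $\alpha=1$, conditioning on $d(v,w)\le d(u,x)$ makes $d(v,w)$ the minimum of $n$ i.i.d.\ draws, so $\Phi((d(v,w)-\mu)/\sigma)\sim\mathrm{Beta}(1,n)$. The point $u$ is never closer, since $d(u,w)\ge d(u,x)$, and the event $v,w\in\mathcal{N}_k(u)$ only constrains ranks. Hence $N_{closer}$ is asymptotically geometric, with $\Pr[N_{closer}=j]\to2^{-(j+1)}$, and $\Pr[v\in\mathcal{N}_k(w)]\to1-2^{-k}$.
\item For fixed $\alpha>1$, we have $(\alpha-1)L\gg\sigma\ln k/\sqrt{\ln n}$, which gives $n\Phi((\alpha d(u,x)-\mu)/\sigma)\to\infty$. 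So, conditionally, $N_{closer}\to\infty$ and the probability tends to $0$.
\end{itemize}

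\textbf{What your framework does prove.} It proves the case of a fixed $\alpha<1$. Then $\alpha d(u,x)$ lies about $(1-\alpha)L\gg\sigma/\sqrt{\ln n}$ below $\min_{y\notin\{v,w\}}d(w,y)$, whose location is also $L$. The upward tilt of $d(u,x)$ caused by the conditioning is of lower order. So $v$ is w.h.p.\ the nearest neighbour of $w$ for every $k$, with no margin hypothesis at all. For $\alpha\ge1$ you need $k=k(n)$ to grow, roughly with $\ln k-(\alpha-1)\bigl(\tfrac12\ln\ln n+(\mu-\sigma\sqrt{2\ln n})\sqrt{2\ln n}/\sigma\bigr)\to\infty$.
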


\begin{proof}
	Let $y \in V \setminus \{v, w\}$ be an arbitrary node. For a value $\delta < \mu$, let $\delta' = \frac{\delta - \mu}{\sigma}$. The probability that $d(y, w) \le \delta$ is given by the cumulative distribution function (CDF) of the standard Gaussian distribution:
	\[
	p(\delta) = \mathbb{P}(d(y, w) \le \delta) = \Phi(\delta').
	\]
	
	First, we bound the distance to the nearest neighbor $d(u, x)$. The probability that the nearest neighbor distance exceeds $\mu - t$ (for some $t > 0$) corresponds to the event where all $n-1$ other nodes are at a distance greater than $\mu - t$ from $u$:
	\[
	\mathbb{P}(d(u, x) \ge \mu - t) = \left( 1 - p(\mu - t) \right)^{n-1} = \left( 1 - \Phi\left( -\frac{t}{\sigma} \right) \right)^{n-1}.
	\]
	Consequently, the probability that $d(u, x) < \mu - t$ is:
	\[
	\mathbb{P}(d(u, x) < \mu - t) = 1 - \left( 1 - \Phi\left( -\frac{t}{\sigma} \right) \right)^{n-1}.
	\]
	
	We select $t$ such that the term $\Phi(-t/\sigma)$ is sufficiently large to make the expression approach 1:
	\[
	\alpha(\mu - t) = \mu - \sigma\sqrt{2\ln n}.
	\]
	Solving for $t$, we get $t = \mu - \frac{\mu - \sigma\sqrt{2\ln n}}{\alpha}$. Note that since $\frac{\mu}{\sigma} > \sqrt{2\ln n}$, we have $\mu - t > 0$.
	
	Consider the bound for $d(u, x)$. As $n$ grows large, the minimum of $n$ Gaussian variables concentrates near the lower tail. Specifically, for our chosen $t$, the probability that $d(u, x) \le \mu - t$ approaches 1. 
	
	Given the condition $d(v, w) \le \alpha d(u, x)$, with high probability we have:
	\[
	d(v, w) \le \alpha(\mu - t) = \mu - \sigma\sqrt{2\ln n}.
	\]
	
	Now, let $p'$ be the probability that a random node $y$ is closer to $w$ than $v$ is (i.e., $d(y, w) \le d(v, w)$). Using the bound derived above:
	\[
	p' = \mathbb{P}(d(y, w) \le d(v, w)) \le \mathbb{P}\left( d(y, w) \le \mu - \sigma\sqrt{2\ln n} \right).
	\]
	This is equivalent to evaluating the CDF at $z = -\sqrt{2\ln n}$:
	\[
	p' \le \Phi\left( -\sqrt{2\ln n} \right).
	\]
	Using Mill's Inequality \cite{10.1214/aoms/1177731603}, which states that $\Phi(z) \le \frac{1}{|z|\sqrt{2\pi}} e^{-z^2/2}$ for $z < 0$, we have:
	\[
	p' \le \frac{1}{\sqrt{2\ln n}\sqrt{2\pi}} e^{-\frac{2\ln n}{2}} = \frac{1}{2n\sqrt{\pi \ln n}}.
	\]
	
	Finally, let $N_{closer}$ be the number of nodes in $V \setminus \{v, w\}$ that are closer to $w$ than $v$ is. The expectation is $\mathbb{E}[N_{closer}] = (n-2)p'$. As $n \to \infty$:
	\[
	(n-2)p' \approx n \cdot \frac{1}{2n\sqrt{\pi \ln n}} = \frac{1}{2\sqrt{\pi \ln n}} \to 0.
	\]
	Since the expected number of nodes closer to $w$ than $v$ approaches 0, the probability that $v$ is among the top-$k$ neighbors (i.e., $N_{closer} \le k-1$) approaches 1:
	\begin{equation}
		\begin{aligned}
		\mathbb{P}(v \in \mathcal{N}_{k}(w)) = \mathbb{P}(N_{closer} \le k-1)  \\= \sum_{i=0}^{i=k-1} { n-2\choose i } {p'}^i (1-p')^{n-2-i}\to 1.
		\end{aligned}\nonumber
	\end{equation}
    This completes the proof.
\end{proof}

Theorem~\ref{the:nnarenn} implies that in high-dimensional spaces, the $k$-nearest neighbors of a node are highly likely to be mutually proximate. Consequently, these neighbors tend to form cliques within the local neighborhood. This theoretical insight directly supports our geometric densification strategy: by progressively increasing the distance threshold (increasing $\alpha$), we systematically induce such cliques, enabling the efficient mining of the \tmcc from a initially sparse neighborhood graph.

\comment{
\begin{figure}[t!]
	\begin{center}
	\end{center}
	\includegraphics[width=0.5\linewidth]{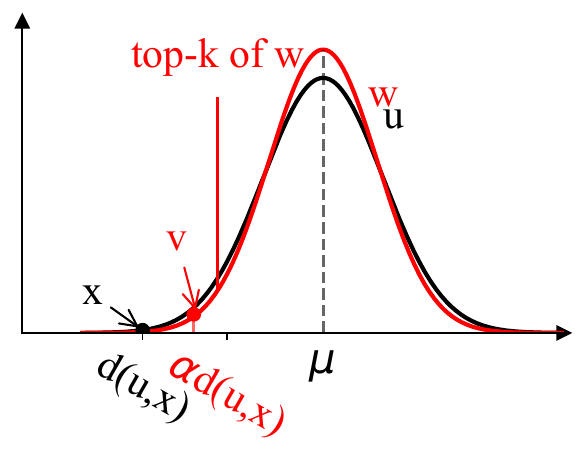}
	\caption{Example for Theorem~\ref{the:nnarenn}}
	\label{fig:distribution} 
\end{figure}
\begin{example}\textit{
	Figure~\ref{fig:distribution} illustrates Theorem~\ref{the:nnarenn}. The plot depicts the density distributions of distances from nodes to $u$ and $w$, respectively, with the x-axis representing distance and the y-axis representing density. Both distributions are concentrated around the expectation $\mu$. Let $x$ be the closest node to $u$. Given the relationship $d(v,w)=\alpha d(u,x)$, the figure demonstrates that $v$ is sufficiently close to $w$ such that $v \in \mathcal{N}_k(w)$.}
\end{example}
}

More importantly, our local \tmcc mining strategy produces an index whose average out-degree far exceeds the initial parameter $k'$. This is due to a \textit{dual coverage mechanism}: a node $u$ is covered not only by cliques mined from its own local neighborhood $V'_u = \mathcal{N}_{k'}(u) \cup \{u\}$, but also by cliques mined from the neighborhoods $V'_v$ of other nodes $v$ for which $u \in \mathcal{N}_{k'}(v)$. This cooperative coverage enables the \mci to capture rich global connectivity from only local $k'$-NN subgraphs.

\stitle{Remark (doubling $\alpha$).} Doubling $\alpha$ at each step is designed to manage the size of the densified subgraph efficiently. In a growth-restricted metric space, the number of points within a ball of radius $2t$ is at most a constant factor $c$ times the number within radius $t$, i.e., $|B(u, 2t)| \le c \cdot |B(u, t)|$ \cite{NN-Decent}. By doubling the distance threshold $\alpha \cdot d(u, x)$ in each iteration, we ensure that the edge set $E_\alpha$—and hence the graph we mine for cliques—grows in a controlled manner, thereby preserving the efficiency of each clique-mining iteration.

\begin{figure}[t!]
	\vspace*{-0.4cm}
	\begin{center}
		\subfigure[Isolated nodes]{\label{sfig:exa_e1}\includegraphics[width=0.46\linewidth]{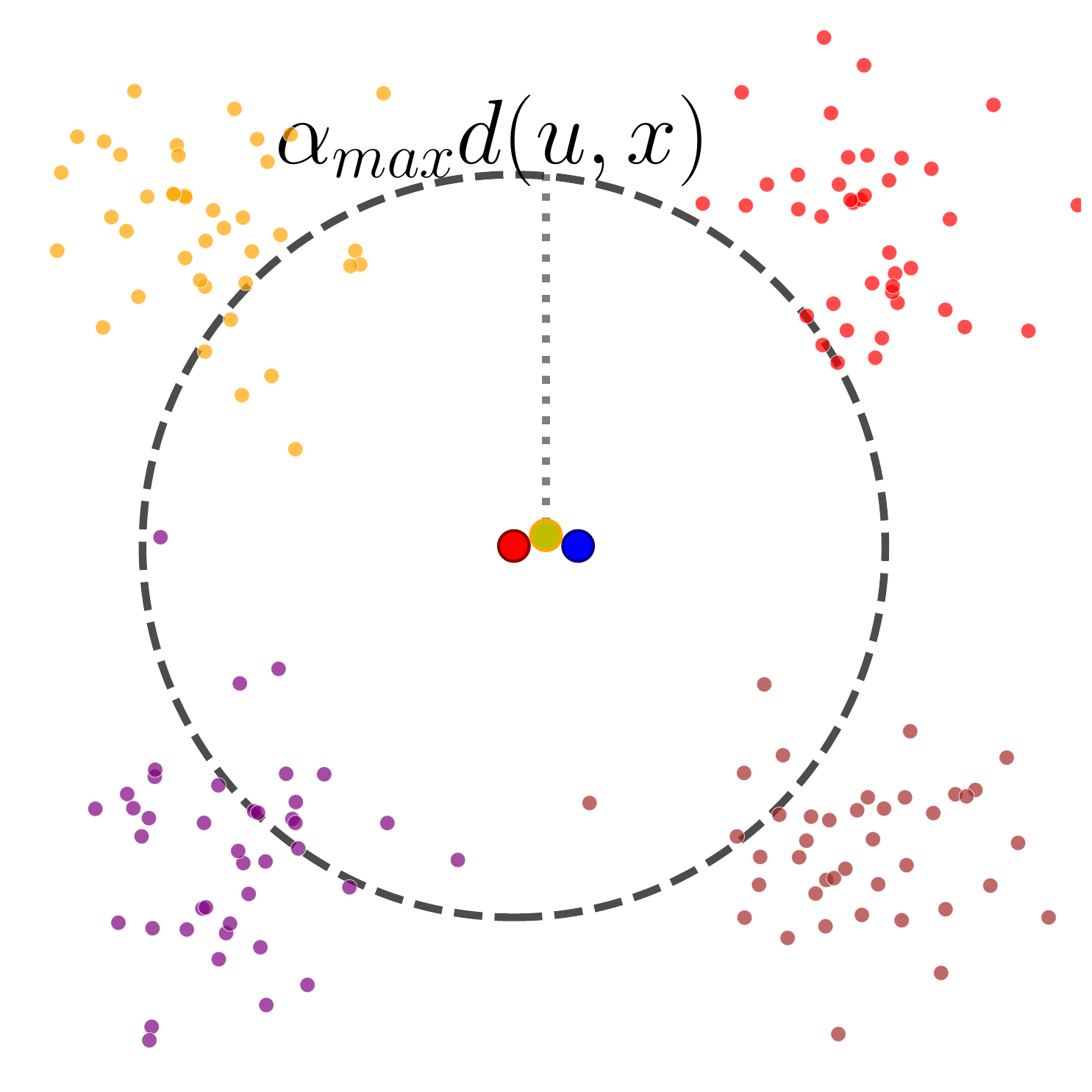}}
		\subfigure[Super center nodes]{\label{sfig:exa_e2}\includegraphics[width=0.46\linewidth]{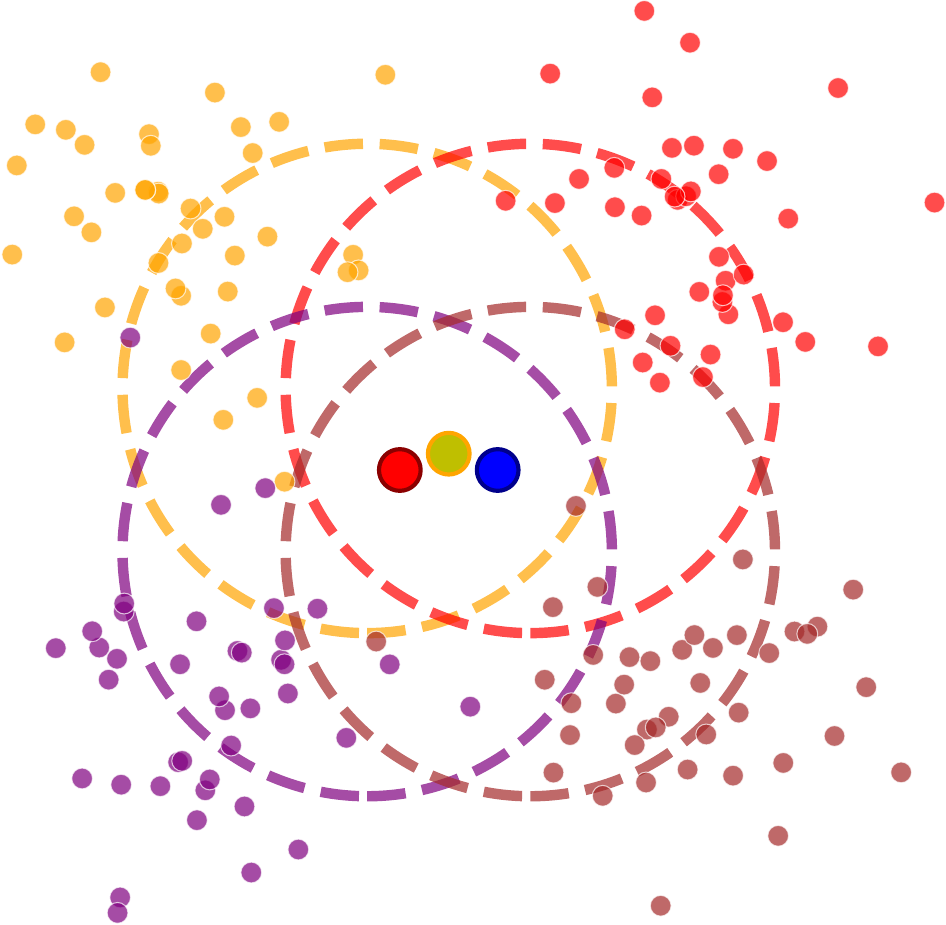}}
	\end{center}
	\vspace*{-0.4cm}
	\caption{Illustration of handling two special cases.}
	\label{fig:exa_early} 
	\vspace*{-0.4cm}
\end{figure}

\begin{figure*}[t]
\vspace*{-0.4cm}
	\centering  
	\begin{minipage}[b]{0.24\linewidth}
		\centering  
		\vfill  
		\subfigure[The distance matrix of example vectors; the $k'$-NNs in each row are red. ]{\label{sfig:matrix}
			\includegraphics[width=\linewidth]{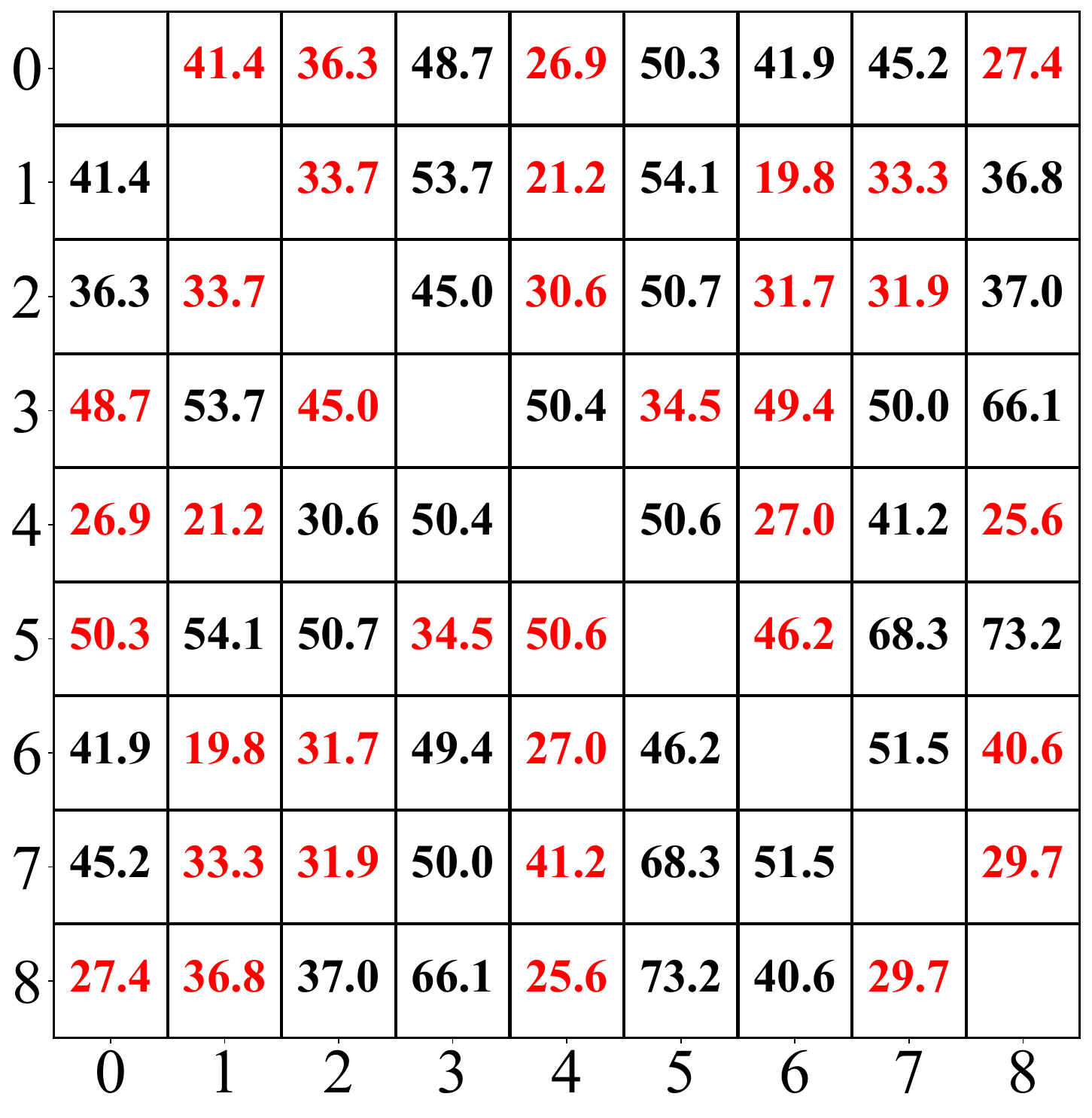}
		}
		\vfill  
	\end{minipage}
	\begin{minipage}[b]{0.49\linewidth}
		\centering
		\vfill
		\subfigure[Local graph of node $0$ with $\alpha=1.2$]{\label{sfig:cliques0}
			\includegraphics[width=0.25\linewidth]{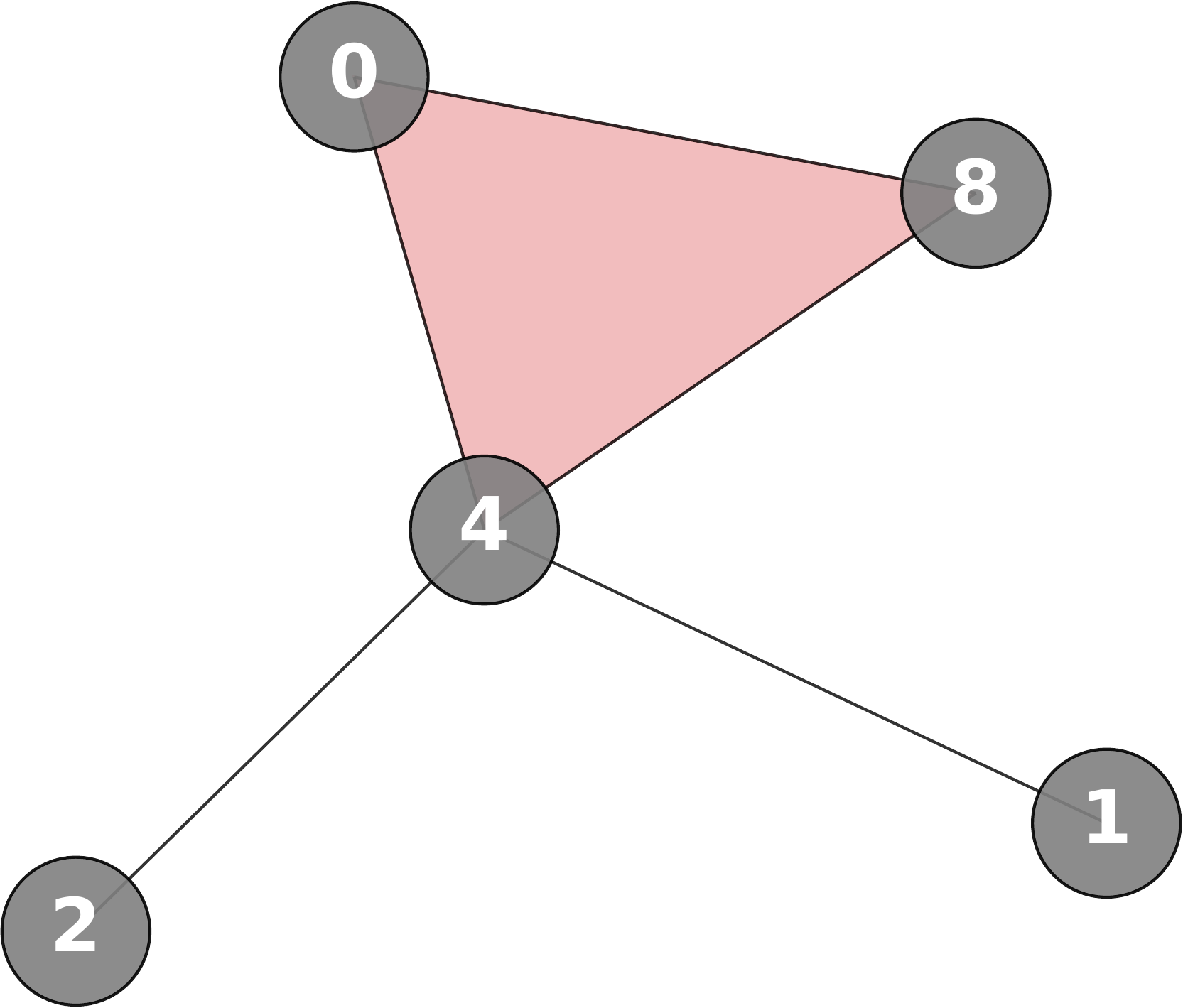}
		}
		\subfigure[Local graph of node $1$ with $\alpha=1.2$]{\label{sfig:cliques1}
			\includegraphics[width=0.25\linewidth]{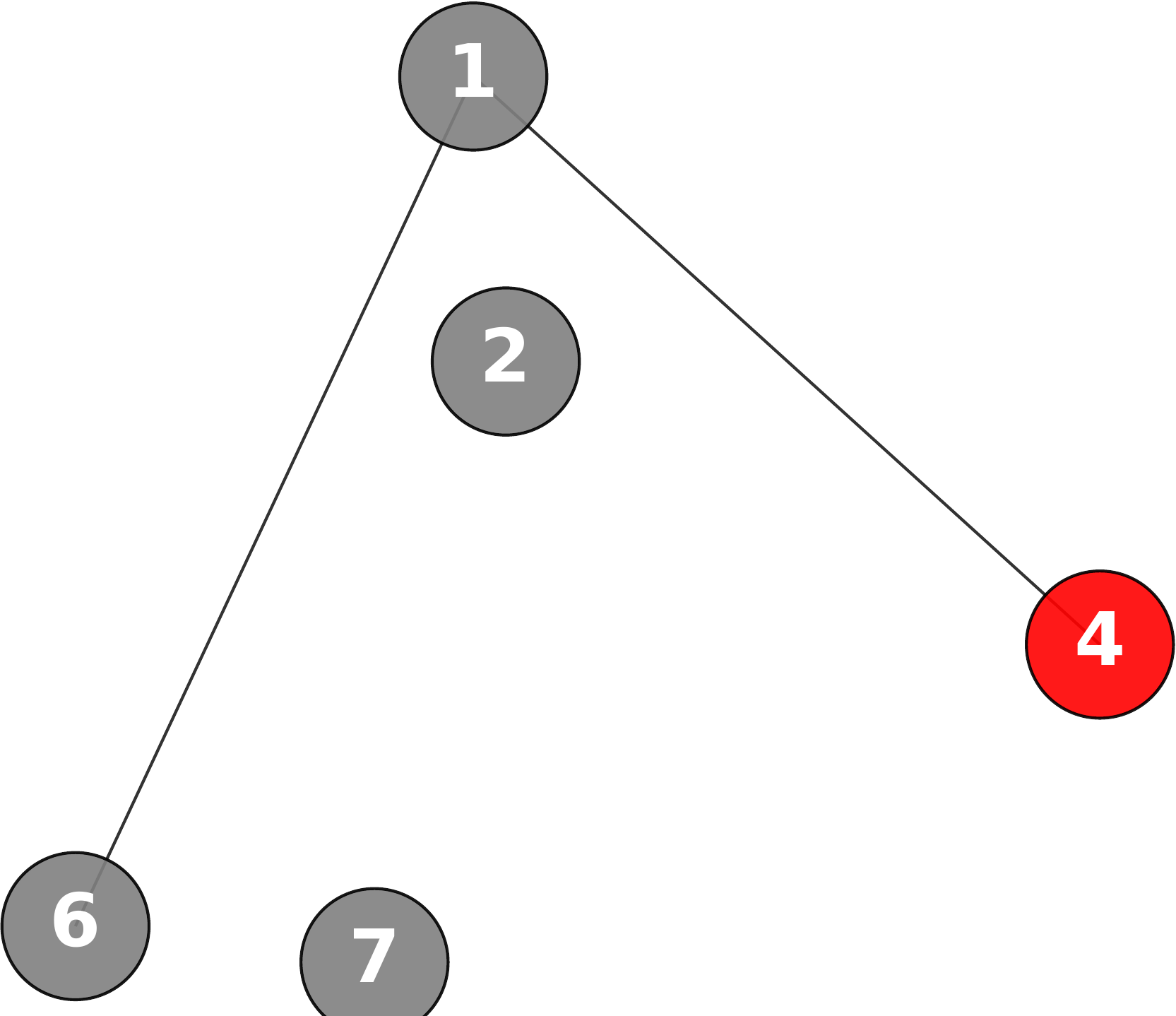}
		}
		\subfigure[Local graph of node $2$ with $ \alpha=1.2$]{\label{sfig:cliques2}
			\includegraphics[width=0.25\linewidth]{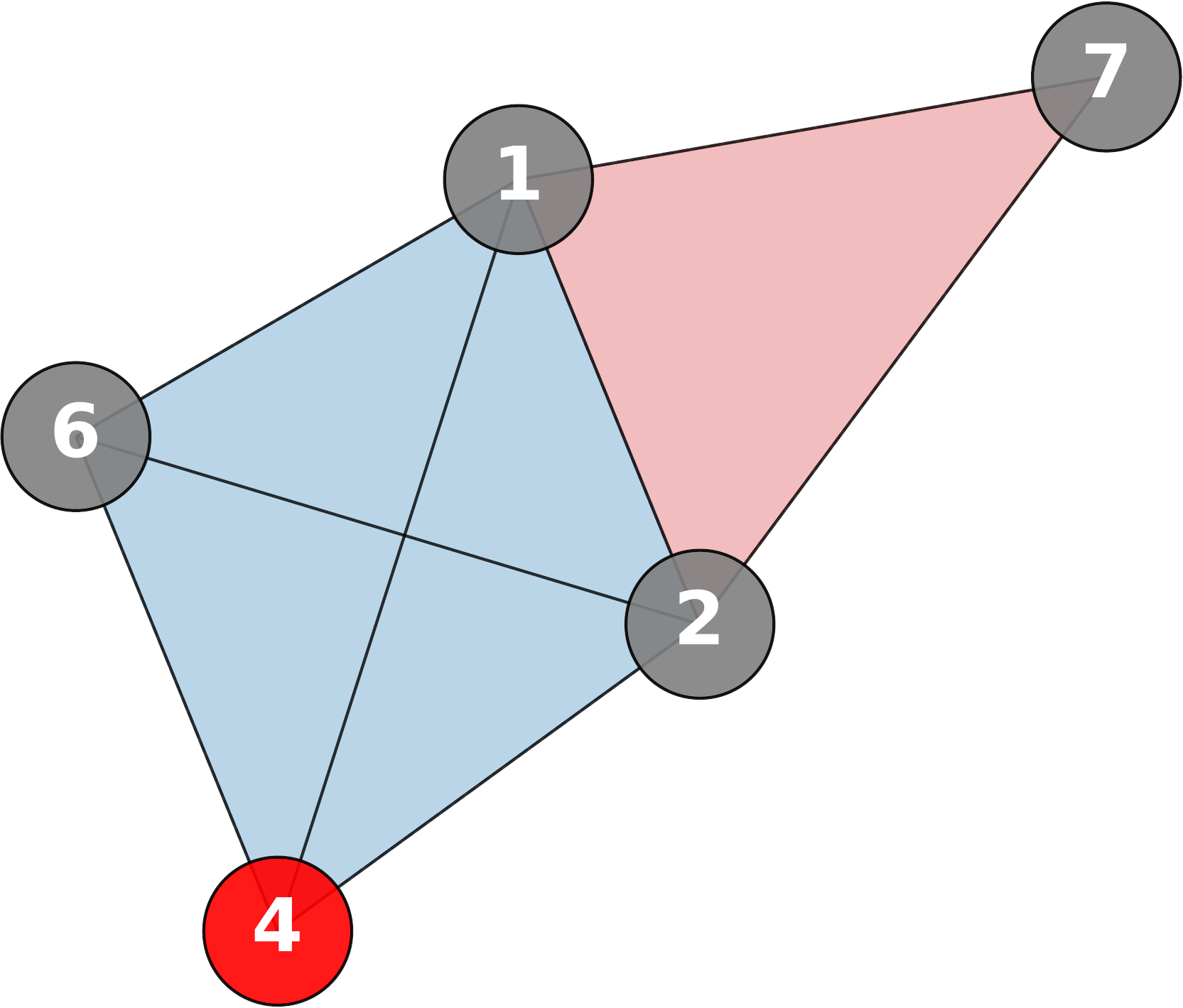}
		}
		\subfigure[Local graph of node $3$ with $ \alpha=1.2$]{\label{sfig:cliques3}
			\includegraphics[width=0.25\linewidth]{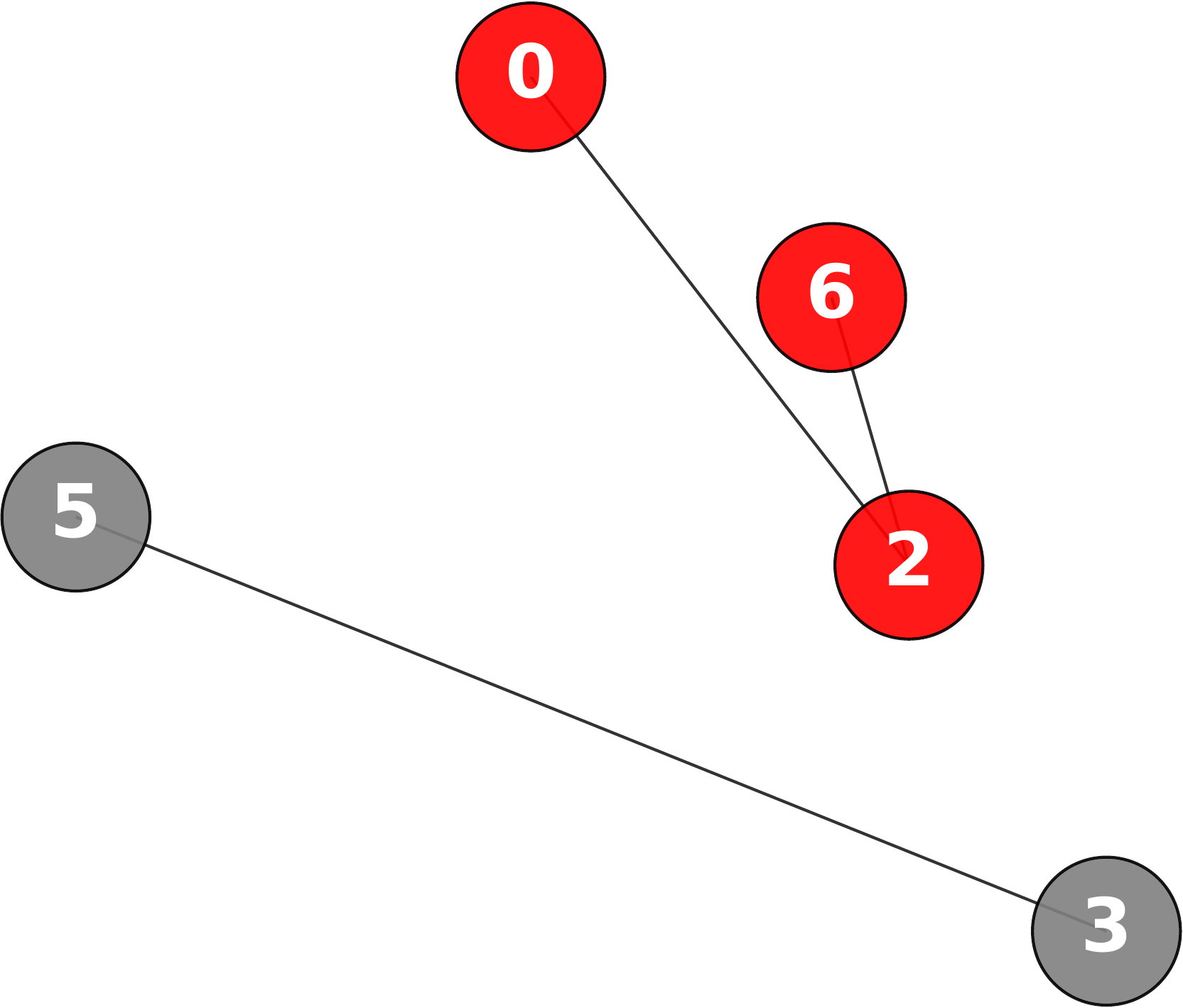}
		}
		\subfigure[Local graph of node $5$ with $ \alpha=1.2$]{\label{sfig:cliques4}
			\includegraphics[width=0.25\linewidth]{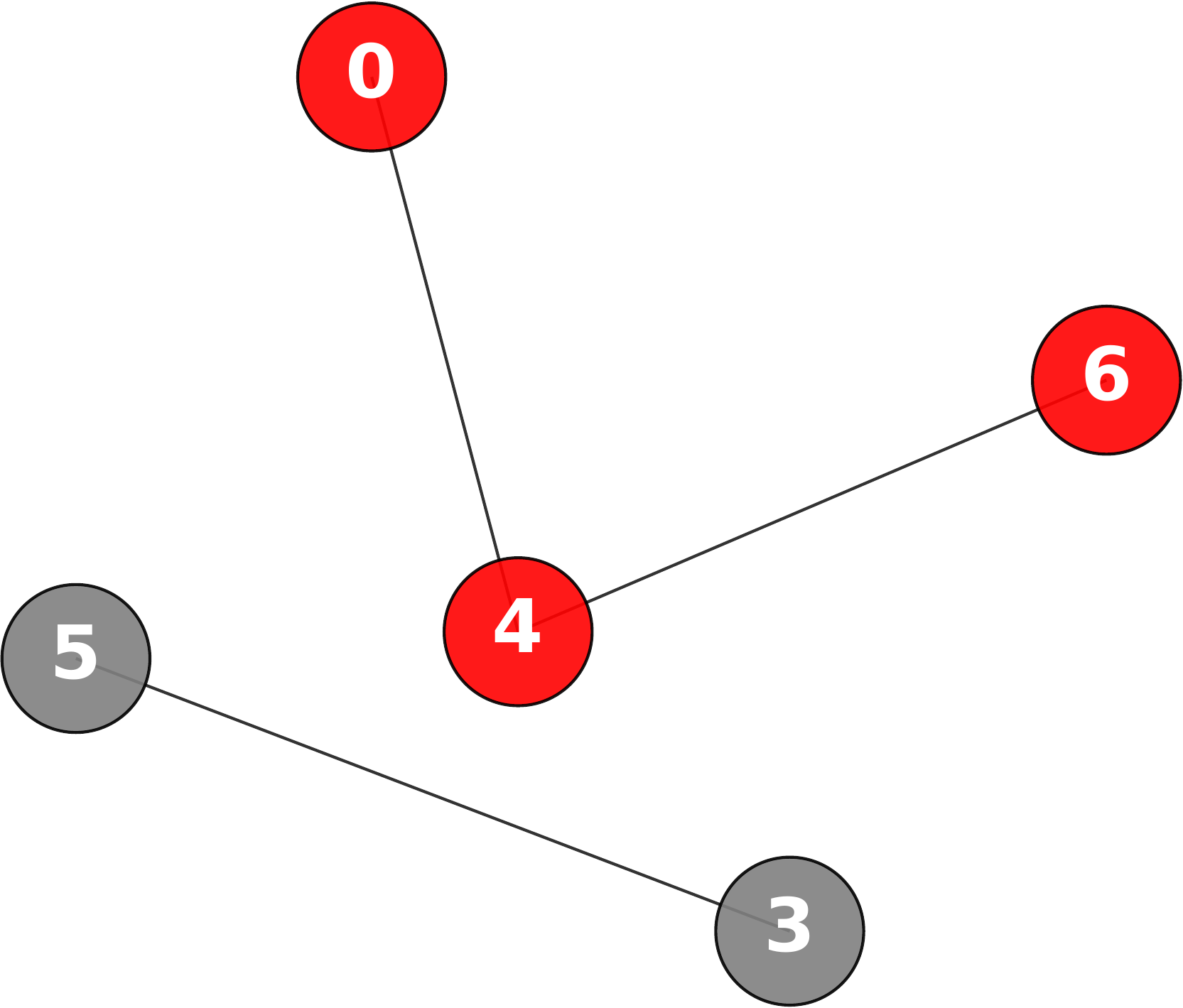}
		}
		\subfigure[Local graph of node $3$ with $ \alpha=2.4$]{\label{sfig:cliques5}
			\includegraphics[width=0.25\linewidth]{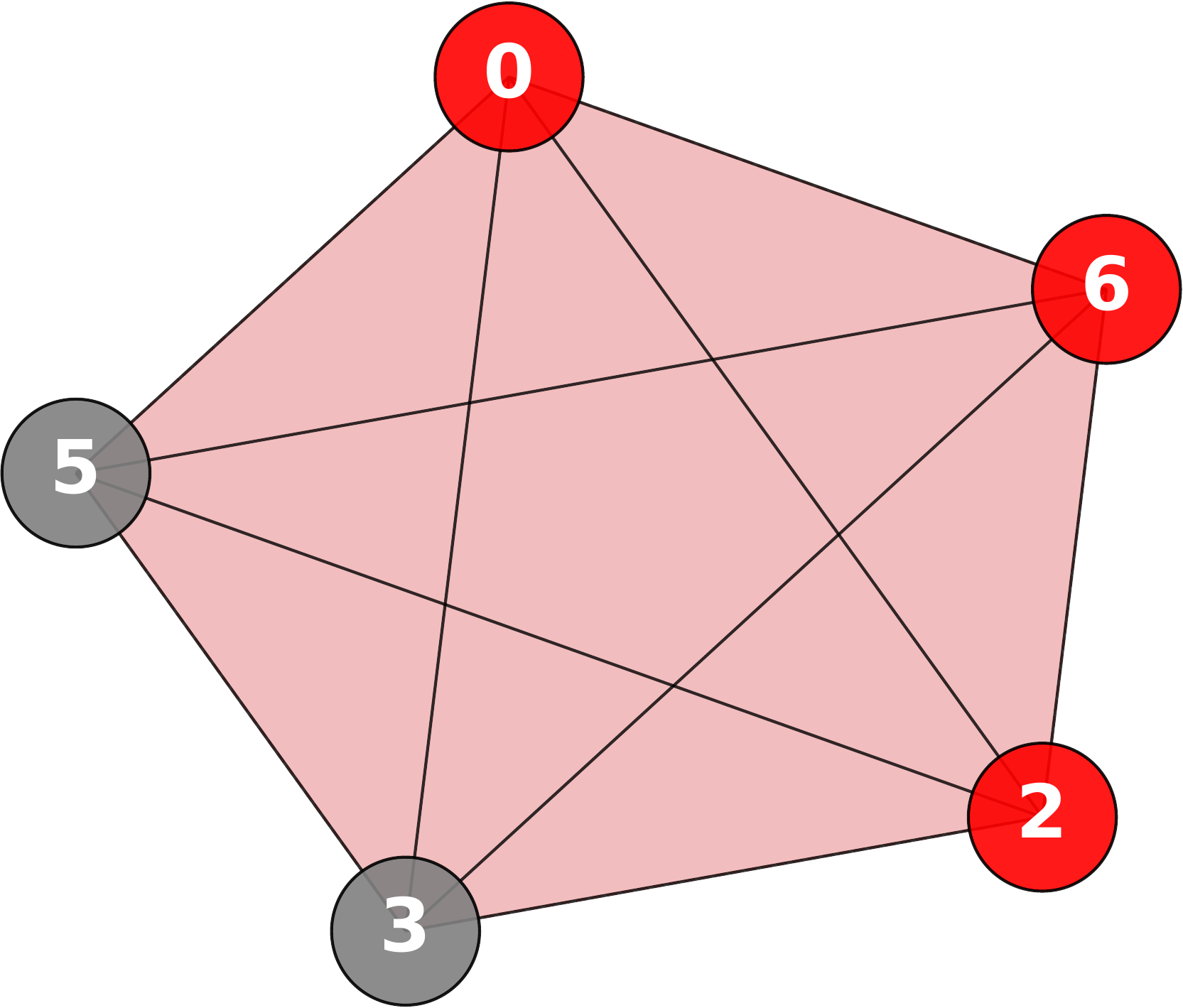}
		}
		
		\vfill
	\end{minipage}
	\begin{minipage}[b]{0.13\linewidth}
		\centering
		\vfill
		\subfigure[The final \mci]{\label{sfig:cliquestotal}
			\includegraphics[width=0.9\linewidth]{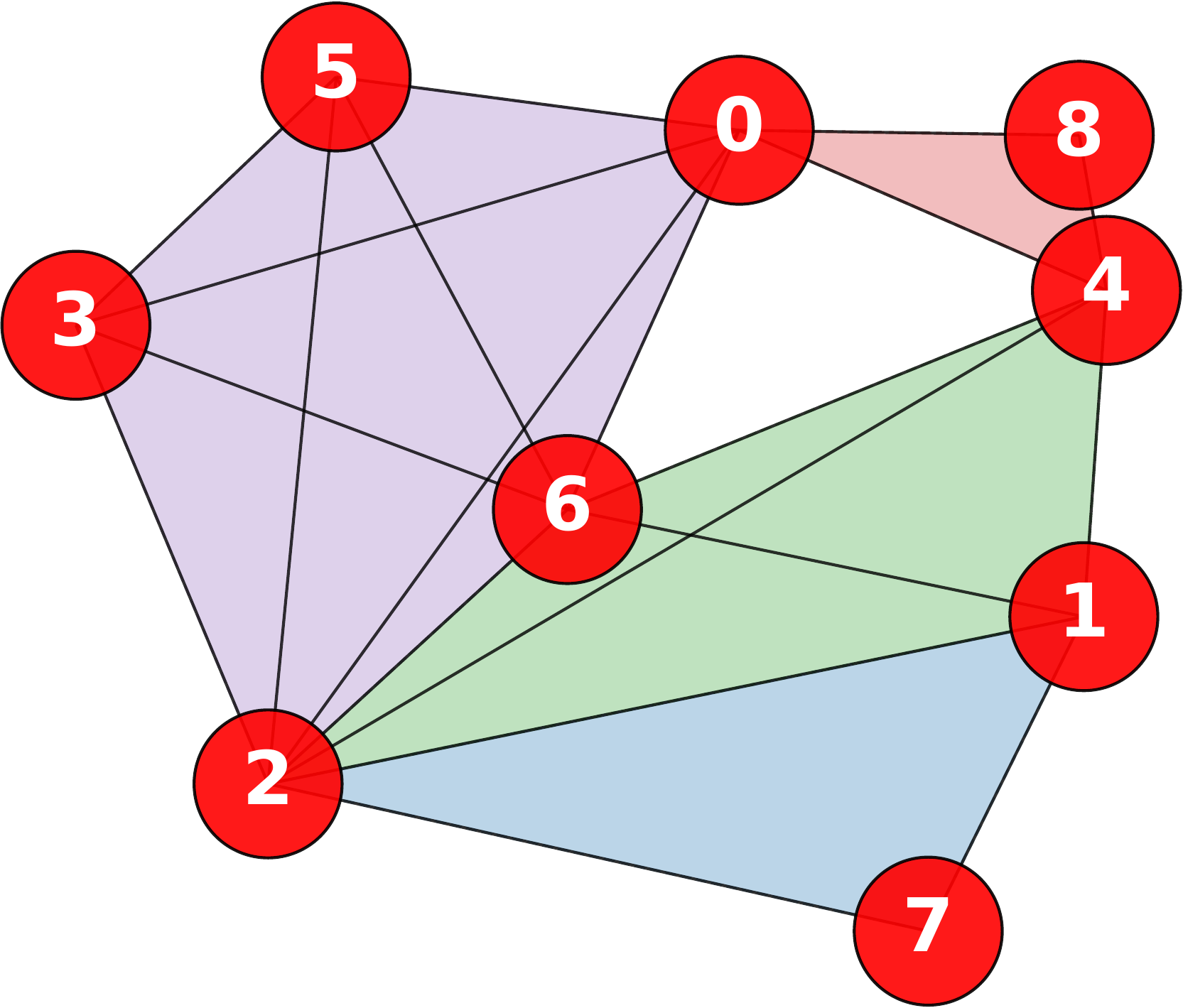}
		}
		\subfigure[NN (all numbers) / approximated NN by \mci (red numbers)]{\label{sfig:knnexample}
			\includegraphics[width=0.75\linewidth]{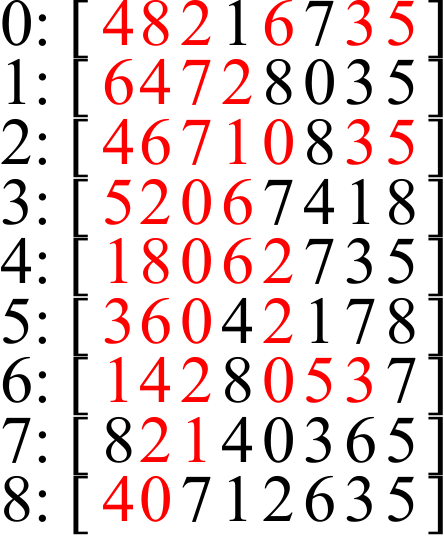}
		}
		\vfill
	\end{minipage}	
    \vspace{-0.5cm}
	\caption{Illustration of the \mci construction algorithm with $n=9, k'=4, \tau=3$}
	 \vspace{-0.3cm}
    \label{fig:exa_construction}
\end{figure*}

\subsection{The \mci construction algorithm} \label{subsec:index-construct-alg}

\begin{algorithm}[t]
	\caption{ The \mci Construction Algorithm}
	\label{alg:construction}
	\small
	\KwIn{The set of vectors $V=\{v_1,v_2,...,v_n\}$, three integers $k', \tau, \alpha_{max}$  }
	\KwOut{The maximal clique index $\mathcal{M}$}
	\SetKwProg{Fn}{Procedure}{}{}
	
	Construct a $k'$-NNG with a small $k'$ by invoking NN-Descent \cite{NN-Decent}\; 
	$I \gets \emptyset$;
	$\mathcal{M}\leftarrow \{\}$\;
	Initialize $\alpha \leftarrow 1.2$\;
	
	\While{$|I|<n$}{	
		\For{$i\in [1, n]$ s.t. $v_i\notin I$}{
			$\mathcal{M}', I' \gets  \minCliques(i, \alpha, I)$\;
			
			$\mathcal{M} \gets \mathcal{M}_j\cup \mathcal{M}'$; $I\gets I'$\;
		}
		
		$\alpha \gets \alpha \times 2$\;
	} 
	
	\Return{$\mathcal{M}$;}

	\Fn{$\minCliques(i,\alpha,  I)$}{
		$V'\gets \mathcal{N}_{k'}(v_i)\cup\{ v_i\}$ \;
		
		$g\gets $ a graph where each edge represents a pair of vectors $ (v_a\in V',v_b\in V')$ such that $d(v_a,v_b)\le \alpha \min_{v\in V'}{d(v, v_i)}$\;
		$\mathcal{M}'\gets \emptyset$\;
		\For{$v_j \in V'$ s.t. $v_j\notin I$} {
			Find a maximal clique $C$ in $g$ with $v_j \in C$ using the linear-time greedy algorithm \cite{CoreCliqueRemoval}\; 
			\If{$ |C|\ge \tau$}{
				$\mathcal{M}'\gets \mathcal{M}'\cup \{C\}$;
				$I\gets I\cup C$\;
			}

			%
		}
		
		\If{$\mathcal{M'} = \emptyset$ and $\alpha\ge \alpha_{max}$ }{
			$\mathcal{M}'\gets \mathcal{M}'\cup \{V'\}$; $I\gets I \cup V'$\;
		}
		
		\Return{$\mathcal{M}',I$;}
	}
	
\end{algorithm}

Based on the key ideas presented previously, we now describe the \mci construction algorithm, outlined in Algorithm~\ref{alg:construction}. The algorithm takes four parameters: the vector dataset $V$, the neighbor count $k'$ for building an initial $k'$-NNG (used only during construction and then discarded), the minimum clique size $\tau$, and the maximum expansion factor $\alpha_{\max}$. Its output is a $\tau$-MCC, which constitutes the final \mci index. A key feature of the algorithm is that it relies solely on the vectors $V$ for index construction, making it independent of the feature sets $F$ and thus fully adaptable to arbitrary filtering queries.

The algorithm begins by building an initial $k'$-NNG for all vectors using the NN-Descent algorithm \cite{NN-Decent} (Line~1). It initializes an empty set $I$ for covered nodes and an empty set $\mathcal{M}$ for the index (Line~2). As described in Section~\ref{subsec:coreIdeas}, the algorithm employs a geometric expansion strategy controlled by a factor $\alpha$ (Line~3). The main loop continues while uncovered nodes remain (Line~4). In each iteration, the algorithm processes every uncovered node $v_i$ (Line~5), invoking the $\minCliques(i, \alpha, I)$ procedure to greedily mine maximal cliques in its local neighborhood (Line~6). The resulting cliques $\mathcal{M}'$ are added to the index $\mathcal{M}$, and the covered-node set $I$ is updated accordingly (Line~7). After processing all nodes in the current round, $\alpha$ is doubled (Line~8), expanding the connection radius for the next iteration and increasing the chance of covering remaining nodes.

The $\minCliques(i, \alpha, I)$ procedure (Lines~10--19) implements the local mining step, grounded in Theorem~\ref{the:nnarenn}. It first retrieves the $k'$ nearest neighbors of $v_i$ and includes $v_i$ itself to form the local node set $V'$ (Line~11). Let $d_{\min}$ be the distance from $v_i$ to its nearest neighbor. A local graph $g$ is then induced on $V'$, where an edge exists between $v_a$ and $v_b$ iff $d(v_a, v_b) \le \alpha \cdot d_{\min}$ (Line~12). Maximal cliques are mined from $g$ to cover the currently uncovered nodes in $V'$ (Lines~13--17). For each uncovered node $v_j \in V'$ (Line~14), the procedure attempts to find a maximal clique $C$ in $g$ that contains $v_j$ and satisfies $|C| \ge \tau$ (Lines~15--16). If such a clique is found, it is added to $\mathcal{M}'$ and the nodes in $C$ are marked as covered in $I$ (Line~17). The clique mining itself is efficient; we employ a linear-time greedy algorithm \cite{CoreCliqueRemoval} to extract a maximal clique for each candidate node.

Finally, if after processing $V'$ no cliques have been mined and $\alpha$ has reached $\alpha_{\max}$ (Line~18), we add the entire set $V'$ as a single pseudo‑clique to $\mathcal{M}'$ (Line~19). This case handles isolated nodes that remain uncovered even at the maximum radius. By treating $V'$ as a bridge, we ensure that nodes in sparse regions are still included in the index, preserving overall connectivity.

\begin{example}[Handling Isolated Nodes]
\textit{
Figure~\ref{sfig:exa_e1} illustrates a scenario where a small central cluster is isolated from larger surrounding clusters. The outer circle represents the search boundary at $\alpha=\alpha_{max}$. The nodes captured within this boundary (the set $V'$ in Line~11 of Algorithm~\ref{alg:construction}) are distributed across four distinct clusters. Crucially, the node count within any individual cluster is below the threshold $\tau$, and the pairwise distances between nodes in different clusters exceed the edge creation threshold (i.e., $\ge \alpha_{max}d(u,x)$). Consequently, the local graph contains no valid maximal cliques. To preserve global connectivity, the early‑termination mechanism (Line~18) treats the entire set $V'$ as a single pseudo‑clique, effectively bridging these isolated nodes.
	}
\end{example}

Conversely, we encounter ``super‑center'' nodes (or hubs) that belong to an excessively large number of maximal cliques. Such a hub acts as a high‑degree connector, linking all nodes across the many cliques it participates in, as detailed in Example~\ref{ex:super-center}. To mitigate the resulting query‑time overhead, we implement a simple pruning rule: if a node is already covered by more than $n/100$ cliques, it is excluded from the local candidate set $V'$ (Line~11). This cap limits the maximum connectivity influence of any single node without harming overall index quality.

\begin{example}[Super‑Center Nodes]
    \label{ex:super-center}
    \textit{
        Figure~\ref{sfig:exa_e2} depicts super‑center nodes that connect to four distinct clusters (shown as four cycles). Without pruning, these hubs would bridge every node across all four clusters. Consequently, any search traversal passing through a super‑center would require distance computations against the entire union of the connected clusters, incurring prohibitive computational cost.
    }
\end{example}

\stitle{Lock-free parallel implementation.} 
Algorithm~\ref{alg:construction} can be easily designed for efficient lock‑free parallel execution. Each thread maintains a private, independent set to store the maximal cliques it discovers (Lines~2, 7). This thread‑local storage eliminates write conflicts during parallel processing; the cliques from all threads are merged into a unified index only at the final stage (Line~9). The coverage mask $I$ (implemented as a bit‑array of length $n$) is updated via atomic operations rather than locks (Lines~7, 17, 19), which enhances scalability by avoiding lock contention. Our experimental evaluation confirms the high parallel efficiency of this design.

\stitle{Robustness to construction parameters.} 
Regardless of the specific values assigned to \(k'\) and \(\tau\), the algorithm guarantees the generation of a valid \mcc, thereby ensuring consistent structural integrity. Consequently, the search performance remains relatively stable across different parameter settings. While \(k'\) and \(\tau\) naturally govern the trade-off between construction overhead and index quality, \mci exhibits significant robustness to their variations. Experimental results in Section~\ref{sec:exp} confirm that although larger values of \(k'\) and \(\tau\) correlate with marginal gains in precision, the overall system performance is generally insensitive to fine-grained parameter tuning.

To facilitate the understanding of Algorithm~\ref{alg:construction}, we provide an illustrative example using a toy dataset below.

\begin{example}\label{exa:construction}
	\textit{
		Figure~\ref{fig:exa_construction} illustrates the \mci construction process on a randomly generated dataset with $n=9$, using parameters $k'=4$ and $\tau=3$. The pairwise distance matrix is presented in Figure~\ref{sfig:matrix}, where cells highlighted in red indicate the $k'$-nearest neighbors for each node. Figures~\ref{sfig:cliques0}--\ref{sfig:cliques3} depict the local graphs and mined cliques during the first iteration ($\alpha=1.2$). In each subfigure, the depicted nodes correspond to the local candidate set $V'$ (Line~11 of Algorithm~\ref{alg:construction}), with edges established based on the distance threshold defined in Line~12. Nodes highlighted in red indicate that they have been successfully covered by maximal cliques. For instance, the identification of the clique $\{0,4,8\}$ in Figure~\ref{sfig:cliques0} results in these nodes being marked as covered (red nodes) in subsequent steps. Maximal cliques are distinguished by fill colors; for example, Figure~\ref{sfig:cliques2} identifies two overlapping maximal cliques: $\{1,2,7\}$ and $\{1,2,4,6\}$. The final constructed \mci is visualized in Figure~\ref{sfig:cliquestotal}.
		Figure~\ref{sfig:knnexample} details the connectivity: the full neighbor lists are ordered by distance for each node, with the red numbers denoting the neighbors approximated by the \mci. Storing this connectivity explicitly as an adjacency list would require 40 integers; in contrast, our \mci representation requires only 15 integers. This yields an effective average out-degree of $40/9 \approx 4.44$, which is larger than the construction parameter $k'=4$. This degree amplification is attributable to the dual coverage mechanism discussed in Section~\ref{subsec:coreIdeas}. On larger datasets with higher $k'$ and $\tau$, the effective degree typically surpasses $k'$ by a considerably wider margin.
	}
\end{example}

\subsection{Complexity analysis} \label{subsec:time-complexity}
This subsection analyzes the complexity of  index construction.

\begin{theorem}\label{the:construction_time}
	The worst-case time complexity of Algorithm~\ref{alg:construction} is $O(T+r \cdot n \cdot (k')^2 \cdot D)$, where $O(T)$ is the time to construct the initial $k^\prime$-NN graph, $r$ denotes the number of iterations in the main loop (line~4), $n$ is the dataset size, and $D$ is the vector dimension. The number of rounds $r$ is bounded by $O(\min(\log_2{\alpha_{max}}, \log(\Delta)))$, where $\Delta$ is the ratio of the maximum to minimum pairwise distance within the local neighborhoods.
\end{theorem}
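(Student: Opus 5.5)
The plan is to charge the total work of Algorithm~\ref{alg:construction} in three parts. The first is the one-time construction of the $k'$-NNG (Line~1), which is the $O(T)$ term by definition. The second is the per-round cost of the main loop, which we bound by $O(n\cdot(k')^2\cdot D)$. The third is a bound $r$ on the number of rounds. Multiplying the last two and adding $T$ gives the claimed bound.

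\textbf{Per-round cost.} In one round, Line~5 calls $\minCliques$ at most $n$ times, once per uncovered node. Inside a call, $|V'|=k'+1$, since the super-center pruning only removes nodes. Building $g$ in Line~12 computes $O((k')^2)$ pairwise distances, each costing $O(D)$, so this step is $O((k')^2D)$. The threshold $\alpha\min_{v\in V'}d(v,v_i)$ needs only $O(k'D)$ more. The loop in Lines~14--17 runs at most $k'+1$ times. Each iteration runs the greedy maximal-clique routine of \cite{CoreCliqueRemoval}, which is linear in the size of $g$, i.e.\ $O((k')^2)$ with no distance recomputation since $g$ is already materialized. Updating $I$ costs $O(k')$ with the bit-array. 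Naively this loop gives $O((k')^3)$. To match the stated bound, I would argue that the greedy routine started at $v_j$ only scans adjacency lists of candidates, costing $O(\deg_g(v_j)\cdot k')$ amortized against the $D$-factor. More simply, I would note that $(k')^3\le (k')^2D$ whenever $k'\le D$, which holds in the intended regime. I expect this to be the delicate bookkeeping point. If neither argument is clean, the fallback is to state the per-call cost as $O((k')^2(D+k'))$ and observe that it coincides with the stated bound under $k'=O(D)$. Lines~18--19 add $O(k')$. Summing over the at most $n$ calls and the merge in Line~7 gives $O(n(k')^2D)$ per round.

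\textbf{Number of rounds.} Rounds start at $\alpha_0=1.2$ and $\alpha$ doubles each round, so after $j$ rounds $\alpha=1.2\cdot 2^{j}$. There are two ways the loop ends.

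\emph{Cap by $\alpha_{\max}$.} Once $\alpha\ge\alpha_{\max}$, every call on an uncovered $v_i$ terminates with $v_i\in I$. Either a clique of size $\ge\tau$ containing $v_i$ was added, or $\mathcal{M}'=\emptyset$ and Line~19 adds $V'\ni v_i$. If cliques were found but none contains $v_i$, I would show the loop in Line~14 still processes $v_i$ itself. Hence the next round's test $|I|<n$ fails, giving $r=O(\log_2(\alpha_{\max}/1.2))=O(\log_2\alpha_{\max})$.

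\emph{Cap by $\Delta$.} Once $\alpha\cdot d_{\min}(v_i)\ge\max_{a,b\in V'}d(v_a,v_b)$, that is $\alpha\ge\Delta$, the local graph $g$ is complete on $V'$. Then the greedy routine returns all of $V'$, a clique of size $k'+1\ge\tau$ (assuming $\tau\le k'+1$), which covers $v_i$. So after $O(\log\Delta)$ rounds no node stays uncovered.

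Taking the earlier of the two events gives $r=O(\min(\log_2\alpha_{\max},\log\Delta))$. This completes the plan.
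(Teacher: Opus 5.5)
Your decomposition is the paper's: $O(T)$ for Line~1, then at most $n$ calls per round, each doing $O((k')^2)$ distance computations to build $g$, times $r$ rounds bounded through the doubling of $\alpha$. On the per-round cost, the paper counts only distance evaluations and never charges the clique mining. Your $(k')^3$ concern is therefore legitimate, but ``amortized against the $D$-factor'' does not work. Running the greedy routine from up to $k'+1$ uncovered seeds costs $\sum_j O(\deg_g(v_j)\,k')$, which can be of order $(k')^3$ and has no factor $D$ to absorb it. Moreover, $k'\le D$ fails in the paper's own settings ($k'=300$, $D=100$ on spacev10M; $k'=200$, $D=128$ on sift1M). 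So either state the distance-evaluation cost model explicitly, or keep your bound $O(T+r\,n\,(k')^2(D+k'))$.

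The genuine gap is the $\alpha_{\max}$ cap, in the case where $\mathcal{M}'\neq\emptyset$ but no accepted clique contains $v_i$. It does not help that Line~14 processes $v_i$. The greedy routine returns some maximal clique $C\ni v_i$ of $g$, which is discarded if $|C|<\tau$, and the nonempty $\mathcal{M}'$ then disables the fallback of Line~18. This can actually happen. Let $v_i$ and its nearest neighbour $x$ lie far from the other members $S$ of $V'$. Suppose $S$ is still uncovered (say each $s\in S$ has a near-duplicate, so its own threshold $\alpha\,d_{\min}(s)$ is negligible, and $v_i$ precedes $S$ in the round). Suppose further that $S$ first becomes a clique of size $\ge\tau$ in $g$ at the first $\alpha\ge\alpha_{\max}$. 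Then $S$ is accepted and $\{v_i,x\}$ is rejected. Arrange $x$'s neighbourhood the same way so its call does not rescue $v_i$. Then $v_i$ is still uncovered after that round, so $|I|<n$ and the loop continues.

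What you can prove is weaker. Every call with $\alpha\ge\alpha_{\max}$ covers at least one previously uncovered node of $V'$: either all of $V'$ via Line~19, or the seed $v_j$ of an accepted clique. Hence $v_i$ is covered within at most $|V'|\le k'+1$ rounds with $\alpha\ge\alpha_{\max}$, which gives only $r=O(\log\alpha_{\max}+k')$. Getting the stated $O(\log_2\alpha_{\max})$ requires changing the fallback to fire whenever $v_i\notin I$ at the end of the call.

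Your $\Delta$ argument is sound, given $\tau\le|V'|$, and by itself yields $r=O(\log\Delta)$. Note that $|V'|\le k'+1$, not $=$, once super-center pruning is applied. The paper's proof never confronts the problematic case: it simply asserts that the rounds are ``capped by $\alpha_{\max}$'' and offers the $\Delta$ bound only ``in practice''.
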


\begin{proof}
    The algorithm consists of two phases. First, building the initial $k'$-NNG costs $O(T)$ (Line~1); for NN‑Descent \cite{NN-Decent}, $T$ is near $O(n^{1.14})$ in practice. Second, the geometric expansion loop (Lines~4--9) runs for $r$ iterations. In each iteration, for every uncovered node $v_i$, we construct a local induced subgraph on its $O(k')$ neighbors (Line~12), which requires $O((k')^2 )$ distance computations. In the worst case, all $n$ nodes remain uncovered in every iteration, giving a per‑round cost of $O(n \cdot (k')^2 \cdot D)$. Since the expansion factor $\alpha$ doubles each round and is capped by $\alpha_{\max}$, the number of rounds $r$ is $O(\log_2 \alpha_{\max})$. In practice, $r$ is also limited by the local distance ratio $\Delta$, yielding the combined bound $O(\min(\log_2 \alpha_{\max}, \log \Delta))$. Summing the two phases gives the total complexity $O(T + r \cdot n \cdot (k')^2 \cdot D)$.
\end{proof}

\stitle{Empirical performance.} In practice, the first one or two rounds dominate the total runtime. This occurs because dense local structures are quickly identified with small $\alpha$, covering the majority of nodes. Subsequent rounds handle only a few remaining isolated nodes, making the observed construction time significantly lower than the worst‑case bound.

\stitle{Space complexity.}
The construction algorithm requires $O(nk')$ space to store the initial $k'$‑NNG, which is used only during index building and discarded afterward. The size of the final \mci index is bounded as follows:

\begin{theorem}\label{the:indexSize}
    The worst‑case size of the \mci index is $O(nk')$.
\end{theorem}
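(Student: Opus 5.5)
The plan is to bound the total number of node identifiers stored across all cliques in $\mathcal{M}$, since the index stores each clique $C$ as its member list, costing $O(|C|)$ integers. The key observation is that every clique added by Algorithm~\ref{alg:construction} comes from some invocation of $\minCliques(i,\alpha,I)$ and is contained in the local set $V'=\mathcal{N}_{k'}(v_i)\cup\{v_i\}$. Hence $|C|\le k'+1$ for every stored clique, including the pseudo-clique $V'$ added at Line~19. It therefore suffices to bound the number of cliques by $O(n)$.

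To count cliques, I would charge each clique to the node whose uncovered status triggered its creation. In $\minCliques$, a clique $C$ found for $v_j$ is added only when $v_j\notin I$, and afterwards $C\subseteq I$, so $v_j$ becomes covered. Since $I$ only grows (Lines~7, 17, 19), each node can serve as the trigger of at most one retained clique over the entire run, across all rounds and all calls. The fallback at Line~19 is added only when $\mathcal{M}'=\emptyset$ and $v_i\notin I$ (the loop at Line~5 only visits uncovered $v_i$), and it covers $v_i$, so it too can be charged to $v_i$. Because a given node is charged either as a $v_j$ trigger or as a Line~19 trigger, but never both after it becomes covered, the number of cliques is at most $n$.

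Combining the two bounds, the index size is at most $\sum_{C\in\mathcal{M}}|C|\le n(k'+1)=O(nk')$. The super-center pruning only removes nodes from $V'$, so it preserves $|C|\le k'+1$. The initial $k'$-NNG is discarded, so it contributes nothing to the final index.

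The main obstacle is making the charging argument airtight in the presence of subtleties in the pseudocode. The first is that within one $\minCliques$ call, $I$ is updated inside the loop, so later $v_j$ already covered by an earlier clique are skipped. This is exactly what gives injectivity, but I need to state that the test at Line~14 reads the current $I$. The second is the lock-free parallel version: two threads might simultaneously see the same $v_j$ uncovered and both add a clique. For that case I would argue the sequential bound and note that concurrent duplicates are at most the number of threads per node, still a constant factor, so $O(nk')$ survives.
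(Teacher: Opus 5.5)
Your proposal is correct and follows the paper's proof. The paper likewise bounds the number of cliques by $n$, since each added clique covers a previously uncovered node, and bounds each clique's size by $O(k')$, since it is mined inside a local neighborhood $V'$ of size $O(k')$; your extra handling of the Line~19 fallback and of concurrent duplicates in the lock-free version only makes explicit what the paper's three-line proof leaves implicit.
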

\begin{proof}
    Each maximal clique added to the index covers at least one previously uncovered node, so the total number of cliques is at most $n$. Every clique is mined from a local neighborhood of size $O(k')$, hence its size is also bounded by $O(k')$. Therefore, the total index size is at most $n \cdot O(k') = O(nk')$.
\end{proof}



The worst-case bound of $O(nk')$ arises in the extreme scenario where there are $n$ maximal cliques, each of size $k'$. However, such a extreme case is rarely encountered in practice. Typically, the number of maximal cliques is far fewer than $n$, and their sizes are well below $k'$. Our experiments demonstrate that the actual index size is significantly superior to the worst-case theoretical bound (see Exp-2 in Section~\ref{sec:exp}).

\begin{algorithm}[t]
	\caption{ Query Processing Algorithm with \mci}
	\label{alg:search}
	\small
	\KwIn{$D=(V,F)$, $Q=(v_q,f_q,P_q)$, Maximal Clique Index $\mathcal{M}$, an integer $k$, an integer  $l_s$, and a parameter $\epsilon$}
	\KwOut{ $k$ ANNs of $Q$}
	\SetKwProg{Fn}{Procedure}{}{}
	
	$R\gets \emptyset, E\gets \emptyset, I\gets \emptyset$; \textit{/*Results, Explored, Inserted*/} \\
	$seeds\gets $ sample $\epsilon \sqrt{n}$ vectors that $\forall v_i\in seeds, P_q(f_i, f_q)=True$\;
	\For{$v\in seeds$}{
		$I\gets I\cup \{v\}$;
		$\inset(R,v)$\;
	}
	$\mathcal{M'}\gets \emptyset$; \textit{/*visted maximal cliques*/} \\ 
	\While{$R\setminus E \neq \emptyset$}{
		Let $p$ be the vector in $R\setminus E$ closest to $v_q$\;
		$E\gets E\cup \{p\}$\;
		\For{$C\in \mathcal{M}$ s.t. $p\in C$, $C\notin \mathcal{M'}$}{
			$\mathcal{M'}\gets \mathcal{M'}\cup \{C\}$\;
			\For{$v_i \in C$}{
				\If{$P_q(f_i,f_q)$ and $v_i \notin I$}{
					$I\gets I\cup \{v_i\}$;
					$\inset(R,v_i)$\;
				}
			}
		}
	}
	
	\Return{the closest $k$ vector of $v_q$ in $R$;}
	
	\Fn{$\inset(R, v)$}{
		$R\gets R\cup \{v\}$\;
		\If{$|R|> l_s$}{
			Update $R$ with the closest $l_s$ vectors to $v_q$\;
		}
	}
	
\end{algorithm}

\section{Query processing with \mci} \label{sec:query-process}
This section describes query processing with \mci for AFANNS. We first introduce a seed‑node selection strategy, then present the query processing algorithm and analyze its complexity.

\subsection{Seed nodes selection strategy}
Proximity‑graph‑based search typically starts from a fixed entry node, assuming the graph is well‑connected. This assumption fails in AFANNS because arbitrary predicates can invalidate many edges, fragmenting the induced subgraph into disconnected components. A single entry point may trap the search in one component, missing valid candidates in others.

To overcome this, we employ a multi‑seed initialization. We sample $\epsilon \sqrt{n}$ nodes uniformly at random as candidate seeds. From these, we keep the $l_s$ closest to the query vector as the actual starting points. This strategy increases the chance of hitting different connected components of the filtered subgraph.

The parameter $\epsilon$ allows adaptation to query selectivity. Our experiments show that even a small $\epsilon$ (e.g., $0.01$) works well for high‑selectivity queries, as the giant component dominates. For low‑selectivity queries, a larger $\epsilon$ (e.g., $1$) helps locate seeds in scattered components, thereby improving recall.

\stitle{Why $\epsilon \sqrt{n}$ Seeds?}
The $\sqrt{n}$ term is motivated by the observation that in most practical scenarios, query selectivity is typically at least $O(1/\sqrt{n})$. Under this condition, $\sqrt{n}$ seeds provide a good balance: enough to cover the main components, yet sub‑linear to avoid excessive overhead. While the seed sampling adds $\epsilon \sqrt{n}$ distance computations, this cost becomes negligible in high‑recall regimes where the total number of distance evaluations is much larger. The law of diminishing returns in ANNS means that achieving the last few percentage points of recall often dominates the overall cost \cite{survey21,HNSW,NSG,NHQ,SSG}; the modest seed‑selection overhead is therefore justified by the substantial gains in robustness and final recall.

\comment{\section{Query Processing with \mci} \label{sec:query-process}
This section presents the query processing technique to handle the AFANNS problem with \mci. Below, we first introduce a seed nodes selection strategy, followed by query processing algorithm ant its complexity analysis.

\subsection{Seed Nodes Selection Strategy}
Previous works leveraging proximity graphs typically adopt a fixed entry-node strategy for search initialization. This approach relies on the assumption that the graph is connected, ensuring that a single entry point suffices to traverse the structure and reach all other nodes.
However, a fixed entry-node strategy is infeasible for AFANNS, where the predicate-induced subgraph is not guaranteed to be connected. Note that \mci has no guarantees for strict connectivity for arbitrary filters \cite{survey21}. The predicate prunes \textit{False} elements, naturally fragmenting the index into isolated connected components. A fixed entry point may trap the search within a local component, preventing the retrieval of valid neighbors located in other components.

To address this connectivity  challenge, we propose a seeds sampling strategy that we sample $\epsilon \sqrt{n}$ seeds as the initial start nodes, where $\epsilon$ denotes a tunable scaling factor. Specifically, the algorithm maintain a priority queue with a maximum capacity of $l_s$, and the top-$l_s$ nearest neighbors in the $\epsilon \sqrt{n}$ seeds serve as the initial start nodes. This multi-seed initialization ensures coverage of multiple potential connected components in the predicate-subgraph.

The parameter $\epsilon$ allows for flexible adjustment based on query selectivity. Experimental results validate that the algorithm's performance is robust to variations in $\epsilon$ in high-selectivity regimes. In such cases, even a small $\epsilon$ (e.g., $\epsilon=0.01$) ensures that the sampled seeds land in the main giant component of the subgraph. Conversely, when selectivity is low, a larger $\epsilon$ (e.g., $\epsilon=1$) is necessary to locate seeds in sparse, scattered components, thereby improving a  recall.

\stitle{Why $\epsilon \sqrt{n}$ Seeds?}
The choice of $\sqrt{n}$ is grounded in the assumption that selectivity is typically at least $O(1/\sqrt{n})$, a condition holding for most practical search scenarios (if selectivity is lower, pre-filtering methods are generally superior). Mathematically, the $\sqrt{n}$ term ensures moderate growth with dataset size, avoiding  oversampling, which hurts latency.

While this technique introduces a fixed overhead of $\epsilon \sqrt{n}$ distance calculations, potentially impacting latency in low-recall or strictly real-time scenarios, it delivers a performance advantage in high-recall and low selectivity regimes. It is well-established in ANNS literature that recall improvement follows a law of diminishing returns; for instance, the computational effort required to boost recall from $98\%$ to $99\%$ may exceeds that required to improve from $80\%$ to $90\%$. In such high-precision contexts, the aggregate number of distance calculations necessary for accurate retrieval inherently exceeds $O(\epsilon \sqrt{n})$. Therefore, the initialization overhead renders itself negligible relative to the substantial gains in search robustness.
}

\subsection{The query processing algorithm} \label{subsec:query-process-alg}
Nodes belonging to the same maximal clique as a reference node $u$ exhibit strong semantic or structural correlations, rendering them high-potential candidates for the nearest neighbor search. Accordingly, for any node $u$ visited during the traversal, we identify $\mathcal{M}_u$ as the set of all maximal cliques containing $u$. The union of nodes across these cliques, defined as $\bigcup_{C \in \mathcal{M}_u} C$, constitutes the candidate set for the subsequent expansion step.

Algorithm~\ref{alg:search} outlines the pseudo-code for the \mci search procedure. It employs a greedy strategy like beam search \cite{NSG}, customized for the clique-based structure. The algorithm takes as input the dataset $D=(V,F)$, a query $Q=(v_q, f_q, P_q)$, the index $\mathcal{M}$, and three parameters: the beam width $l_s$ (where $l_s \ge k$), the target result count $k$, and the sampling factor $\epsilon$.

The process begins by initializing three sets: $R$,  the top-$l_s$ valid ($True$) candidates found so far; $E$, the set of nodes within $R$ that have already been expanded; and $I$, the set of visited nodes for which the distance to $v_q$ has been computed (Line~1).
The algorithm first samples $\epsilon \sqrt{n}$ random seeds (Line~2) and attempts to insert them into $R$ (Lines~3--4). The core loop proceeds by iteratively selecting the closest unexplored node $p \in R \setminus E$ (Lines~6--8). For each $p$, the algorithm retrieves its associated maximal cliques and expands the search to all unvisited neighbors that satisfy the predicate $P_q$ (lines~9--13). The priority queue $R$ is dynamically updated to retain only the top-$l_s$ closest valid vectors (lines~15--18). Finally, the algorithm terminates when no further candidates can be explored, returning the top-$k$ results (line~14).

The search parameters \(l_s\) and \(\epsilon\) enable a flexible trade-off between search quality and efficiency: larger values of \(\epsilon\) and \(l_{\text{search}}\) yield more accurate retrieval results, but at the cost of increased computational latency.

\stitle{Implementation details.}
To ensure efficient query performance, we adopt the optimized data structure framework established in Vamana \cite{filteredDiskann}. Specifically, the candidate sets $R$ and $E$ are maintained as ordered vectors, where each element is stored as a tuple $(v_i, d(v_i, v_q), \text{flag}_i)$ containing the node identifier, its distance to the query, and a boolean validity indicator. The visited set $I$ is implemented as a bit‑vector of length $n$ supporting $O(1)$ membership tests.

We utilize the reservoir sampling algorithm to sample the $\epsilon \sqrt{n}$ seeds. This sampling algorithm needs a computation of predicates $P_q(f_i, f_q)$ for all nodes. Generally, the predicate computation cost is quite smaller than the search process. Thus, like previous works \cite{ACORN, MochengSurvey}, we pre-compute the predicates as a boolean vector for sampling (Line~2) and searching (Line~12).  Additionally, when the predicates cost can not be ignored, we can keeps sampling without replacement until  $\epsilon \sqrt{n}$ seeds are reached, where the total sample size is expected to be ${\epsilon \sqrt{n}}/{s}$ ($s$ is the query selectivity).

\stitle{Complexity analysis.}
The search procedure comprises two distinct phases: seed initialization and greedy traversal. First, the algorithm samples $\epsilon \sqrt{n}$ seeds to ensure coverage across potentially disconnected components within the predicate-induced subgraph. This initialization incurs a computational cost of $O(\epsilon \sqrt{n})$ distance comparisons. Second, the greedy traversal operates within the subgraph of valid elements (of size $sn$). In high-dimensional vector spaces, the total search overhead is dominated by distance computations. Prior literature on $k$-NNG-based ANNS characterizes the empirical search complexity as ranging from $O(N^{0.52})$ to $O(N^{0.55})$ \cite{NSW14,survey21,DBLP:journals/corr/FuC16,DBLP:journals/tcyb/JinZHLCH14}. Given that \mci approximates a $k$-NNG structure, we estimate the expected traversal complexity to be $O((sn)^{0.55})$. Consequently, the total average-case complexity is dominated by the combination of seed sampling and graph traversal, yielding $O(\epsilon \sqrt{n} + (sn)^{0.55})$. 

\section{Experiments}\label{sec:exp}
We present a comprehensive evaluation of our method and several state‑of‑the‑art baselines for AFANNS.

\subsection{Experimental setup}

\begin{figure*}[t!]
	\vspace*{-0.5cm}
	\begin{center}
		\includegraphics[width=0.7\linewidth]{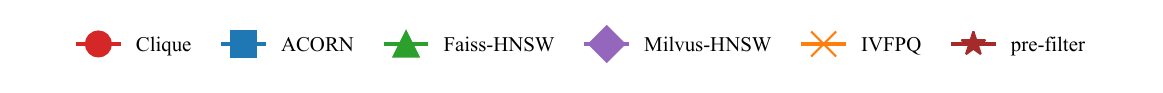}\vspace*{-0.5cm}\\
		\subfigure[sift1M]{\label{sfig:time_sift1M}\includegraphics[width=0.195\linewidth]{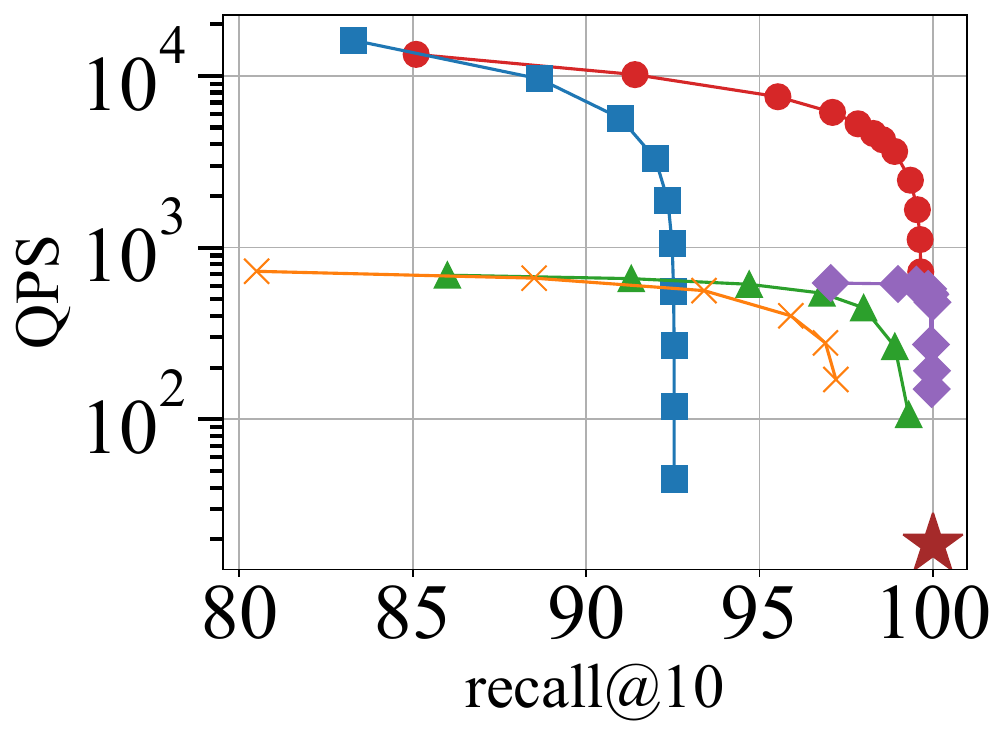}}
		\subfigure[movielens]{\label{sfig:time_movielens}\includegraphics[width=0.195\linewidth]{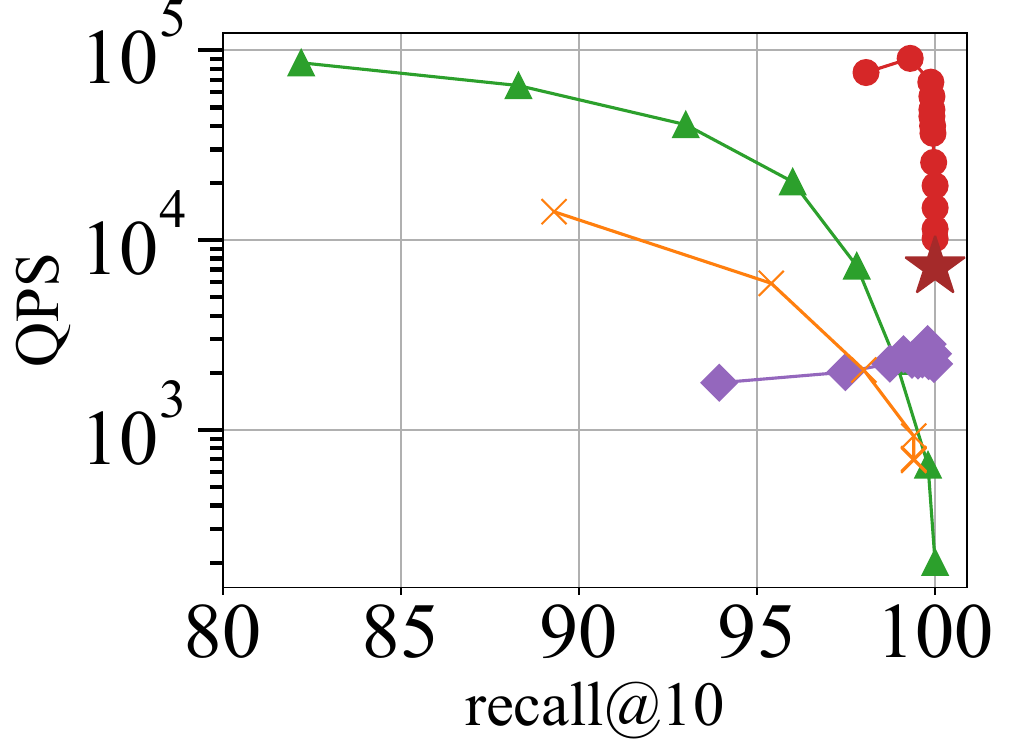}}
		\subfigure[audio]{\label{sfig:time_audio}\includegraphics[width=0.195\linewidth]{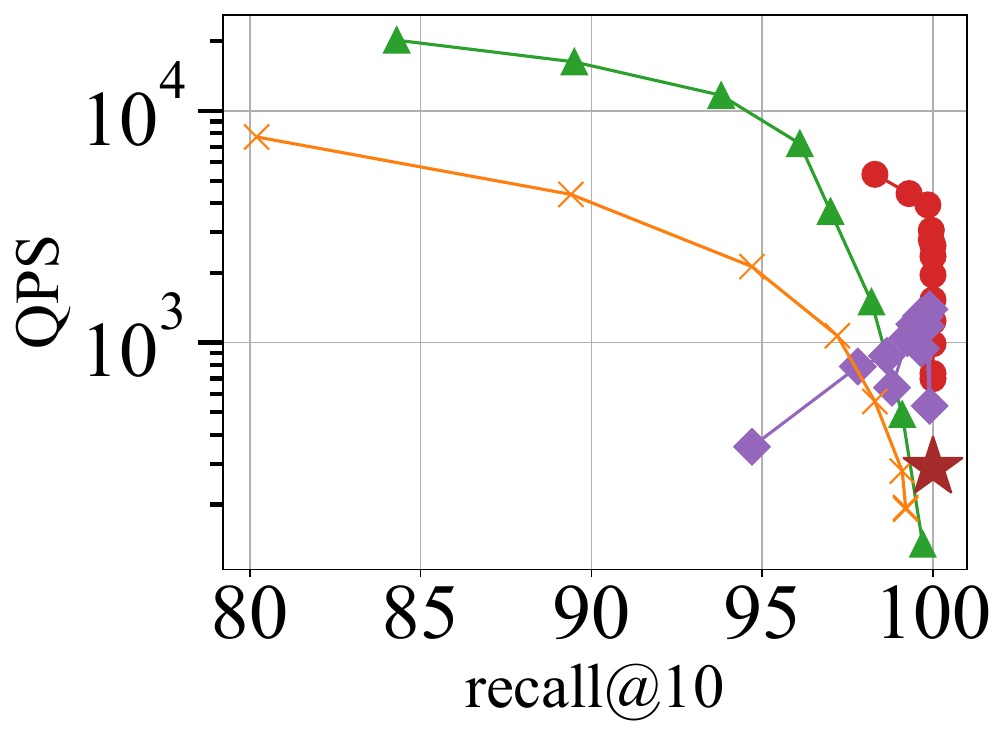}}
		\subfigure[deep1M]{\label{sfig:time_deep1M}\includegraphics[width=0.195\linewidth]{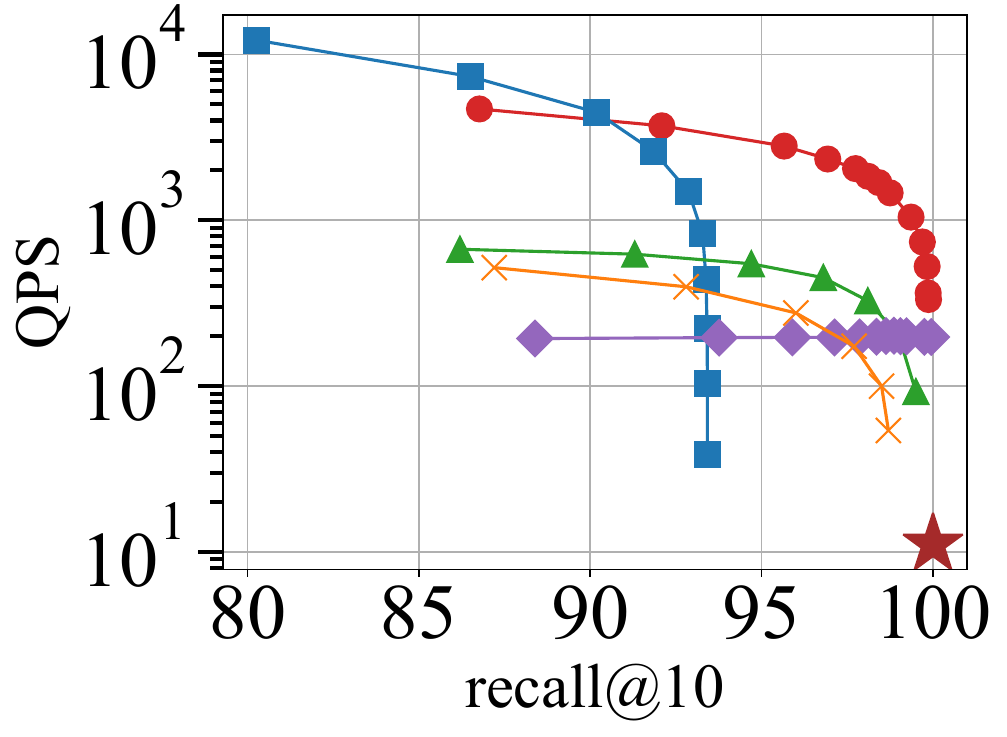}}
		\subfigure[glove2.2m]{\label{sfig:time_glove2.2m}\includegraphics[width=0.195\linewidth]{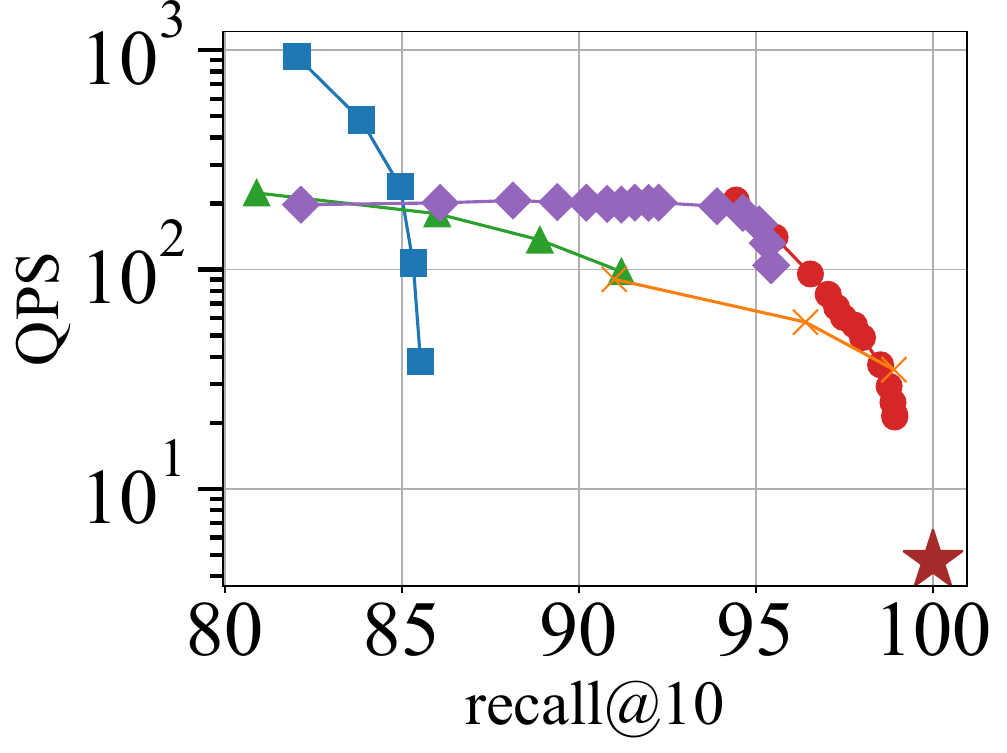}}\vspace*{-0.2cm}\\
		\subfigure[nuswide]{\label{sfig:time_nuswide}\includegraphics[width=0.195\linewidth]{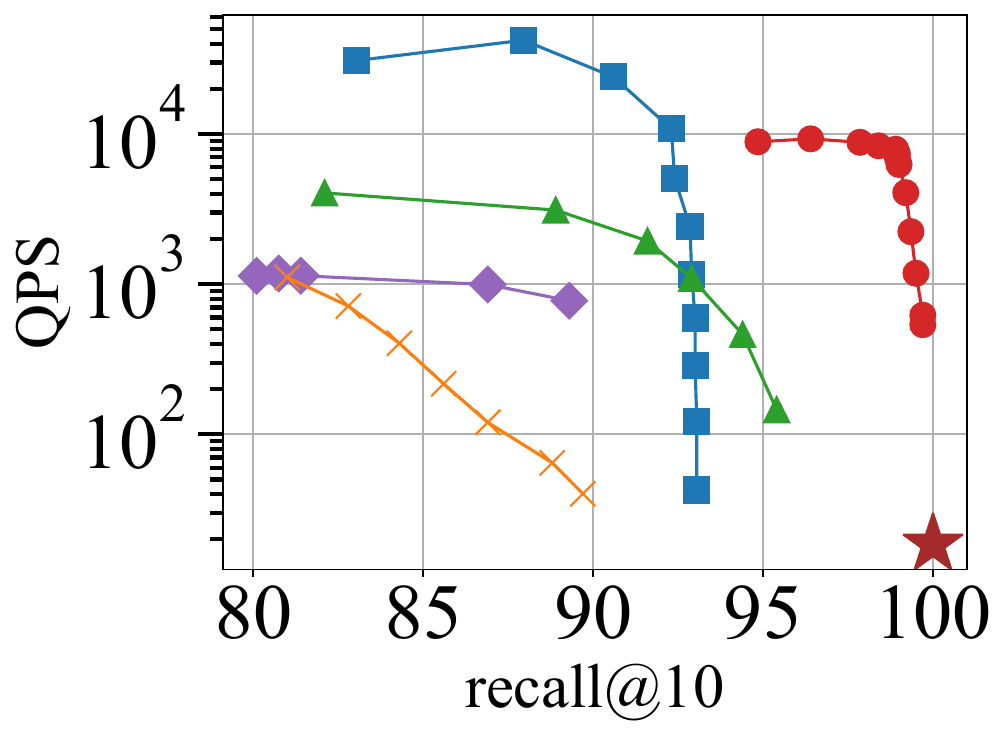}}
		\subfigure[tripclick]{\label{sfig:time_tripclick}\includegraphics[width=0.195\linewidth]{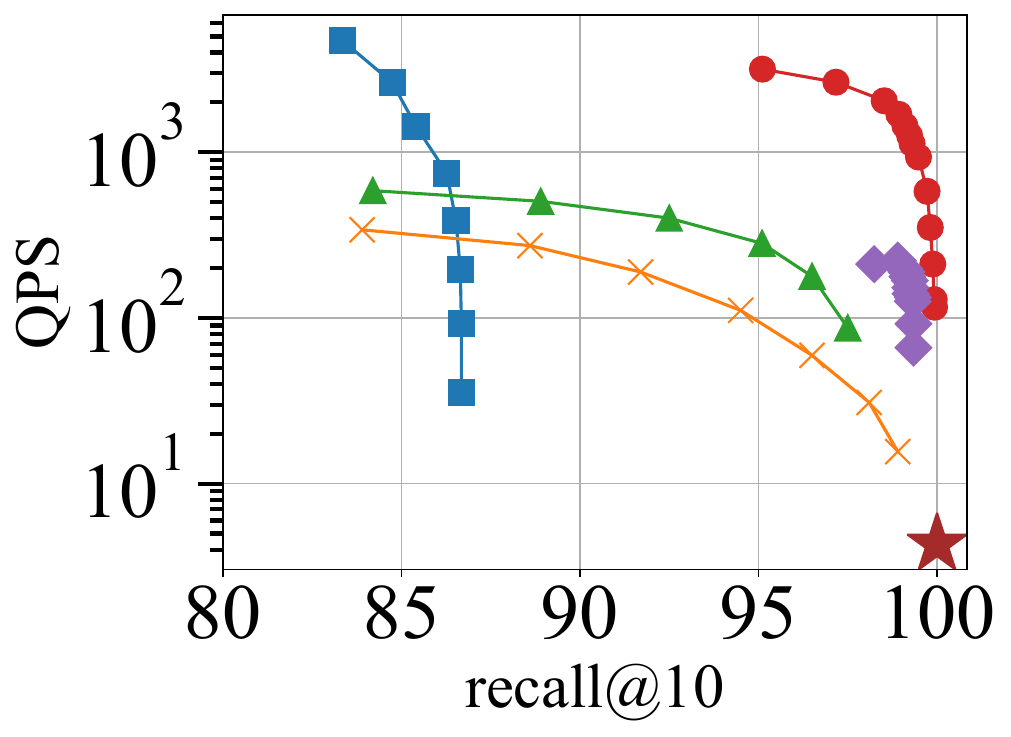}}
		\subfigure[gist1M]{\label{sfig:time_gist1M}\includegraphics[width=0.195\linewidth]{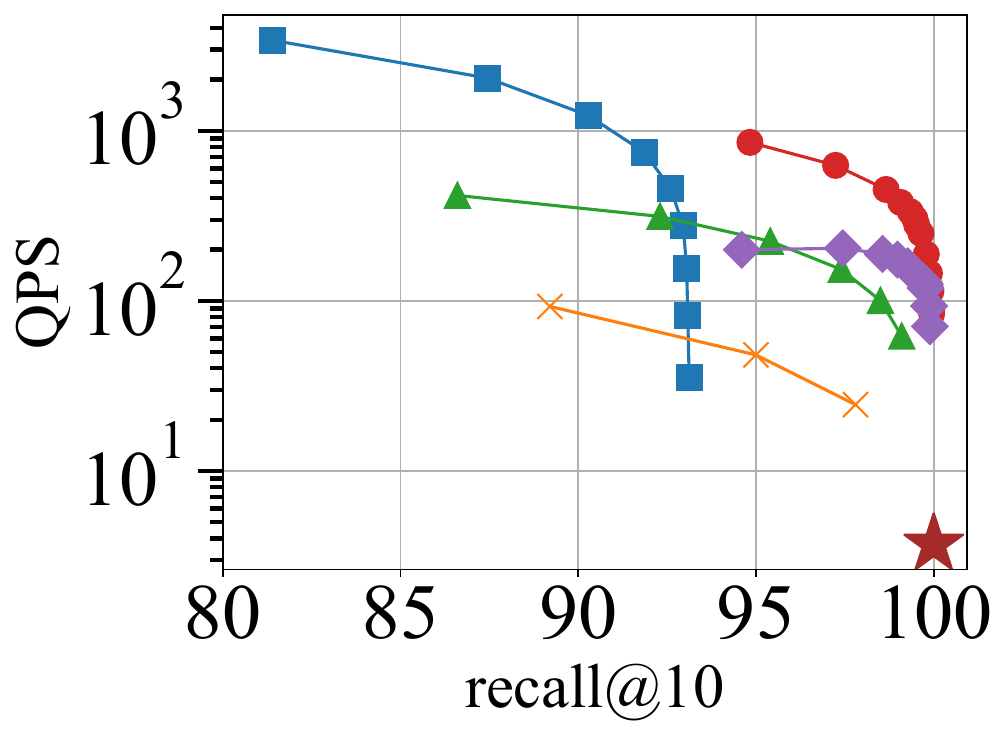}}
		\subfigure[wit]{\label{sfig:time_wit}\includegraphics[width=0.195\linewidth]{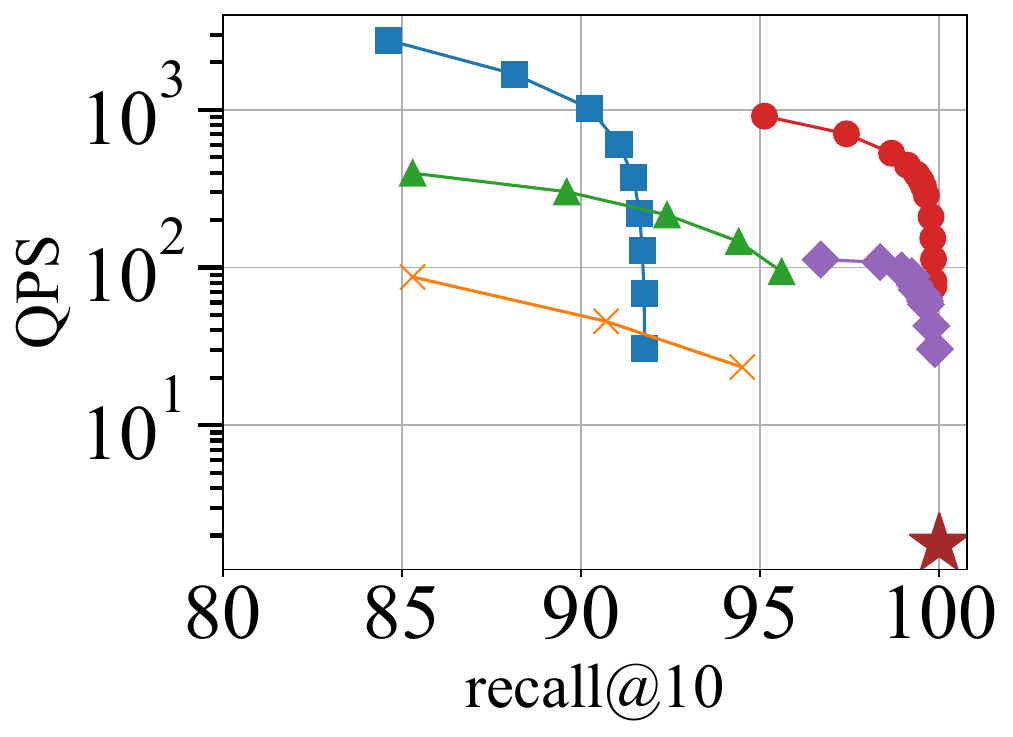}}
		\subfigure[spacev10M]{\label{sfig:time_spacev10M}\includegraphics[width=0.195\linewidth]{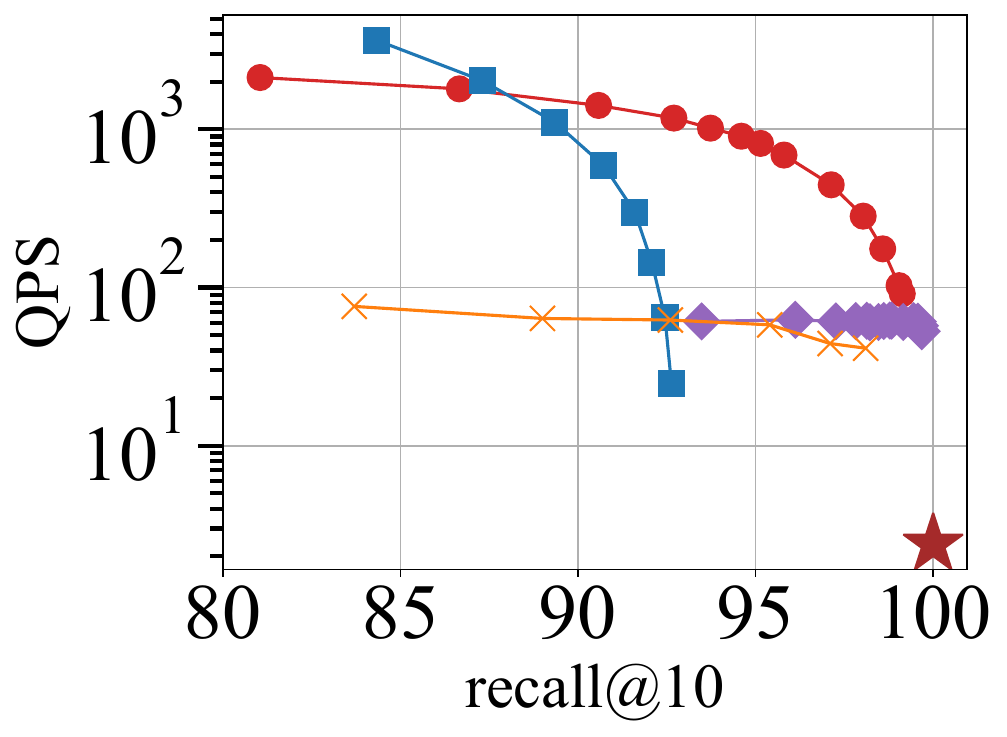}}
	\end{center}
	\vspace*{-0.3cm}
	\caption{Recall@10 vs QPS of different methods under mixed selectivity}
	\label{fig:recall_qps} 
		\vspace*{-0.2cm}
\end{figure*}

\begin{table}[t!]
		\small
	\centering
		\vspace*{-0.2cm}
	\caption{Summary of datasets ($\mathbf{\sum_{C\in\mathcal{M}}|C|}$ is the total number of nodes contained in \mci)}
		\vspace*{-0.3cm}
	\begin{tabular}{c|c|c|c|c|c}
		\toprule
		\textbf{Datasets} & \textbf{$n$} & \textbf{Dim} &  \textbf{\# Q} & \textbf{Type} & $\mathbf{\frac{\sum_{C\in\mathcal{M}}|C|}{n}}$  \\
		\midrule

	sift1M &  1,000,000  & 128 &  10,000  & Image&15.87 \\
	movielens &  10,677  & 150 &  1,000  & Latent Factor&18.57 \\
	audio &  53,387  & 192 &  200  & audio&25.75 \\
	deep1M &  1,000,000  & 256 &  1,000  & Image&17.09 \\
	glove2.2m &  2,196,017  & 300 &  1,000  & Text&45.90 \\
	nuswide &  268,643  & 500 &  200  & Image&32.01 \\
	tripclick &  1,000,000  & 768 &  1,000  & passages&46.29 \\
	gist1M &  1,000,000  & 960 &  1,000  & Scenes&35.66 \\
	wit &  1,000,000  & 2,048 &  1,000  & image&29.60 \\	
    spacev10M &  10,000,000  & 100 &  1,000  & Document&25.43 \\
		\bottomrule
	\end{tabular}
		\vspace*{-0.3cm}
	\label{tab:datasets}
\end{table}

\begin{figure}[t!]
	\begin{center}
	\end{center}
	\vspace*{-0.2cm}
	\includegraphics[width=0.98\linewidth]{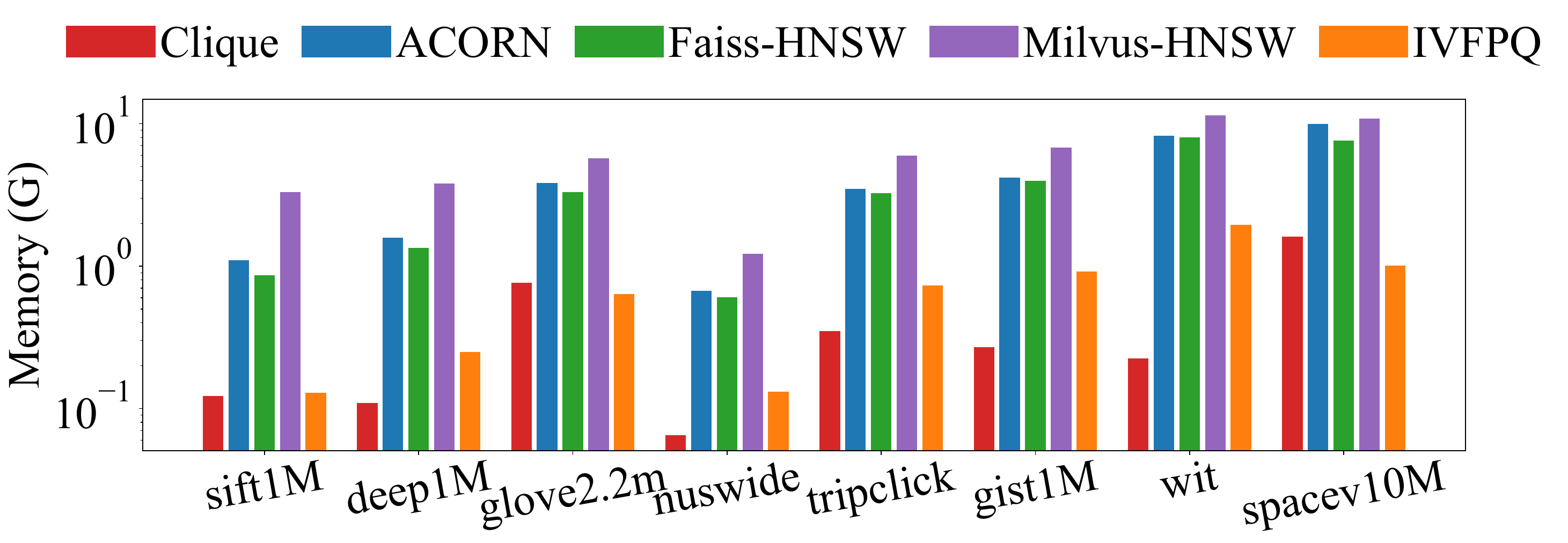}
    \vspace*{-0.3cm}
	\caption{Index size of different methods}
	\label{fig:index_size} 
	\vspace*{-0.3cm}
\end{figure}

\stitle{Datasets.} We use ten publicly available datasets from diverse domains, widely adopted in prior ANNS and AFANNS studies \cite{ACORN,survey21,MochengSurvey}. Detailed characteristics of these datasets are summarized in Table~\ref{tab:datasets}.  
For experiments involving mixed selectivity, we adopt the label (feature) generation method proposed in \cite{UNG}. The selectivity of the generated labels follows a Zipf distribution, where a higher selectivity indicates a larger number of nodes included in the query result set \cite{UNG}. For experiments that require precise control of selectivity, we use the selectivity control method in \cite{MochengSurvey}.

\stitle{Algorithms.} We implement our construction (Algorithm~\ref{alg:construction}) and search (Algorithm~\ref{alg:search}) algorithms in C++. Default parameters are $\tau = 50$, $k' = 200$ (construction) and $\epsilon = 1$ (search). For the large‑scale \texttt{spacev10M} dataset, we increase $k'$ to $300$ to ensure sufficient connectivity.
We compare \mci against four general AFANNS baselines, following the evaluation protocol of \cite{MochengSurvey}:

\begin{itemize}
	\item \textbf{ACORN} \cite{ACORN}: configured with \(M=40\), \(M_{\beta}=80\), \(\gamma = 24\), and \(ef_{\text{construction}} = 500\), which ensures sufficient graph density and convergence for most datasets \cite{ACORN, MochengSurvey}.
	\item \textbf{HNSW}: two implementations (Faiss \cite{FAISS} and Milvus \cite{milvus}), both with $M=50$, $ef_{\text{construction}}=200$.
	\item \textbf{IVFPQ}: Faiss version with 4000 coarse clusters and 8‑bit product quantization.
\end{itemize}

All methods use Euclidean ($L_2$) distance. For specialized filtering tasks, we also include domain‑specific baselines:
\begin{itemize}
    \item \textit{Range filtering}: iRange \cite{iRangeGraph}, SuperPF \cite{DBLP:conf/icml/EngelsLYDS24}, Unify \cite{UNIFY}, and SeRF \cite{SeRF}.
    \item \textit{Keyword filtering}: NHQ (NHQ‑Kgraph variant \cite{MochengSurvey}) and UNG \cite{UNG} (with the number of cross edges $\delta=6$ and entry vectors $\sigma=16$ as recommended \cite{UNG}).
\end{itemize}
All sequential baselines are parallelized for fair latency comparison. General parameter settings follow \cite{MochengSurvey}.

\comment{
To evaluate performance on specialized filtering tasks, we included domain-specific baselines categorized by filter type:
\begin{itemize}
	\item \textit{Range filtering:} iRange \cite{iRangeGraph}, SuperPF \cite{DBLP:conf/icml/EngelsLYDS24}, Unify \cite{UNIFY}, and SeRF \cite{SeRF}.
	\item \textit{Keyword filtering:} NHQ \cite{NHQ} (specifically the high-performance NHQ-Kgraph variant \cite{MochengSurvey}) and UNG \cite{UNG}.
\end{itemize}
All sequential baseline algorithms were parallelized to ensure a fair comparison of query latency. General parameter settings followed the benchmarks in \cite{MochengSurvey}. Specifically for UNG, we set the number of cross edges $\delta=6$ and entry vectors $\sigma=16$, consistent with the original authors' recommendations \cite{UNG}.
}

\stitle{Implementation.}
All experiments run on a Linux server with an AMD Ryzen Threadripper 3990X CPU and 256GB RAM (CentOS 7). Our code is written in C++ and compiled with \texttt{-O3}. Index construction and query processing use 16 threads for all algorithms.

\comment{
\stitle{Implementation.}
We conducted all evaluations on a Linux server equipped with an AMD Ryzen Threadripper 3990X CPU and 256GB of RAM, running CentOS 7. Our proposed algorithm was implemented in C++ and compiled with the \texttt{-O3} optimization flag. All algorithms, including baselines, were executed using 16 threads for both index construction and query processing.
}

\subsection{Performance studies}

\stitle{Exp-1: Recall@10 vs. QPS under mixed selectivity.}
Figure~\ref{fig:recall_qps} shows the Recall@10–QPS trade‑off across datasets under mixed selectivity. \acorn cannot be evaluated on \texttt{movielens} and \texttt{audio} due to segmentation faults in its original implementation.

As can be seen, \mci consistently dominates the high‑recall regime. On the \texttt{wit} dataset (Figure~\ref{sfig:time_wit}), at recall at $95\%$, our \mci outperforms the baselines by an order of magnitude in QPS. On \texttt{deep1M} (Figure~\ref{sfig:time_deep1M}), it sustains $>99.5\%$ recall at $739.7$ QPS, whereas the best baseline (Milvus‑HNSW) reaches only $197.8$ QPS at the same recall—a $3.7\times$ speedup. At lower recall targets, \acorn occasionally achieves higher QPS. For example, on \texttt{spacev10M} (Figure~\ref{sfig:time_spacev10M}) at 85\% recall, \acorn attains $2221$ QPS vs. $1799$ QPS for \mci. This gap stems from our fixed seed‑sampling overhead ($\epsilon\sqrt{n}$ distance computations), which becomes noticeable when the total search budget is small (i.e., a small $l_s$). 

Notably, \acorn fails to reach 99\% recall on several datasets, a result that differs from findings in prior studies \cite{ACORN, MochengSurvey}. This discrepancy arises from our adoption of the query generation method from \cite{UNG}, which includes queries with extremely low selectivity (e.g., $s \approx 0.001$). \acorn struggles to navigate the sparse subgraphs induced by these strict predicates. In contrast, \mci successfully attains over $>98\%$ recall across all tested datasets, demonstrating superior robustness to selectivity variations.

The \texttt{glove2.2m} dataset poses a particular challenge due to its complex topology (many small clusters, hub nodes, and imbalanced distribution) \cite{VISTA}. In such settings, graph‑based indices usually degrade, allowing IVFPQ to lead in recall (Figure~\ref{sfig:time_glove2.2m}). Nevertheless, \mci still outperforms IVFPQ even under these adverse scenarios.

\stitle{Impact of beam width ($l_s$).}
As the search parameter $l_s$ increases, recall improves with diminishing returns. Analyzing \texttt{deep1M} specifically (Figure~\ref{sfig:time_deep1M}): at $l_s = 10$, the system achieves 88\% recall at 4557 QPS. Increasing $l_s$ to 20 boosts recall to 93\% (3570 QPS), and further to 160 yields 99.08\% recall (1391 QPS). However, bridging the final gap to 99.9\% recall requires increasing $l_s$ drastically to 2640, causing QPS to drop to 348. This non-linear trade-off highlights that queries with very low selectivity require extensive graph traversal to ensure near-perfect retrieval.

\begin{figure}[t]
	\vspace*{-0.3cm}
	\begin{center}
	\end{center}
	\includegraphics[width=0.95\linewidth]{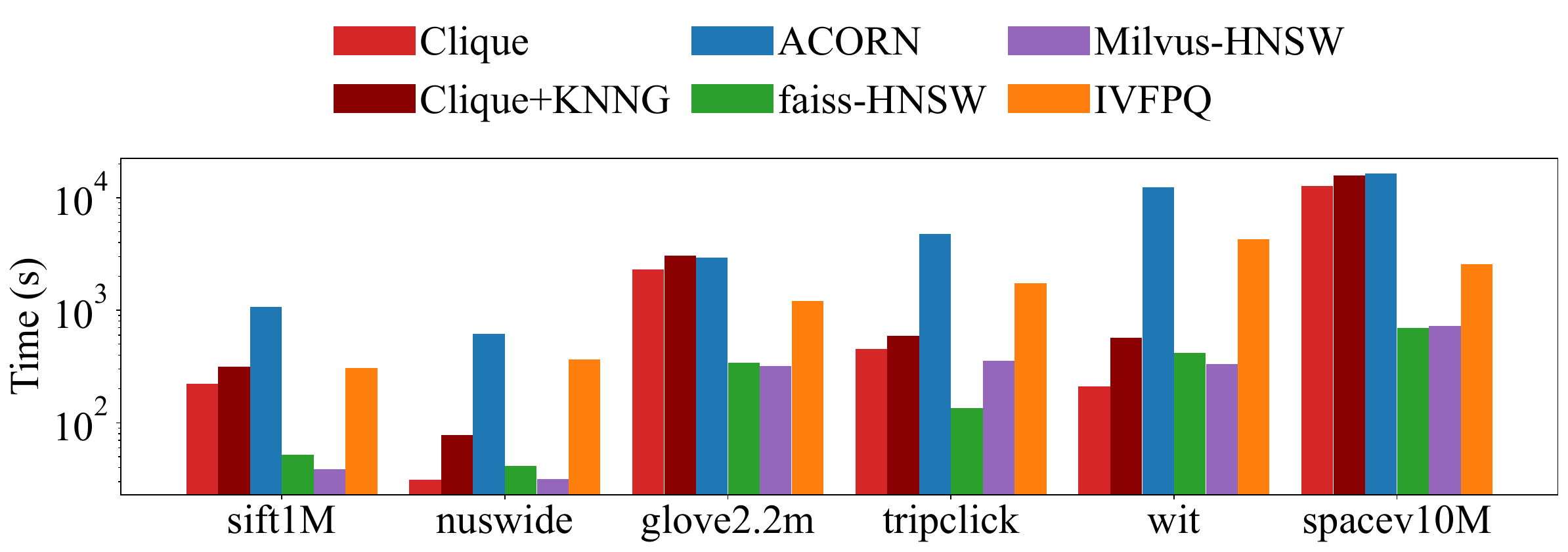}
    \vspace*{-0.3cm}
	\caption{Index construction time of different methods}
	\label{fig:TTI} 
	\vspace*{-0.3cm}
\end{figure}

\begin{figure*}[t!]
	\vspace*{-0.3cm}
	\begin{center}
		\includegraphics[width=0.6\linewidth]{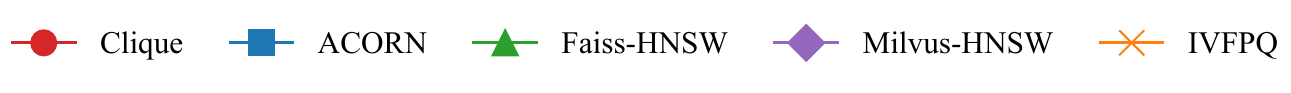}\vspace*{-0.4cm}\\
		\subfigure[glove2.2m, s=0.5]{\label{sfig:time_glove2.2m0.5}\includegraphics[width=0.195\linewidth]{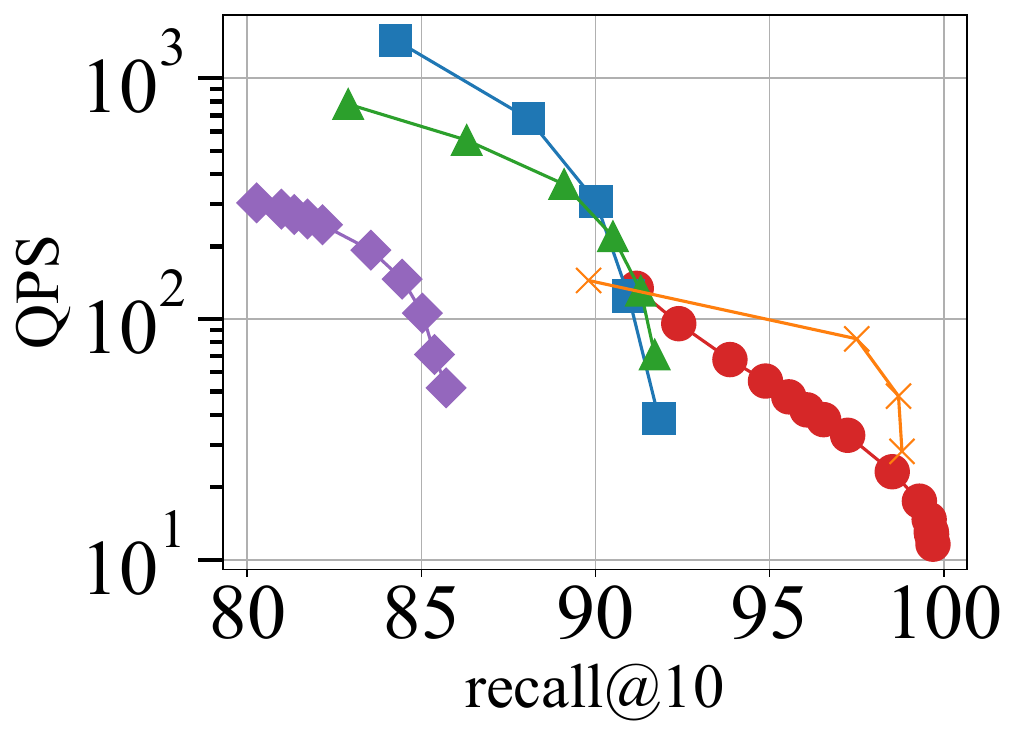}} 
		\subfigure[glove2.2m, s=0.1]{\label{sfig:time_glove2.2m0.1}\includegraphics[width=0.195\linewidth]{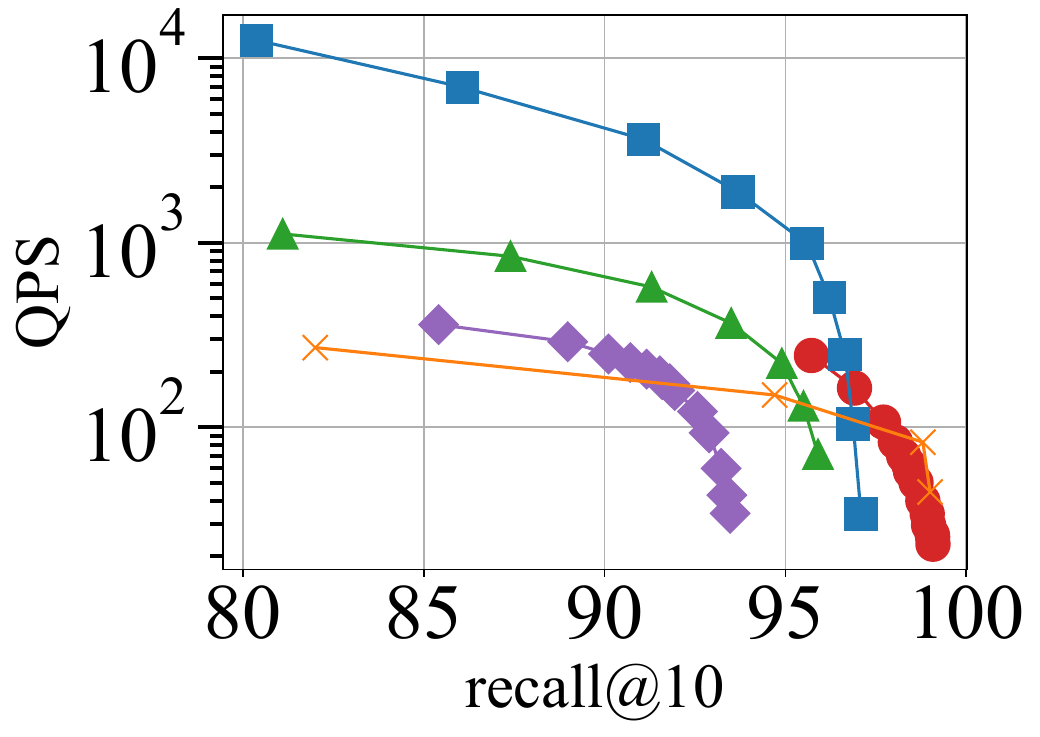}} 
		\subfigure[glove2.2m, s=0.05]{\label{sfig:time_glove2.2m0.05}\includegraphics[width=0.195\linewidth]{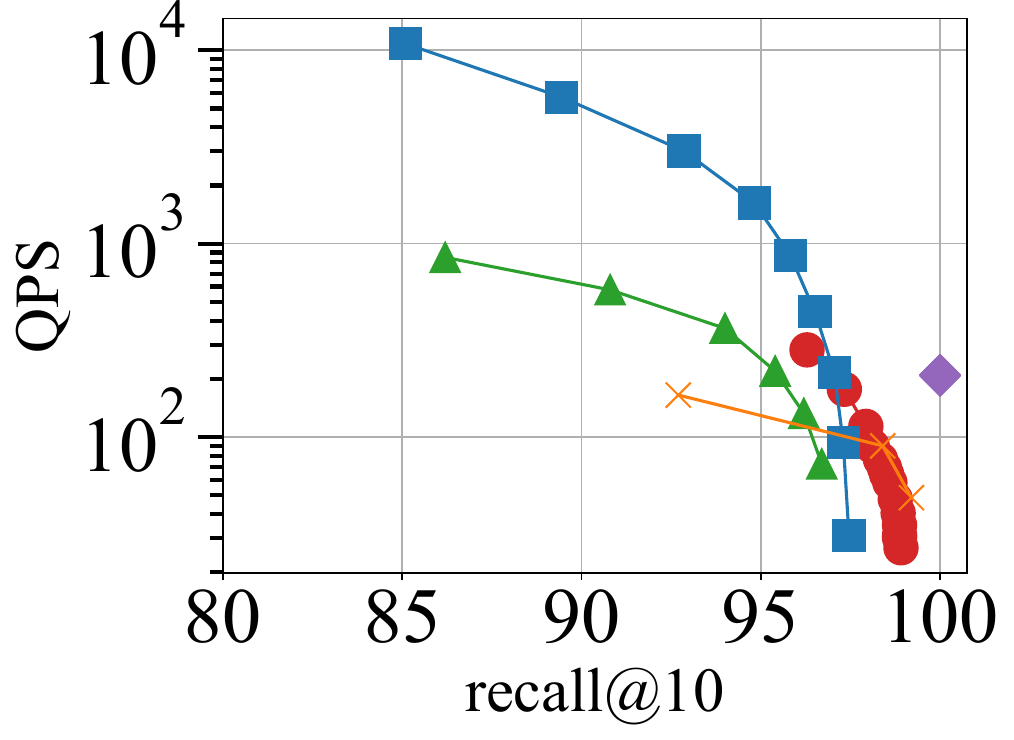}}
		\subfigure[glove2.2m, s=0.01]{\label{sfig:time_glove2.2m0.01}\includegraphics[width=0.195\linewidth]{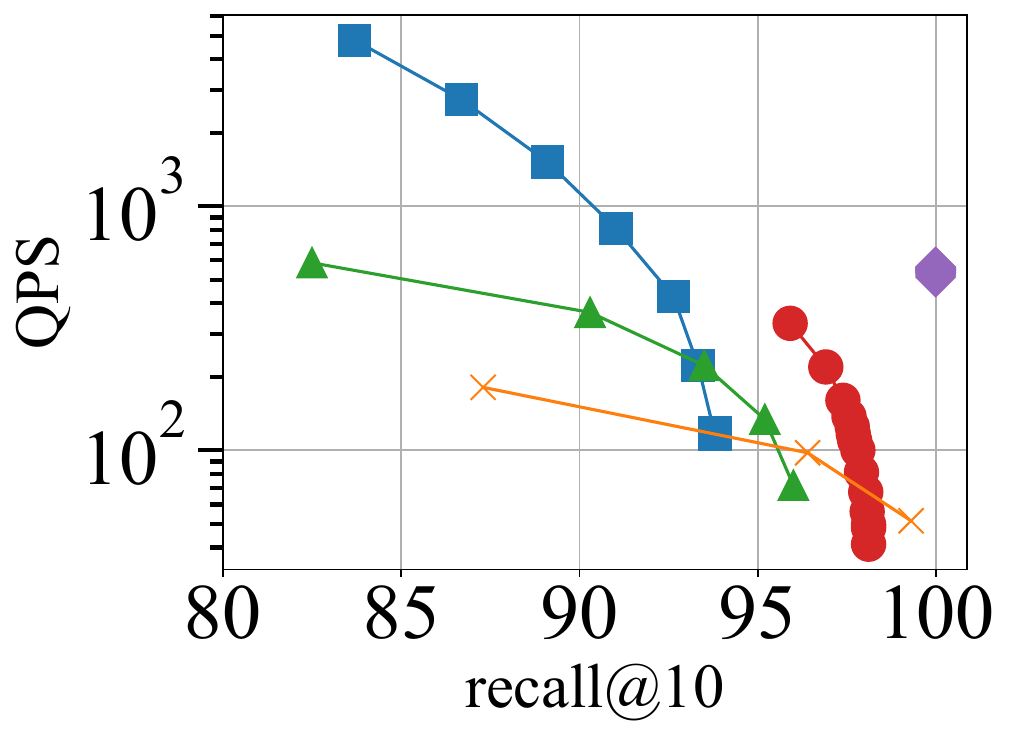}}
		\subfigure[glove2.2m, s=0.005]{\label{sfig:time_glove2.2m0.005}\includegraphics[width=0.195\linewidth]{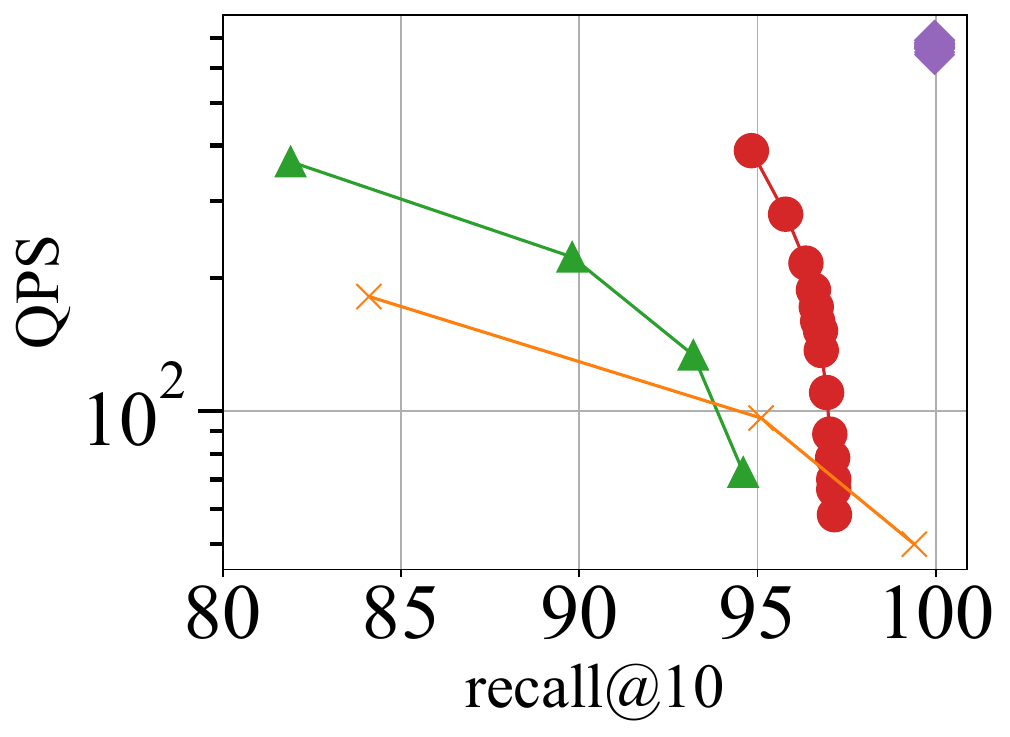}}
		\vspace*{-0.3cm}\\
        \subfigure[tripclick, s=0.5]{\label{sfig:time_tripclick0.5}\includegraphics[width=0.195\linewidth]{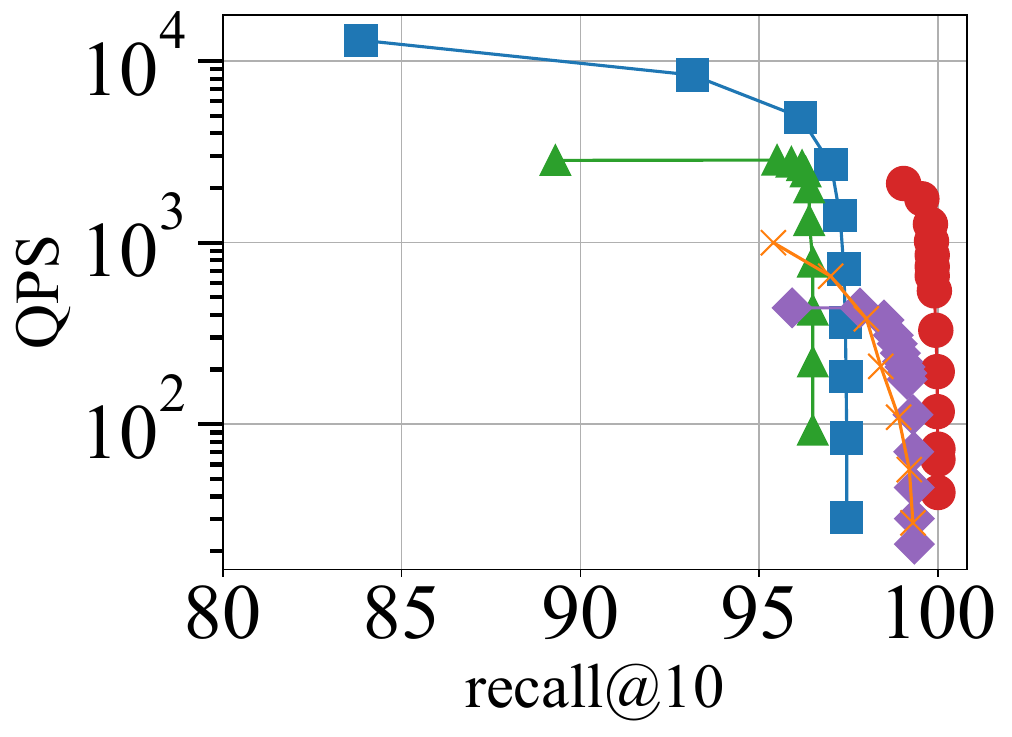}} 
		\subfigure[tripclick, s=0.1]{\label{sfig:time_tripclick0.1}\includegraphics[width=0.195\linewidth]{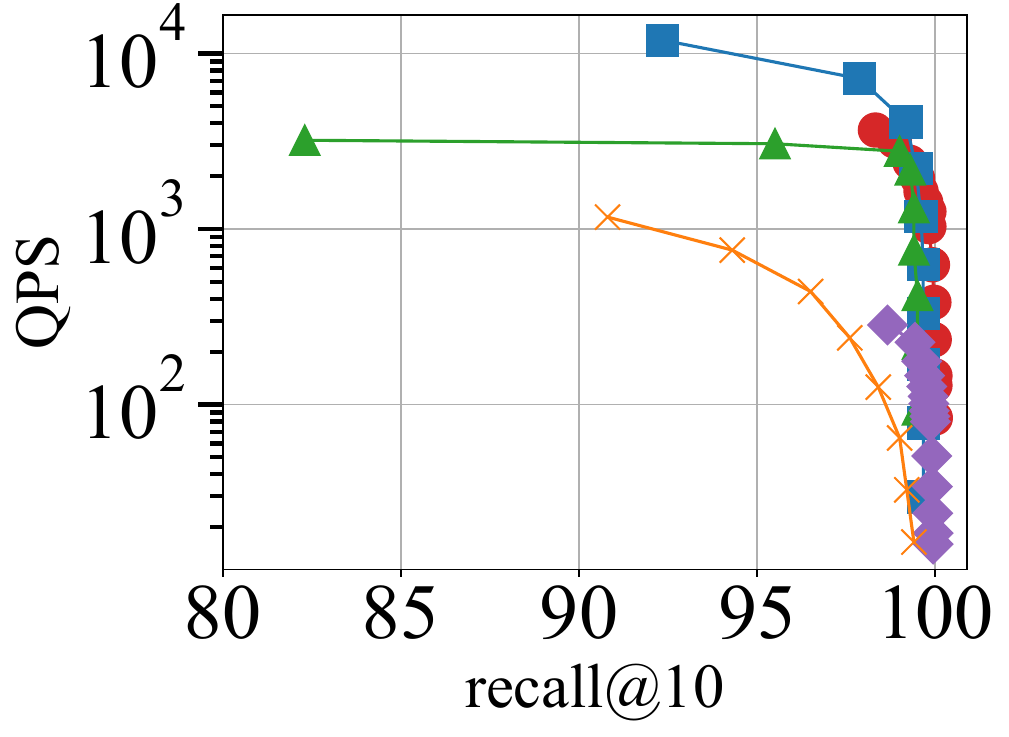}} 
		\subfigure[tripclick, s=0.05]{\label{sfig:time_tripclick0.05}\includegraphics[width=0.195\linewidth]{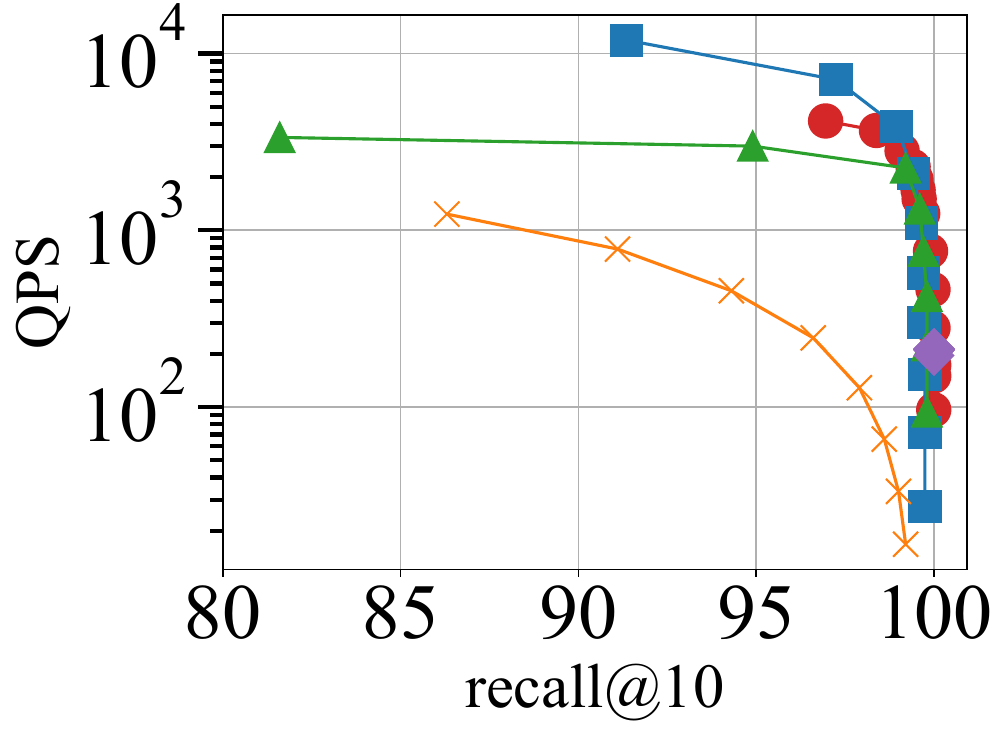}}
		\subfigure[tripclick, s=0.01]{\label{sfig:time_tripclick0.01}\includegraphics[width=0.195\linewidth]{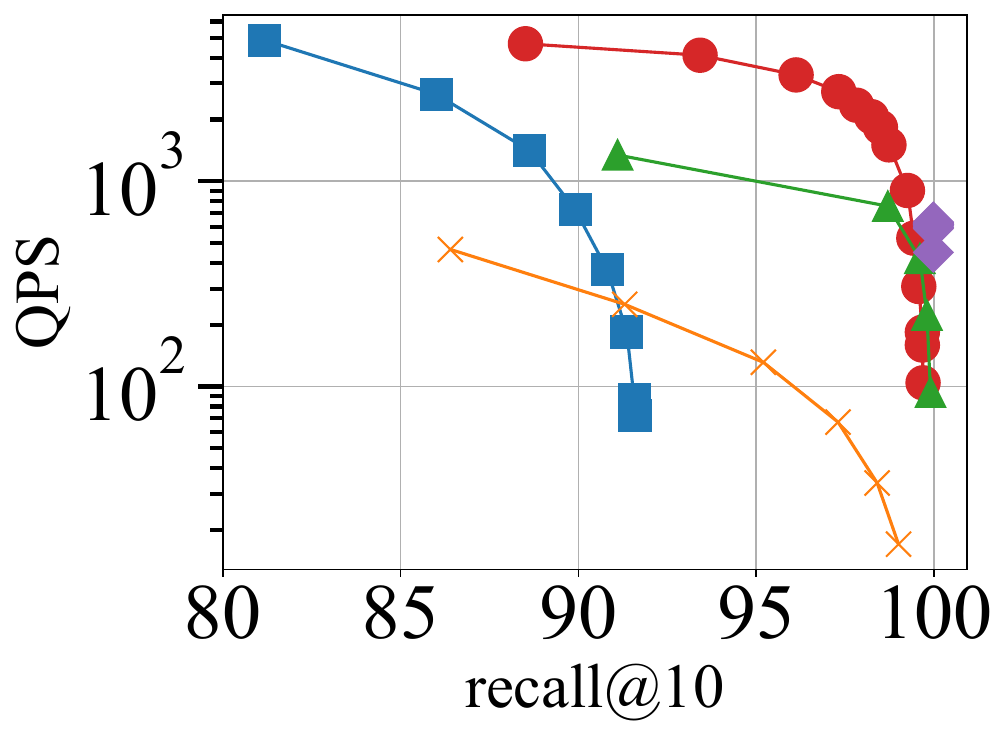}}
		\subfigure[tripclick, s=0.005]{\label{sfig:time_tripclick0.005}\includegraphics[width=0.195\linewidth]{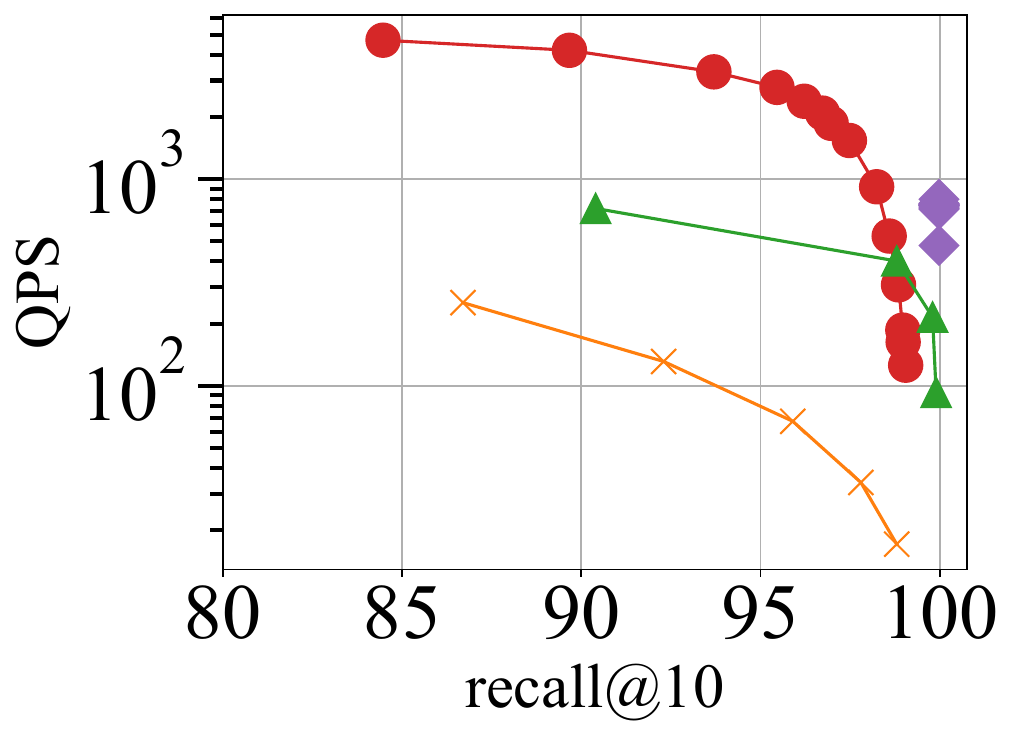}}
		\vspace*{-0.3cm}	\\
		\subfigure[gist1M, s=0.5]{\label{sfig:time_gist1M0.5}\includegraphics[width=0.195\linewidth]{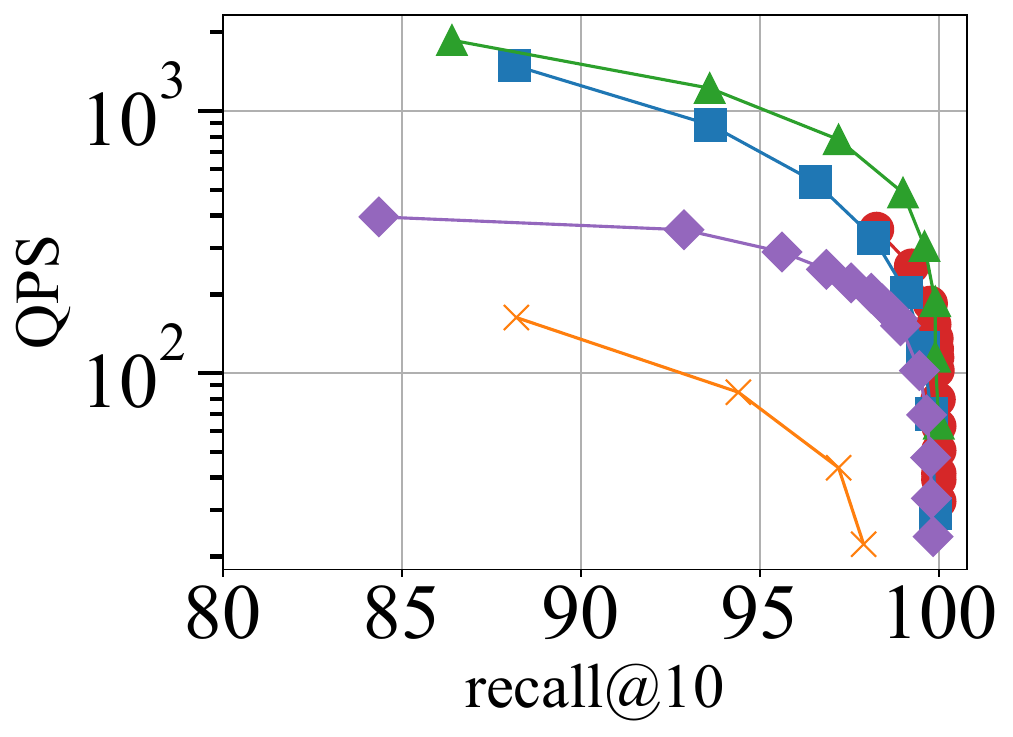}}
		\subfigure[gist1M, s=0.1]{\label{sfig:time_gist1M0.1}\includegraphics[width=0.195\linewidth]{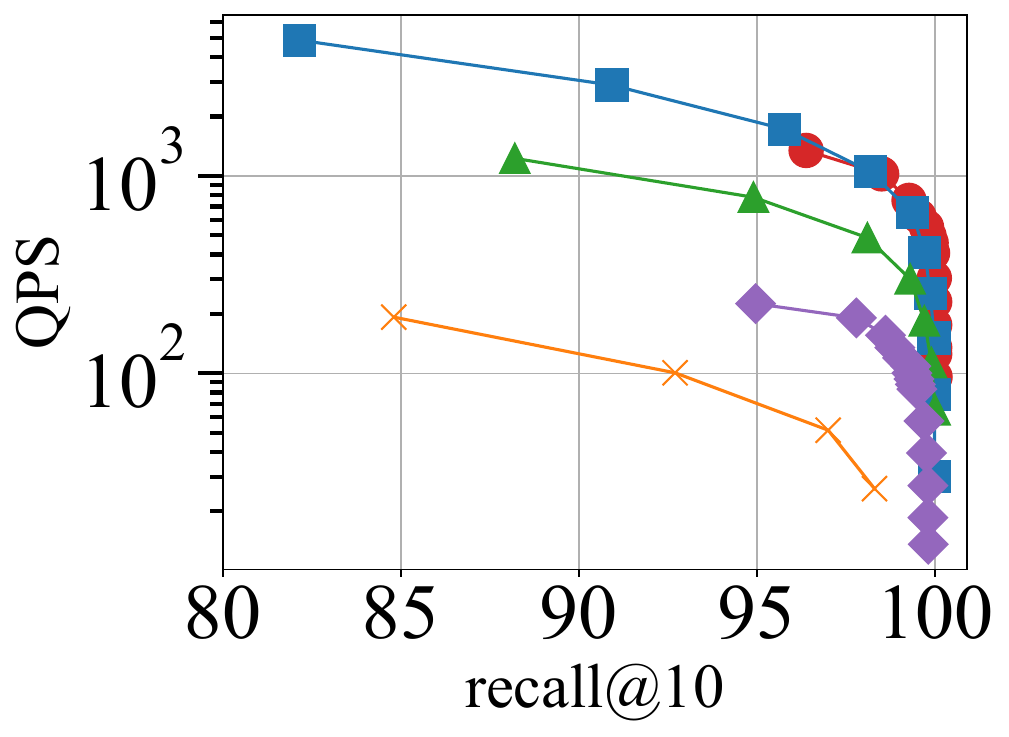}}
		\subfigure[gist1M, s=0.05]{\label{sfig:time_gist1M0.05}\includegraphics[width=0.195\linewidth]{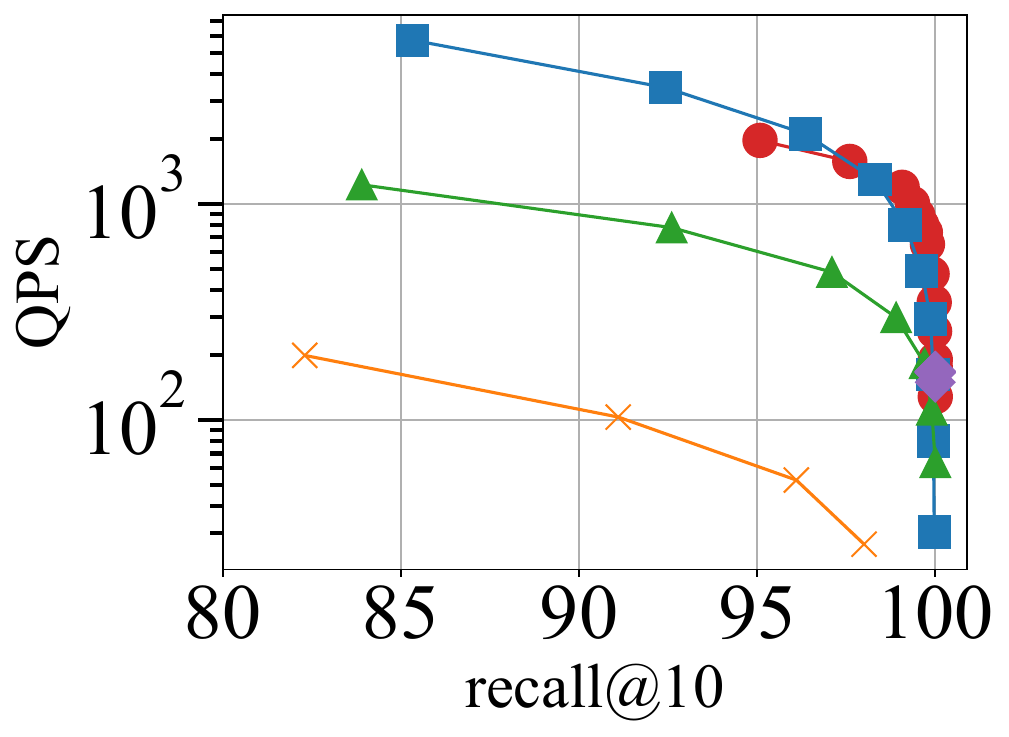}}       
		\subfigure[gist1M, s=0.01]{\label{sfig:time_gist1M0.01}\includegraphics[width=0.195\linewidth]{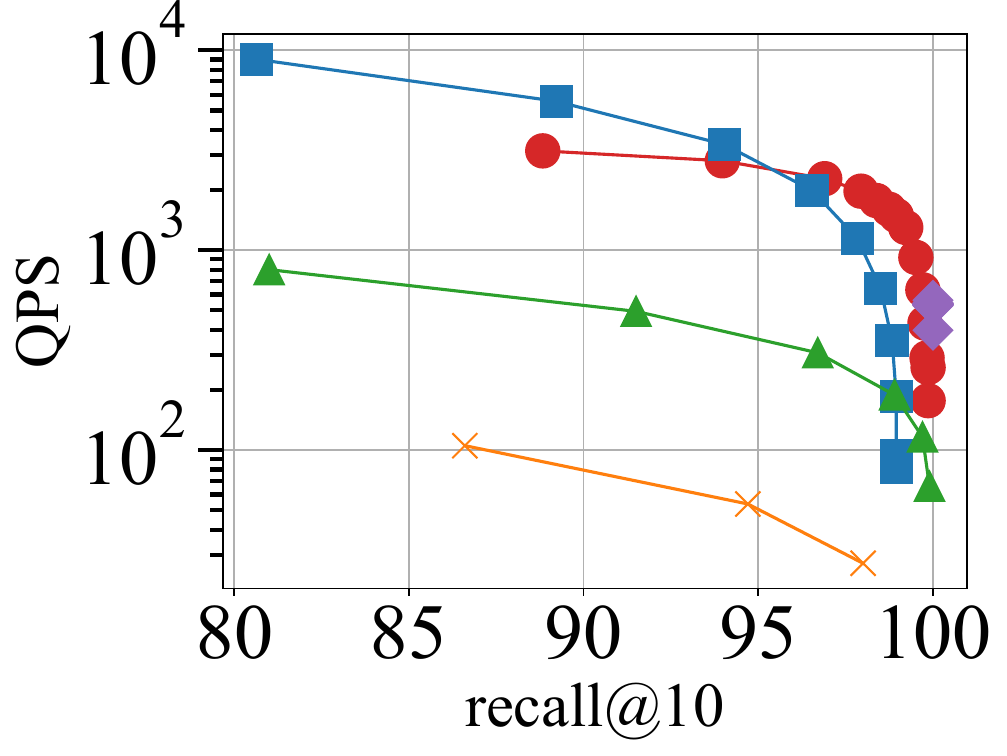}}       
		\subfigure[gist1M, s=0.005]{\label{sfig:time_gist1M0.005}\includegraphics[width=0.195\linewidth]{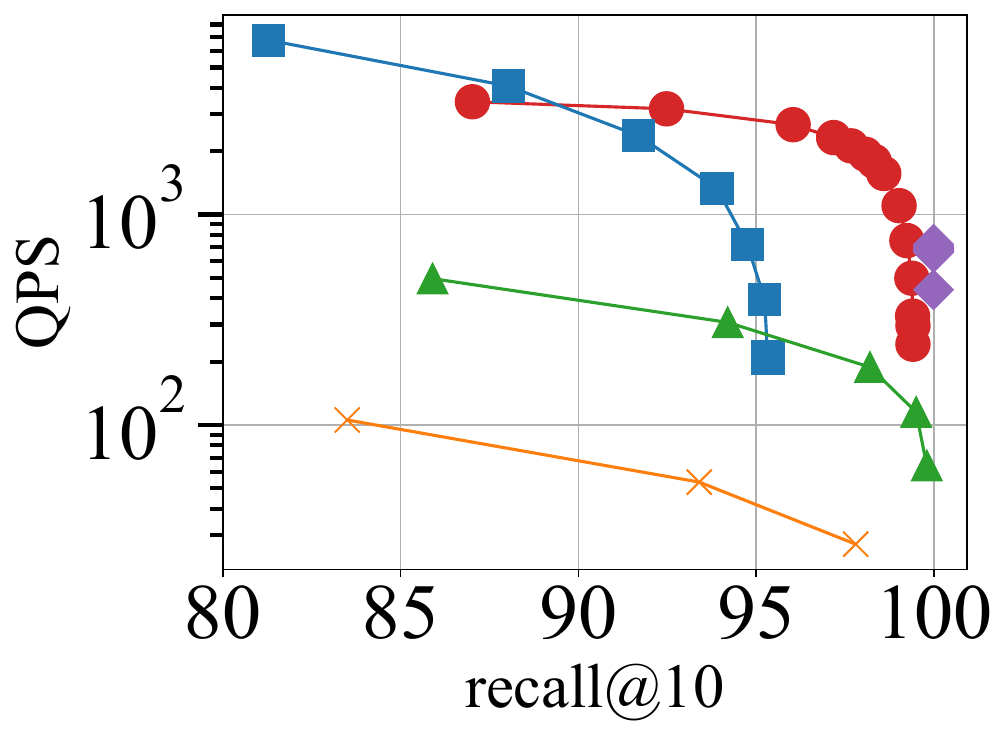}}   	\\

	\end{center}
	\vspace*{-0.3cm}
	\caption{Recall@10 vs QPS of different methods under various selectivity}
	\label{fig:recall_qps_varselectivity} 
	\vspace*{-0.2cm}
\end{figure*}

\stitle{Exp-2: Comparison of index size.}
Figure~\ref{fig:index_size} compares the index storage requirements of our proposed \mci against the baselines. Note that the index size for the Milvus implementation is not directly measurable \cite{MochengSurvey} and is thus approximated by the official sizing tool \url{https://milvus.io/zh-hant/tools/sizing}.

Our \mci demonstrates significant storage efficiency, requiring at most 23\% of the space consumed by \acorn or HNSW across all datasets. When compared to the quantization-based IVFPQ, \mci achieves the smallest index size on 6 out of 8 datasets. On the remaining two, its size is within a factor of $1.6\times$ that of IVFPQ. This reduction stems from the fundamental difference in structure: while proximity graph-based indices (like HNSW and \acorn) must store a neighbor list for every node, \mci stores only the $\tau$-Maximal Clique Cover (\tmcc), which inherently compresses the topological information. Table~\ref{tab:datasets} presents the ratio of the number of integers in \mci and $n$, which not surpass 46.29 for all datasets. These results highlight the storage superiority of \mci, reinforcing its practical viability for large-scale AFANNS deployments.

\stitle{Exp‑3: Comparison of index construction time (ICT).}
Figure~\ref{fig:TTI} compares index construction times of different methods. For \mci, we report two indexing times: “Clique” (building \mci from a pre‑built $k$-NNG) and “Clique+KNNG” (end‑to‑end time, including $k$-NNG construction via NN‑Descent \cite{NN-Decent}). The end‑to‑end construction of \mci is faster than \acorn, requiring only 43\% of \acorn’s time on average. Although HNSW builds more quickly, \mci’s one‑time indexing overhead is justified by its substantially better query performance—especially in challenging low‑selectivity and high‑recall regimes—delivering a favorable trade‑off between construction cost and query efficiency.

\stitle{Exp-4: Results under varying query selectivity.}
To evaluate robustness across selectivity, we tested three representative datasets (Figure~\ref{fig:recall_qps_varselectivity}); trends on other datasets are similar.

Compared to \acorn, \mci is particularly stronger in high‑recall, low‑selectivity regimes. \acorn remains competitive when $s \ge 0.05$ and target recall $< 90\%$. However, for Recall@10 $\ge 95\%$ and $s \le 0.01$, \mci consistently outperforms \acorn on every dataset and achieves a higher overall recall ceiling. Against Milvus‑HNSW, \mci delivers higher throughput at comparable recall, especially at higher selectivities. On \texttt{glove2.2m} with $s=0.5$ (Figure~\ref{sfig:time_glove2.2m0.5}), \mci reaches $>95\%$ recall, a level Milvus‑HNSW cannot attain. At low selectivity ($s \le 0.01$), Milvus‑HNSW can achieve extremely high recall ($\ge 99.9\%$) but at a high latency cost. For example, on \texttt{tripclick} with $s=0.01$, Milvus‑HNSW reaches 99.98\% recall at 614 QPS, while \mci offers more practical trade‑offs: 96\% recall at 3147 QPS ($5\times$ faster) and 98.8\% recall at 1499 QPS ($2.4\times$ faster). 

In summary, \mci adapts robustly across a wide selectivity range, offering an efficient balance between high recall and high throughput, especially in challenging low‑selectivity, high‑recall scenarios.


\begin{figure}[t!]
	\vspace*{-0.3cm}
	\begin{center}
	\includegraphics[width=0.99\linewidth]{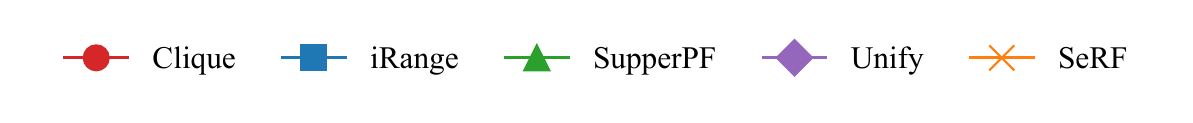}\vspace*{-0.5cm}\\
\subfigure[sift1M]{\label{sfig:time_sift1M_range}\includegraphics[width=0.32\linewidth]{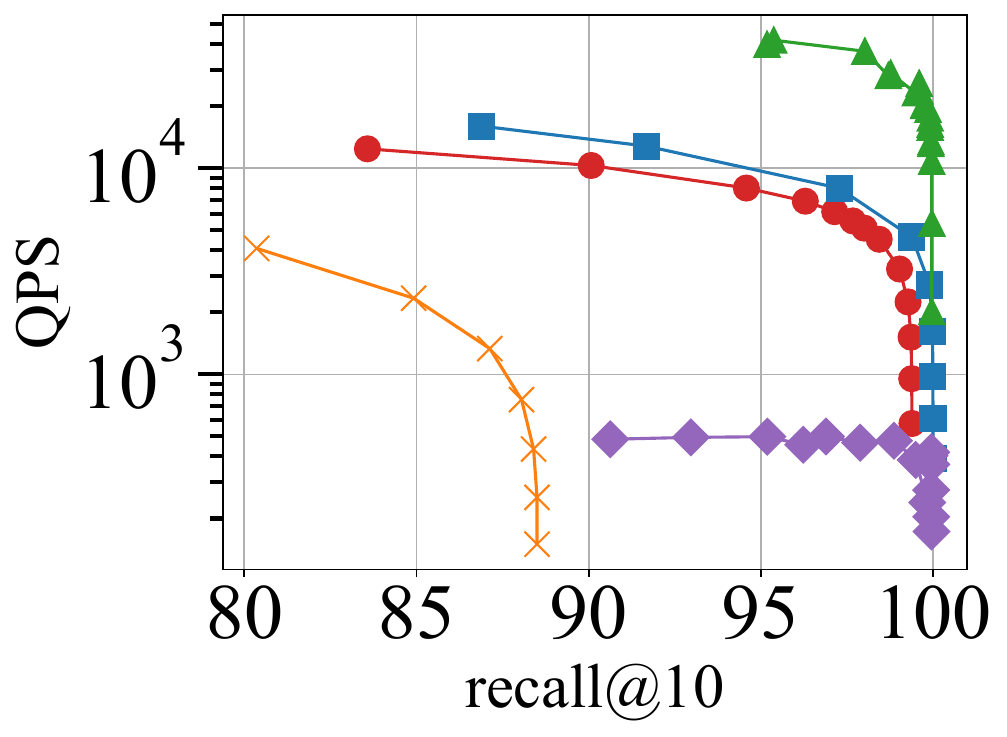}}
\subfigure[glove2.2m]{\label{sfig:time_glove2.2m_range}\includegraphics[width=0.32\linewidth]{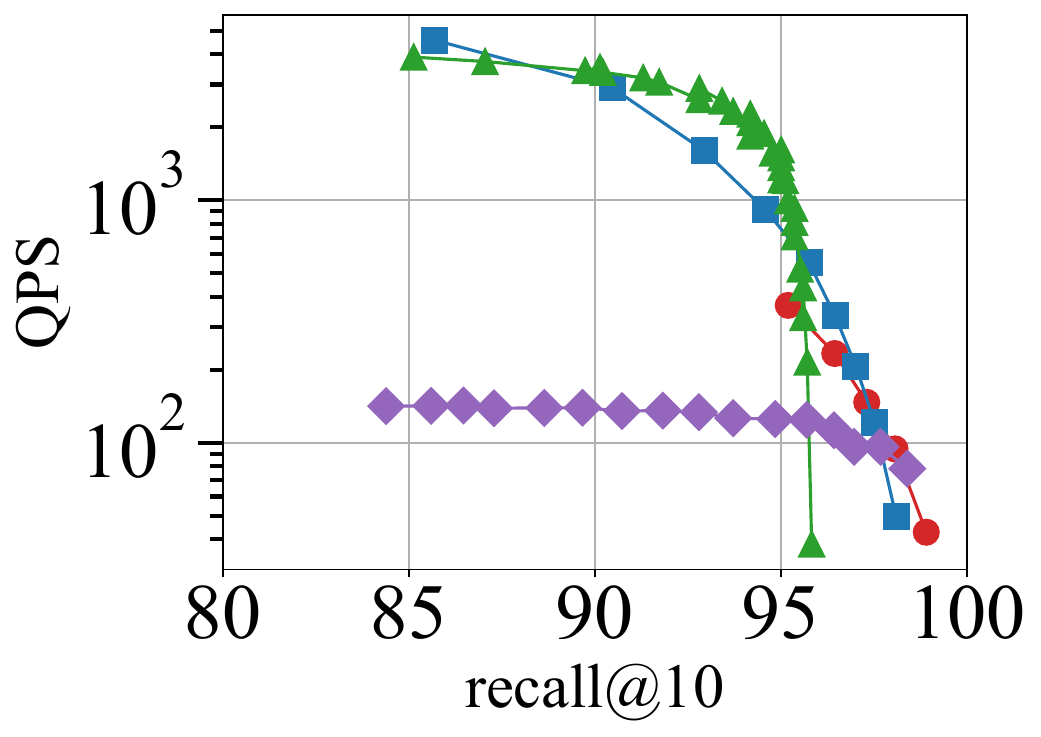}}
\subfigure[nuswide]{\label{sfig:time_nuswide_range}\includegraphics[width=0.32\linewidth]{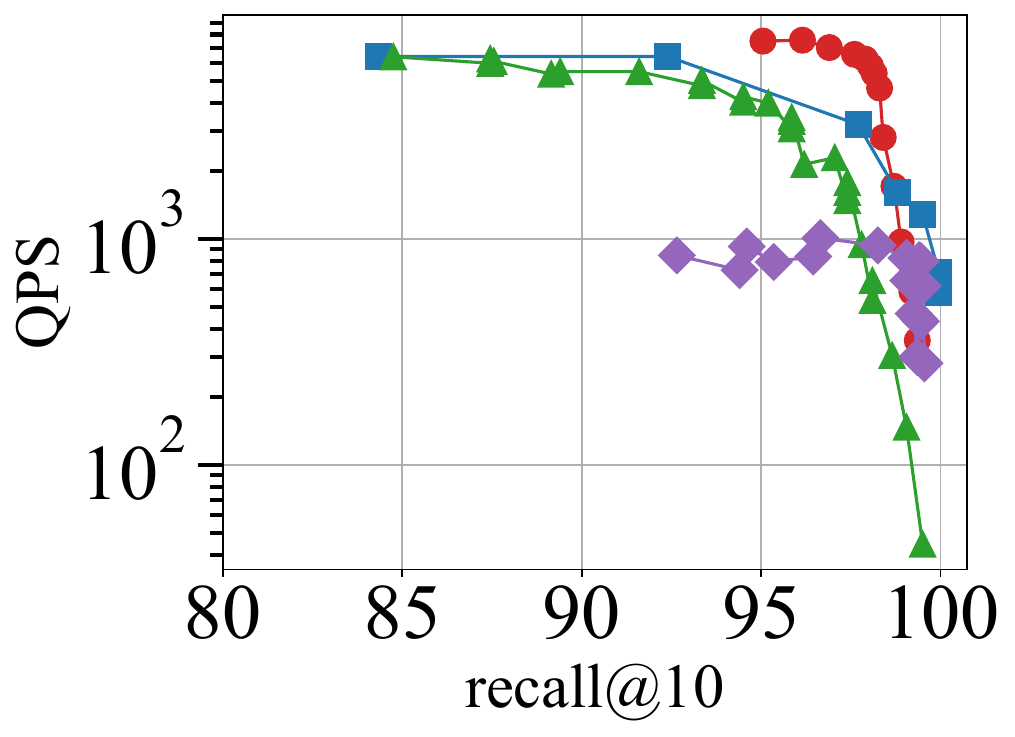}}
	\end{center}
\vspace*{-0.2cm}
\caption{Results of different methods on range filtering}
\label{fig:recall_qps_range} 
\vspace*{-0.4cm}
\end{figure}

\stitle{Exp‑5: Results on range filtering.}
Figure~\ref{fig:recall_qps_range} shows the range‑filtering performance. Each data vector has a numeric feature, and each query specifies a range; query selectivities are controlled at values \( \{0.3, 0.15, 0.07, 0.03, 0.015, 0.007, 0.003, 0.001\} \) using the method in \cite{MochengSurvey}. We report results on three representative datasets.
On \texttt{sift1M} (Figure~\ref{sfig:time_sift1M_range}), \mci delivers competitive performance, closely matching the state‑of‑the‑art specialized baselines (iRange and SuperPF) in the high‑recall regime—a trend observed across most datasets. On more challenging datasets like \texttt{glove2.2m} (Figure~\ref{sfig:time_glove2.2m_range}) and \texttt{nuswide} (Figure~\ref{sfig:time_nuswide_range}), \mci not only remains competitive but also shows a slight advantage at high recall. Despite the fact that the baseline methods build indexes specifically optimized for numerical range filtering, our general‑purpose \mci achieves comparable or better efficiency, especially when high recall is required. This underscores the robustness and wide applicability of our approach.

\comment{
\stitle{Exp-4: Results of range filtering.}
The performance of range filtering is in Figure~\ref{fig:recall_qps_range}. Each data vector has a number as feature and each query has a range. Query selectivities are controlled across a set of values: $ \{0.3, 0.15, 0.07, 0.03, 0.015, 0.007, 0.003, 0.001\} $ by the methods in \cite{MochengSurvey}. We report results on three representative datasets.
On the \texttt{sift1M} dataset (Figure~\ref{sfig:time_sift1M_range}), \mci does not outperform the baseline methods iRange and SupperPF. A similar trend is observed across most other datasets. However, on \texttt{glove2.2m} (Figure~\ref{sfig:time_glove2.2m_range}) and \texttt{nuswide} (Figure~\ref{sfig:time_nuswide_range}), \mci achieves comparable performance and demonstrates an advantage at high recall.

Even though the three baseline methods build specialized indexes tailored to the numerical features, our method still maintains competitive or superior efficiency under high-recall conditions, showcasing its robustness and general applicability.
}

\begin{figure}[t!]
	\vspace*{-0.25cm}
	\begin{center}
\includegraphics[width=0.99\linewidth]{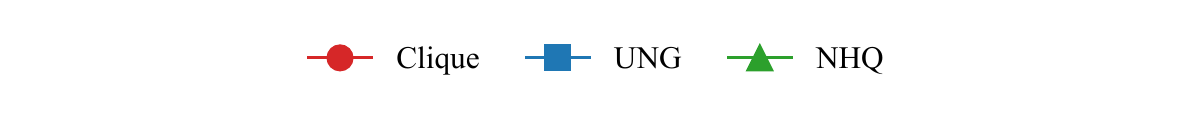}\vspace*{-0.5cm}\\
\subfigure[deep1M]{\label{sfig:time_deep1M_leq}\includegraphics[width=0.32\linewidth]{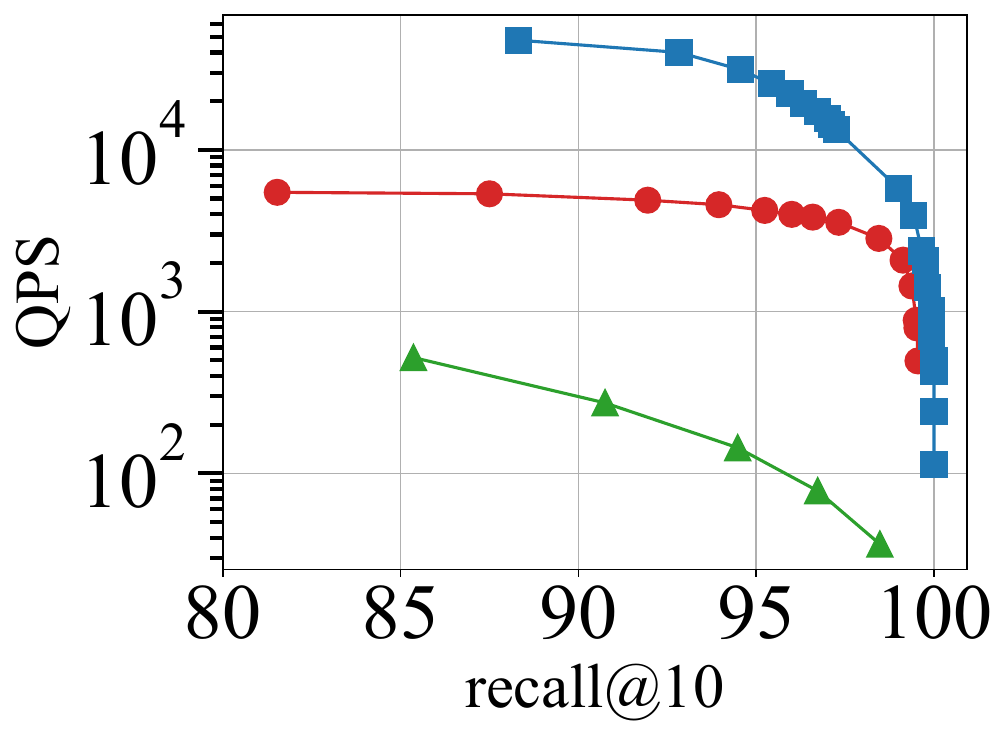}}
\subfigure[glove2.2m]{\label{sfig:time_glove2.2m_leq}\includegraphics[width=0.32\linewidth]{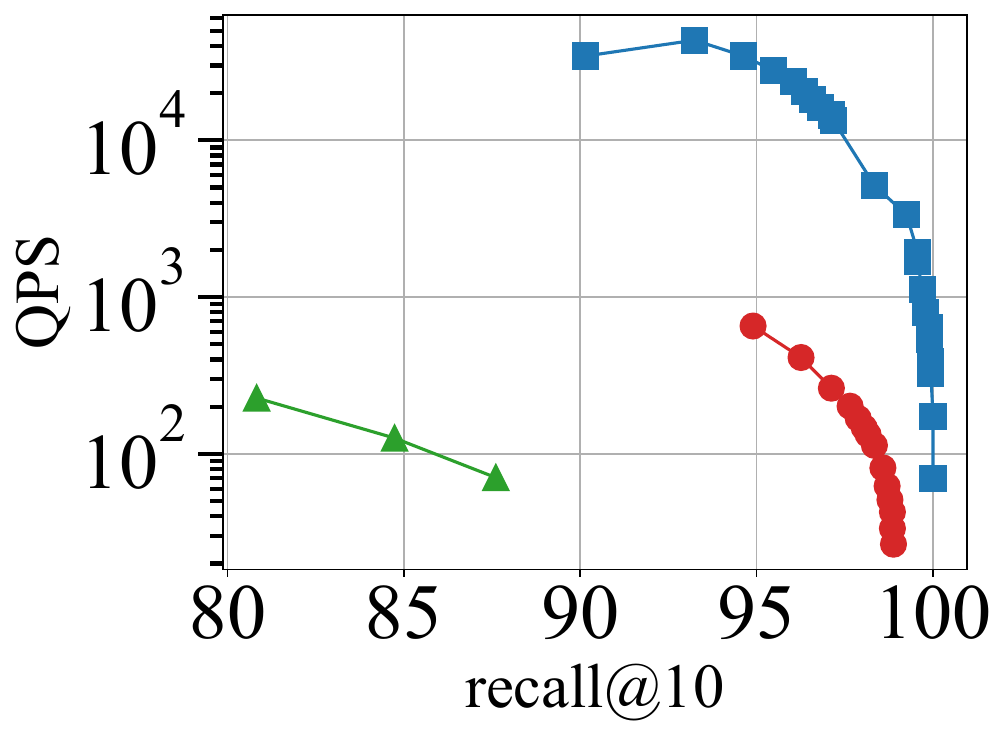}}
\subfigure[wit]{\label{sfig:time_wit_leq}\includegraphics[width=0.32\linewidth]{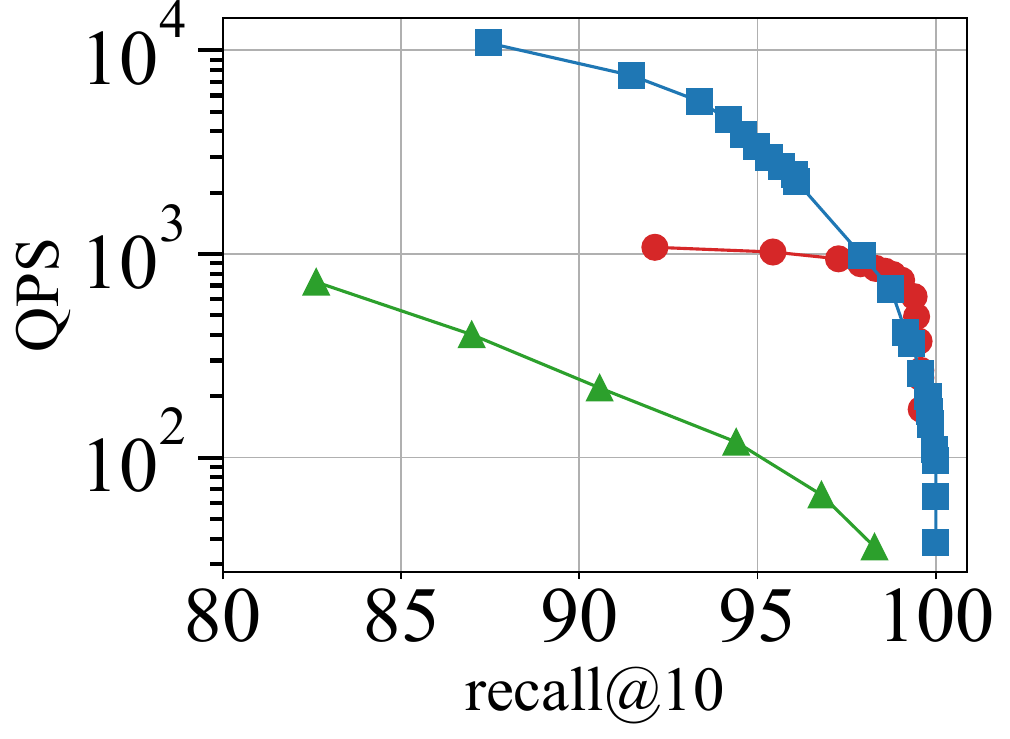}}
	\end{center}
	\vspace*{-0.3cm}
\caption{Results of different methods for keywords filtering}
\label{fig:recall_qpsleq} 
\vspace*{-0.4cm}
\end{figure}

\stitle{Exp-6: Results on keyword  filtering.}
Figure~\ref{fig:recall_qpsleq} shows the keyword filtering performance of different methods. In this experiment, each data vector has a set of keywords, and queries require exactly matching keywords. Among the baselines, UNG achieves the best overall performance, as its index is specifically designed for the sparsity patterns of keyword filtering \cite{UNG}. Despite this specialization, our general‑purpose \mci remains highly competitive, especially in the high‑recall regime on datasets like \texttt{deep1M} and \texttt{wit}, demonstrating its effectiveness even without domain‑specific optimizations.


\begin{figure}[t!]
	\begin{center}
		\includegraphics[width=0.99\linewidth]{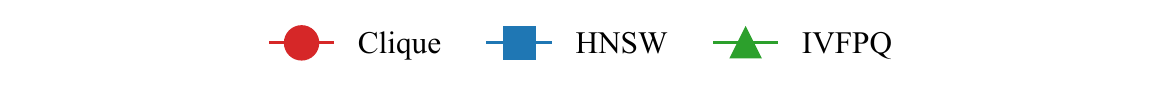}\\
		\subfigure[deep1M]{\label{sfig:nofilter_deep1M1.0}\includegraphics[width=0.32\linewidth]{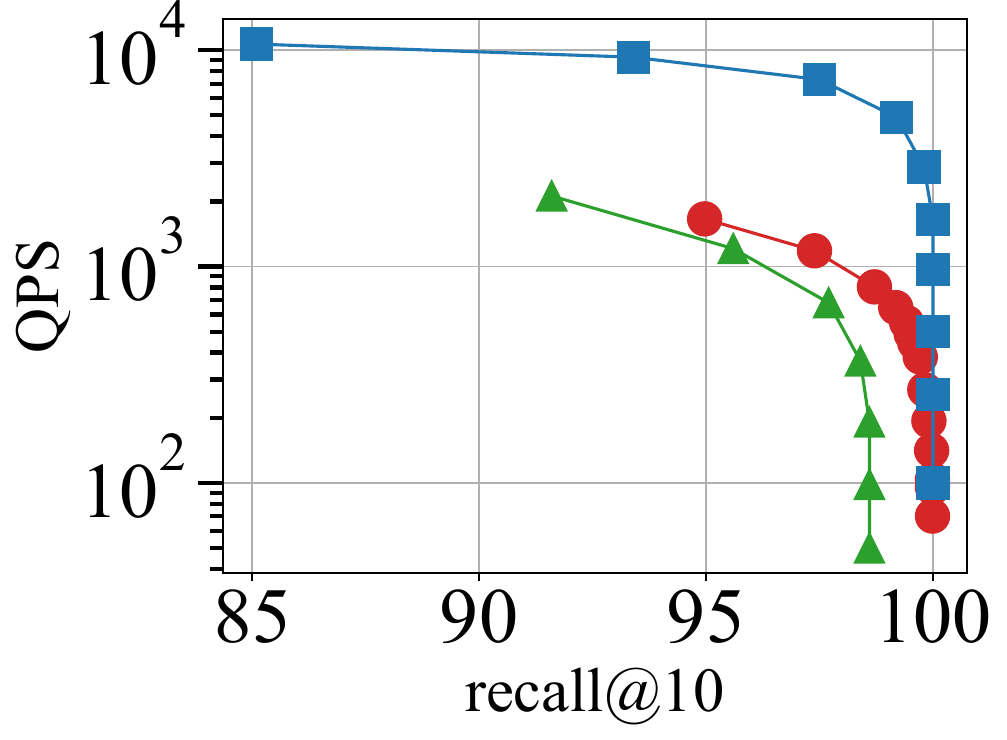}}
		\subfigure[gist1M]{\label{sfig:nofilter_gist1M1.0}\includegraphics[width=0.32\linewidth]{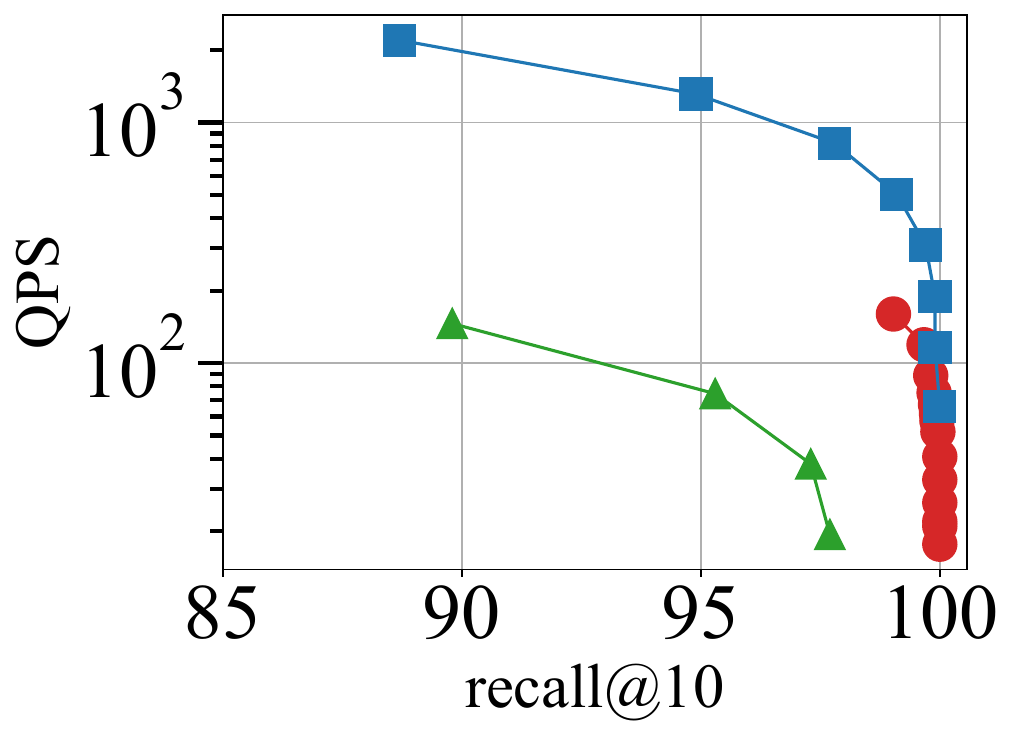}}
		\subfigure[glove2.2m]{\label{sfig:nofilter_glove2.2m1.0}\includegraphics[width=0.32\linewidth]{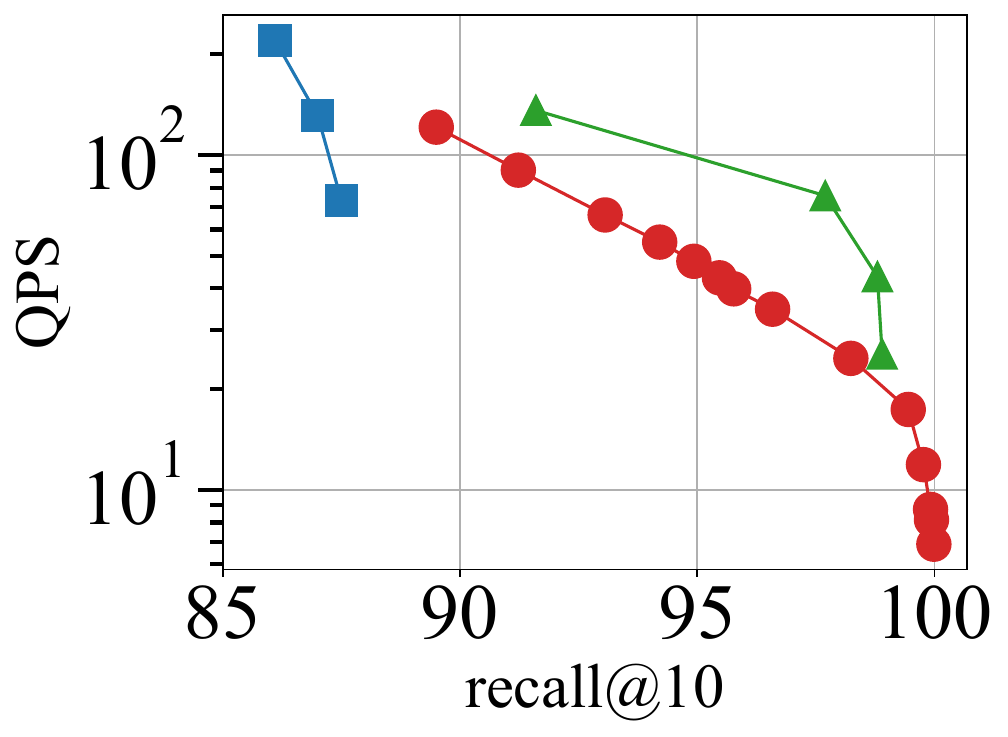}}
	\end{center}
	\caption{No filtering}
	\label{fig:nofilter} 
	
\end{figure}

\stitle{Exp-6: no filtering.} We compare the methods in a standalone ANNS setting. The HNSW and IVFPQ are implemented in the Faiss library \cite{FAISS}. The results are summarized in Figure~\ref{fig:nofilter}. On the deep1M dataset (Figure~\ref{sfig:nofilter_deep1M1.0}), HNSW achieves the highest query throughput, followed by our method \mci, with IVFPQ exhibiting the lowest performance. A similar ranking is observed on gist1M.  The trend differs on glove2.2m (Figure~\ref{sfig:nofilter_glove2.2m1.0}), where IVFPQ demonstrates superior efficiency at moderate recall rates, outperforming both \mci and HNSW. In the high-recall regime (e.g., >98\%), \mci surpasses IVFPQ, while HNSW lags significantly. These results highlight that  \mci provides a more robust and balanced performance, consistently remaining competitive across diverse datasets. This demonstrates the general effectiveness of our approach.

\begin{figure}[t!]
	\begin{center}
\subfigure[deep1M]{\label{sfig:cmp_deep1M}\includegraphics[width=0.32\linewidth]{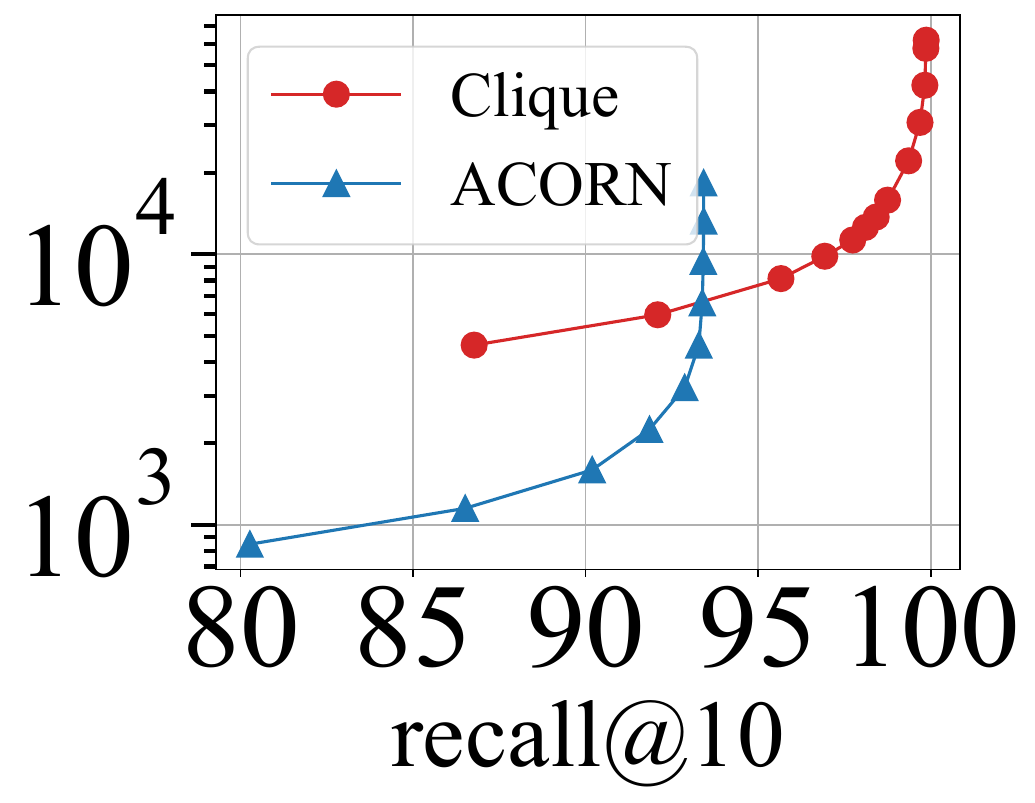}}
\subfigure[glove2.2m]{\label{sfig:cmp_glove2.2m}\includegraphics[width=0.32\linewidth]{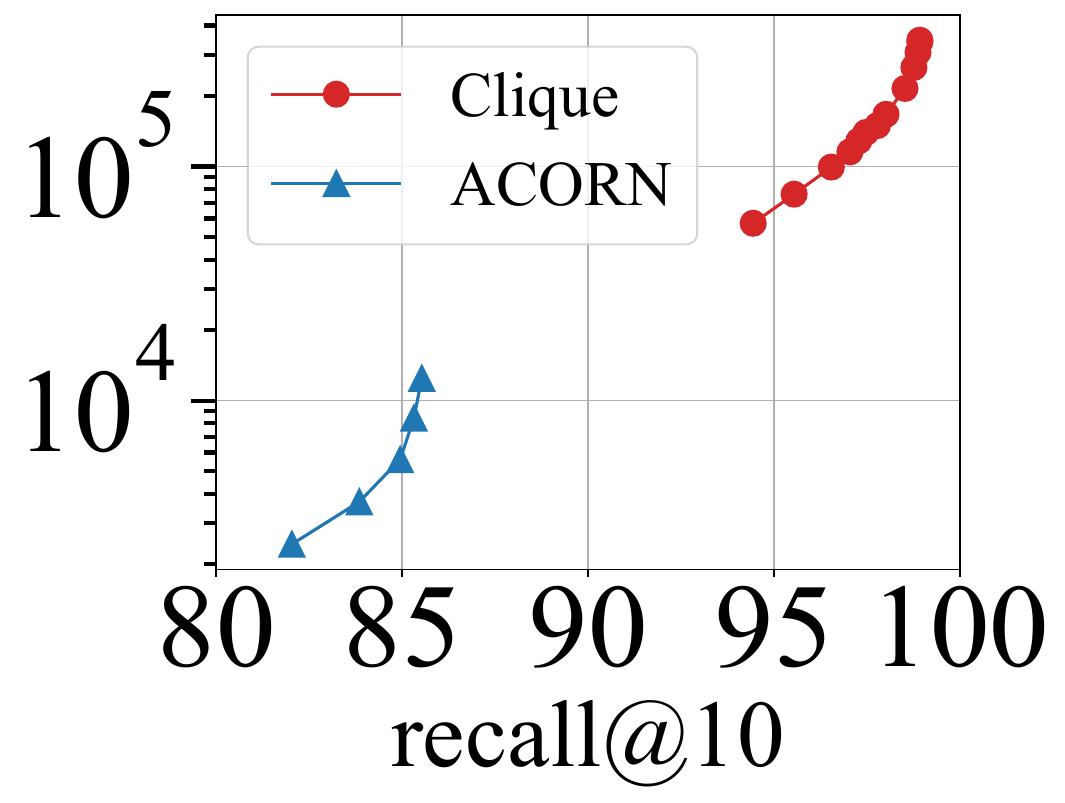}}
\subfigure[tripclick]{\label{sfig:cmp_tripclick}\includegraphics[width=0.32\linewidth]{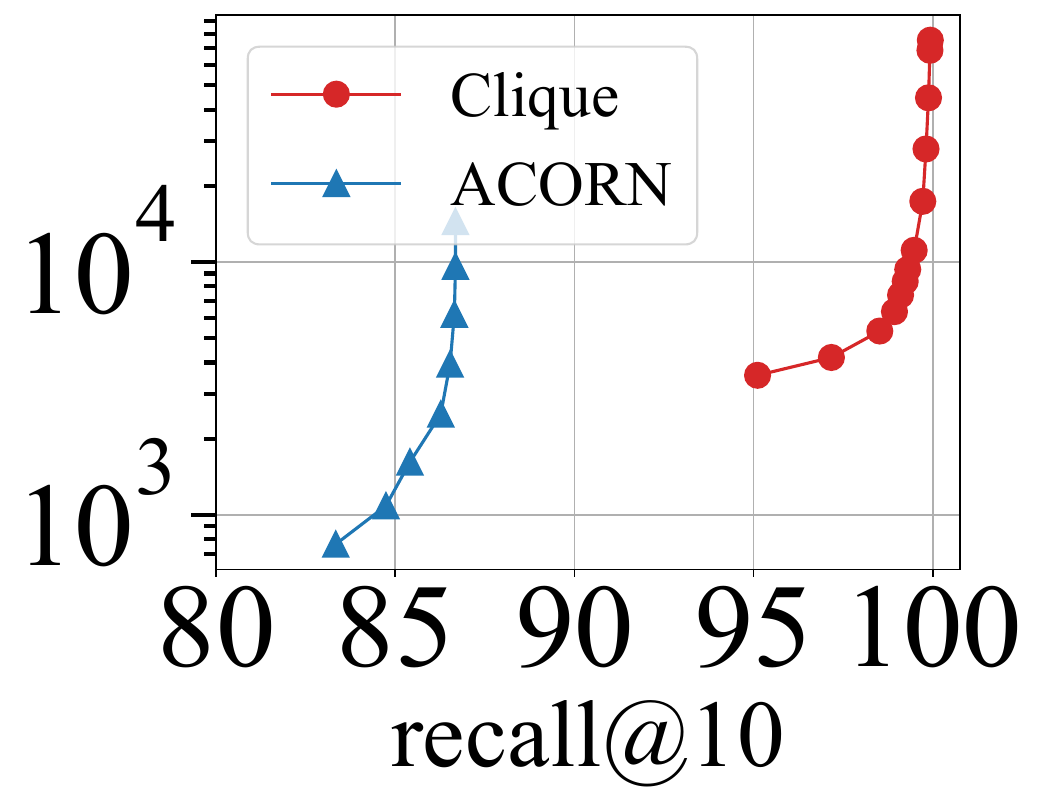}}
\end{center}
\caption{Recall vs distance computation times (y-axis is the times of distance computation)}
\label{fig:cmp} 
\end{figure}

\stitle{Exp-7: recall vs distance computation times.} To further analyze the efficiency of the search process, we compare the recall achieved against the number of distance computations required. This comparison, shown in Figure~\ref{fig:cmp}, is conducted under queries with mixed selectivities.  Similar trends were observed across other datasets. We primarily compare our method, \mci, with \acorn. Algorithms such as Faiss-HNSW, Milvus-HNSW, and IVFPQ are built upon the Faiss and Milvus libraries, which internally manage index traversal and distance calculations.

Consistent with the trends observed in prior results (e.g., Figure~\ref{fig:recall_qps}), \acorn demonstrates an advantage in scenarios requiring lower recall.  Conversely, our method \mci consistently achieves a significantly higher final recall rate. This indicates that \mci is suitable  for high-recall retrieval tasks.

\begin{figure}[t!]
	\vspace*{-0.25cm}
	\begin{center}
		\includegraphics[width=0.98\linewidth]{graph/legendQPS.pdf}\vspace*{-0.4cm}\\
		\subfigure[deep1M, k=50]{\label{sfig:K50_deep1M}\includegraphics[width=0.32\linewidth]{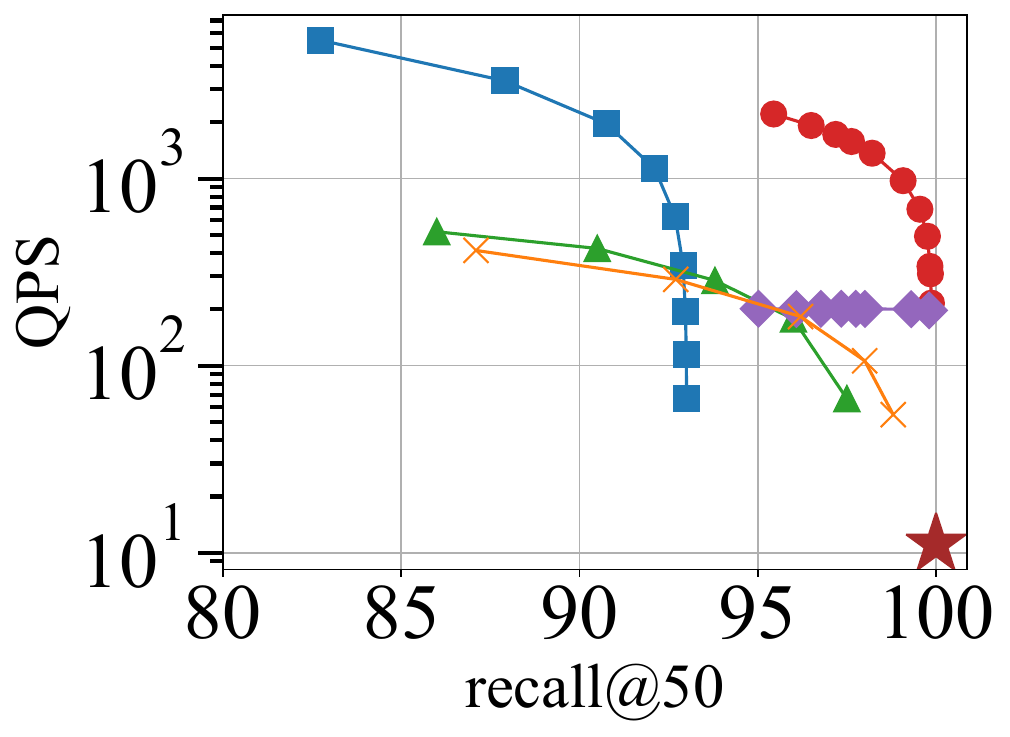}}
        \subfigure[tripclick, k=50]{\label{sfig:K50_tripclick}\includegraphics[width=0.32\linewidth]{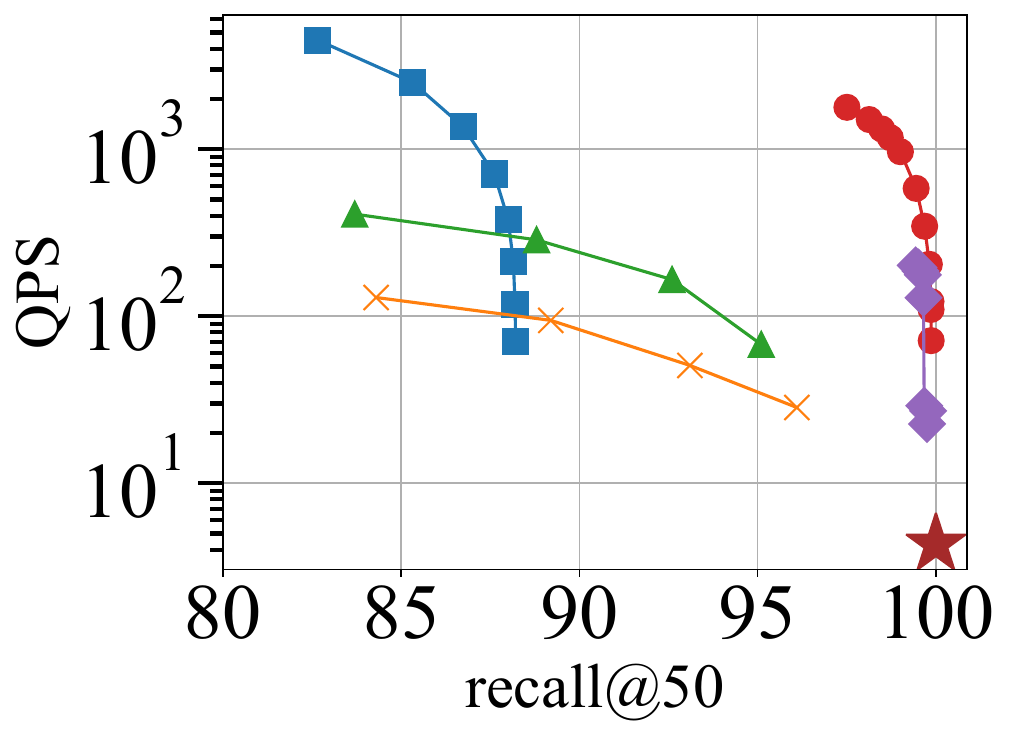}}
		\subfigure[gist1M, k=50]{\label{sfig:K50_gist1M}\includegraphics[width=0.32\linewidth]{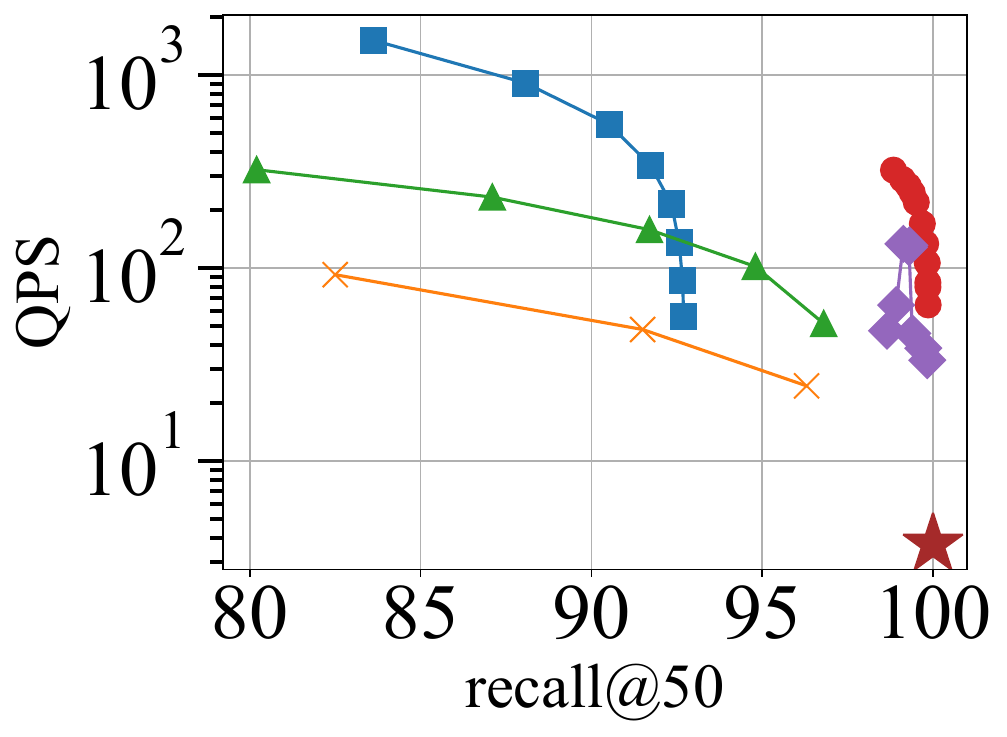}}
		
	\end{center}
	\vspace*{-0.4cm}
	\caption{Recall vs QPS of various methods for $k=50$}
	\label{fig:recall_qps_vark} 
	\vspace*{-0.3cm}
\end{figure}

\stitle{Exp-7: Sensitivity to result set size ($k$).}
To evaluate the impact of the target number of nearest neighbors ($k$), we extended our evaluation to $k=50$, as shown in Figure~\ref{fig:recall_qps_vark}. The overall trend indicates that the relative performance of all evaluated algorithms remains consistent across different $k$ values. Our method, \mci, maintains nearly identical performance profiles regardless of $k$. In contrast, \acorn exhibits minor degradation as $k$ increases. For instance, on the \texttt{deep1M} dataset at a throughput of $10^3$ QPS, the recall for \acorn drops slightly from approximately 93\% at $k=10$ (Figure~\ref{sfig:time_deep1M}) to about 92\% at $k=50$ (Figure~\ref{sfig:K50_deep1M}).
These results confirm that the performance characteristics of AFANNS are generally robust with varying $k$.

\begin{figure}[t!]
	\begin{center}
		\includegraphics[width=1.2\linewidth]{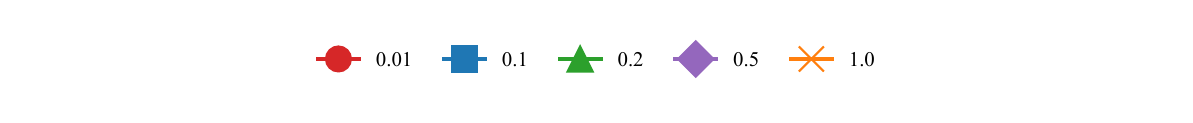}\vspace*{-0.5cm}\\
\subfigure[glove2.2m, s=0.5]{\label{sfig:vareps_glove2.2m0.5}\includegraphics[width=0.32\linewidth]{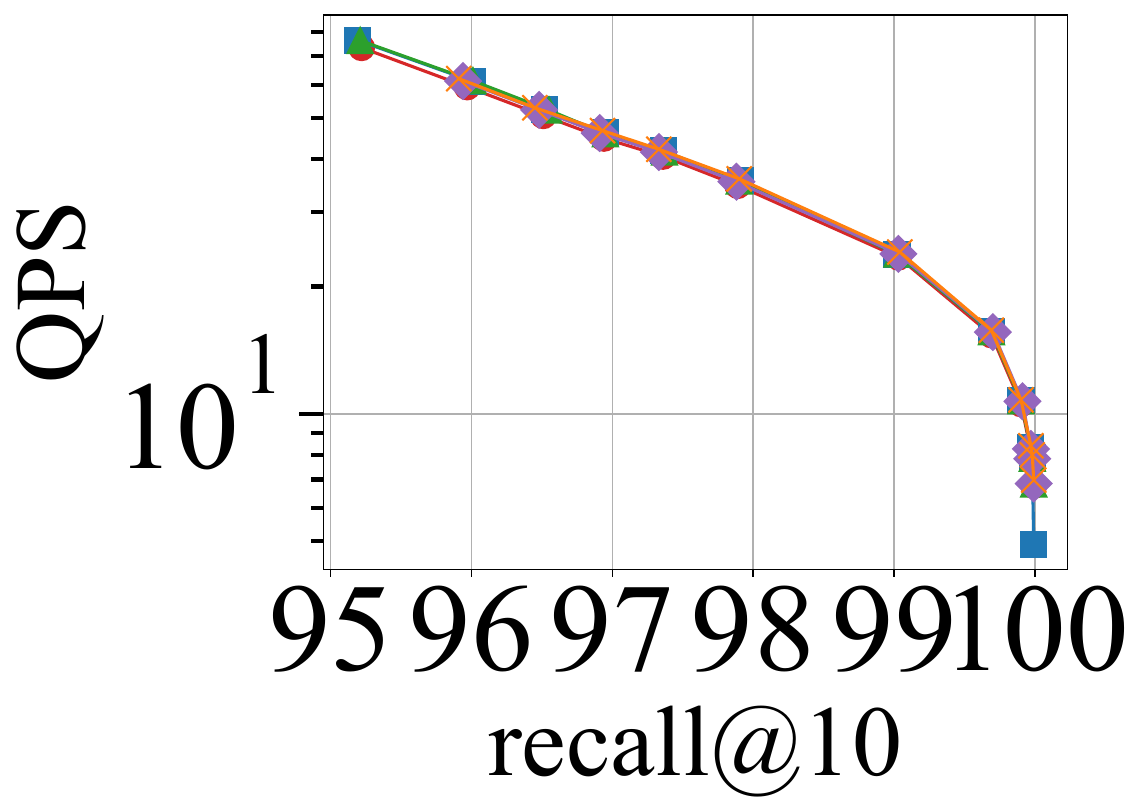}}
\subfigure[glove2.2m, s=0.05]{\label{sfig:vareps_glove2.2m0.05}\includegraphics[width=0.32\linewidth]{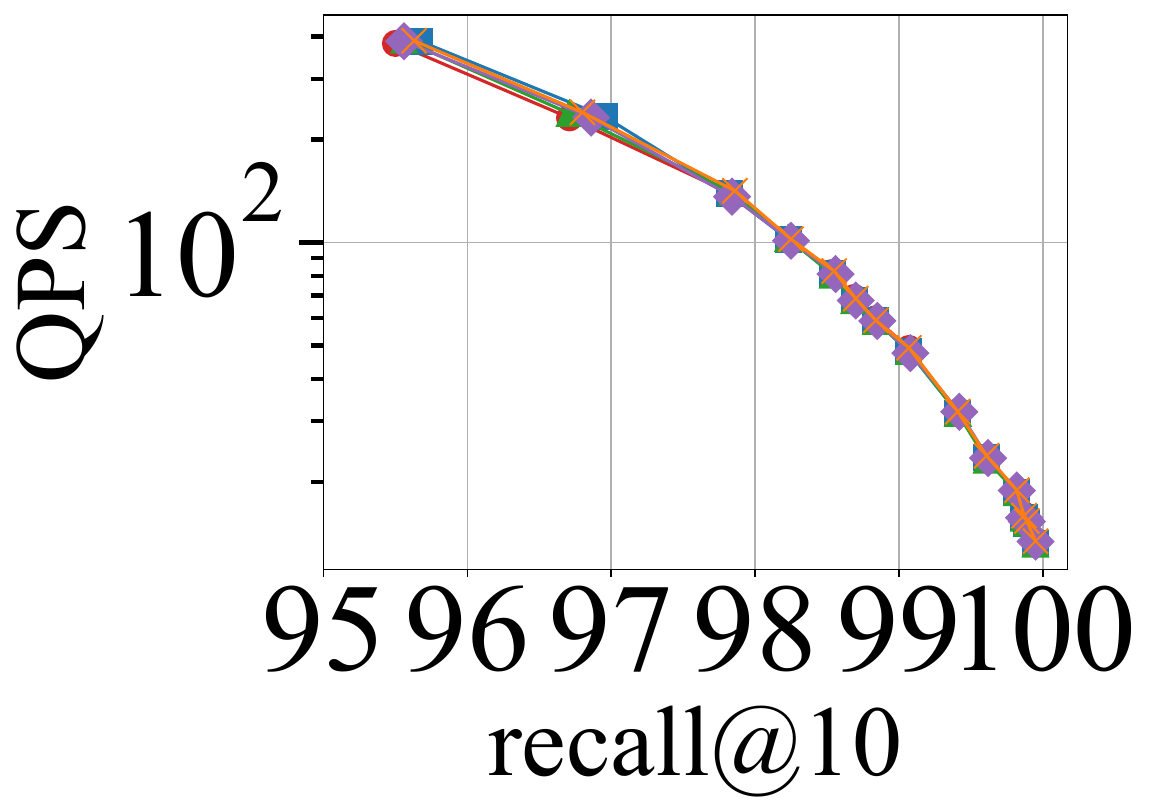}}
\subfigure[glove2.2m, s=0.005]{\label{sfig:vareps_glove2.2m0.005}\includegraphics[width=0.32\linewidth]{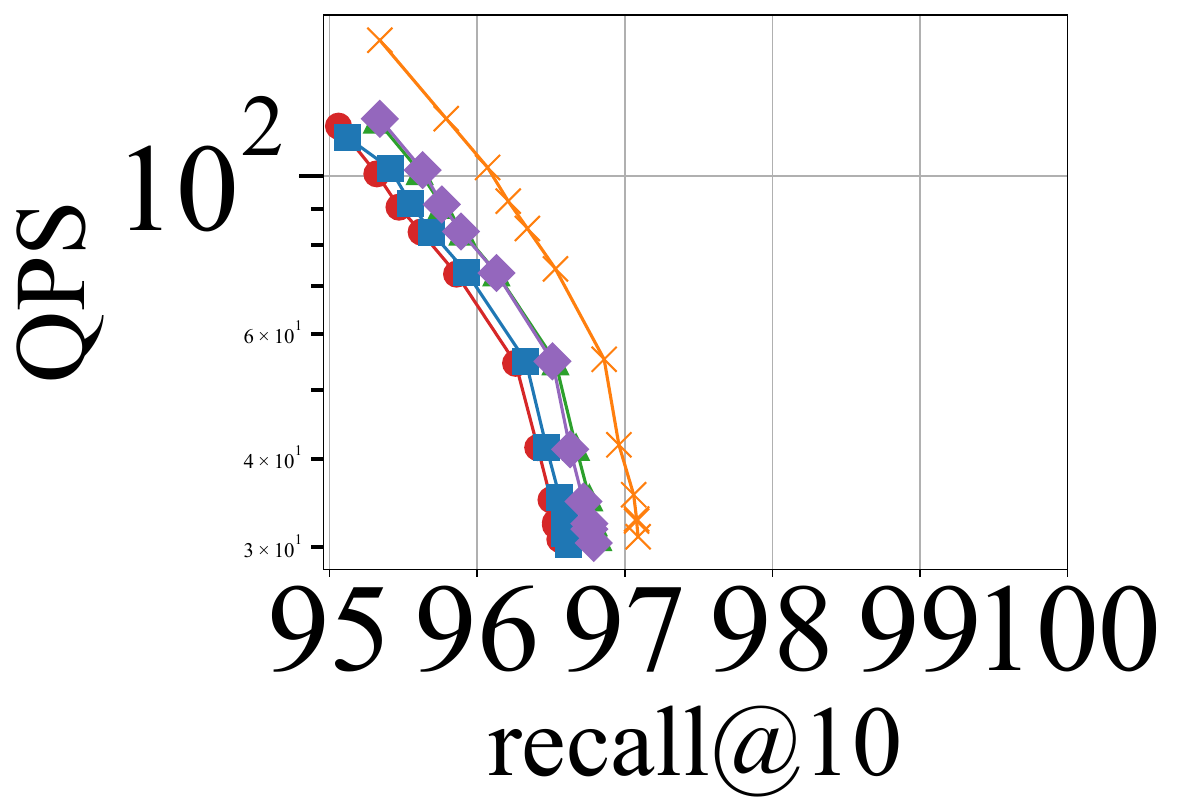}}\vspace*{-0.3cm}
\subfigure[tripclick, s=0.5]{\label{sfig:vareps_tripclick0.5}\includegraphics[width=0.32\linewidth]{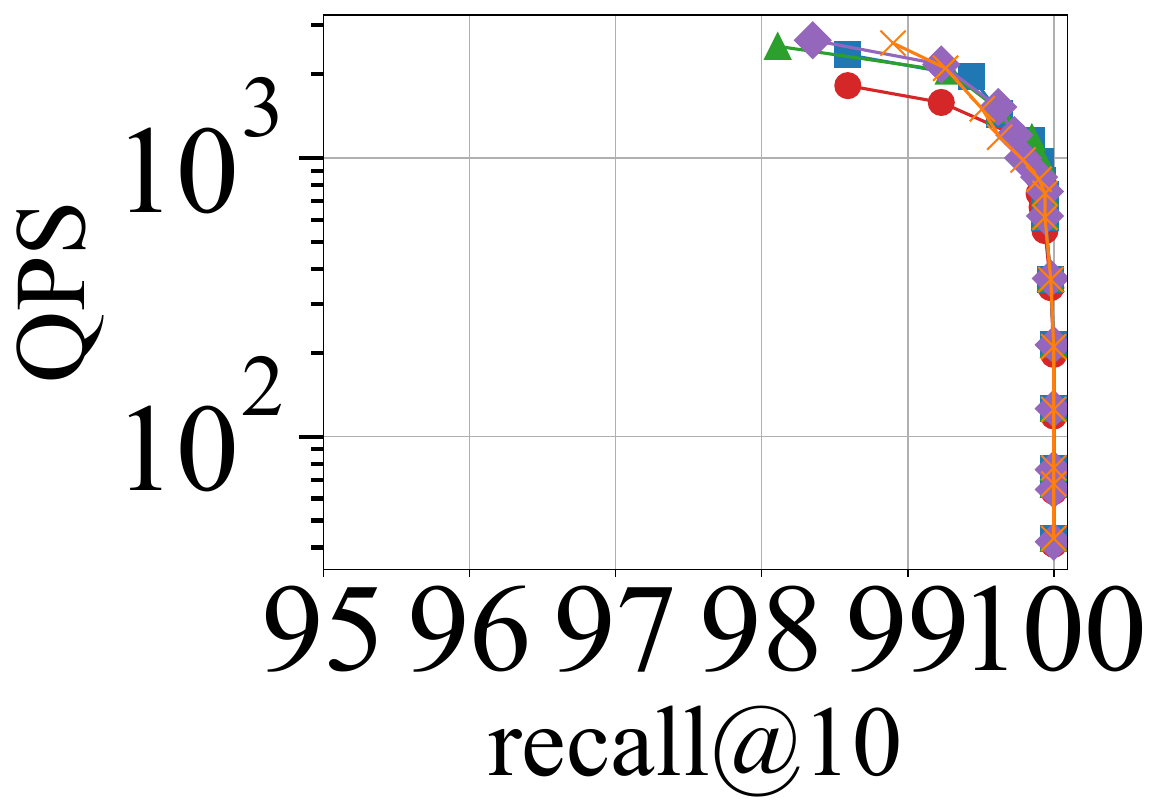}}
\subfigure[tripclick, s=0.05]{\label{sfig:vareps_tripclick0.05}\includegraphics[width=0.32\linewidth]{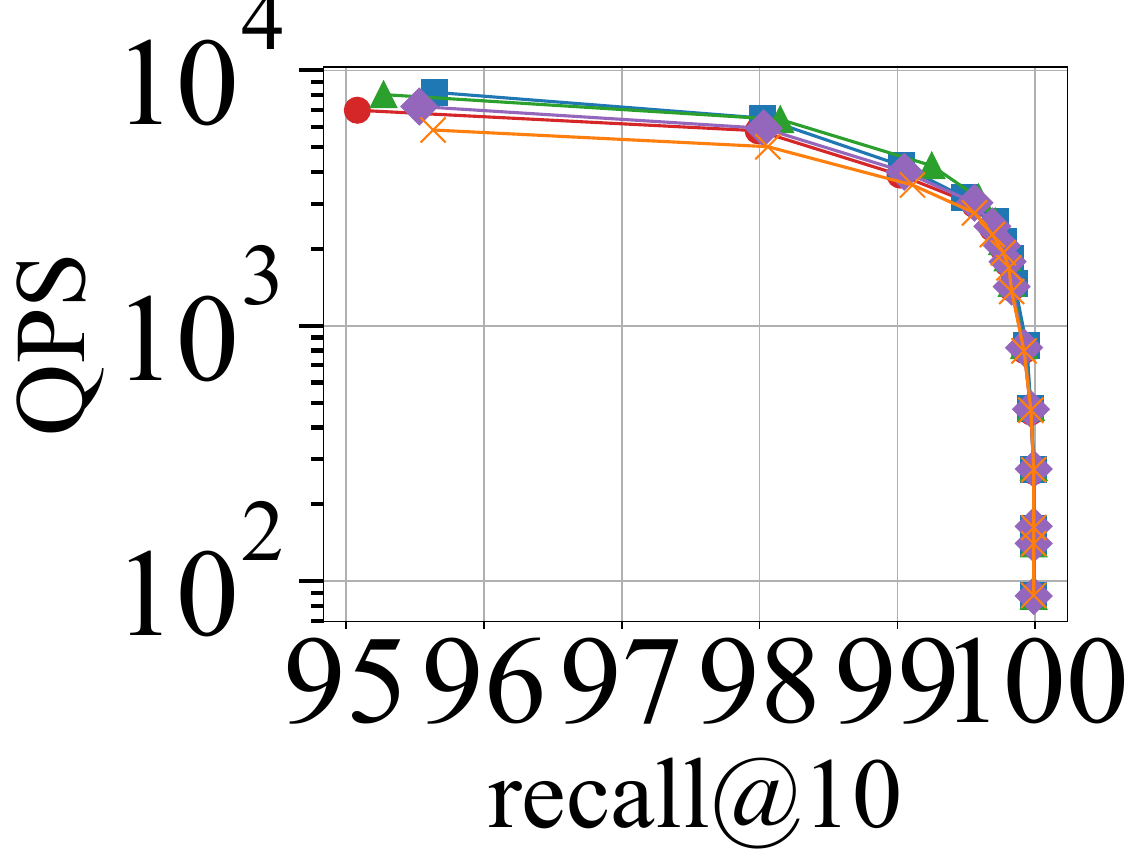}}
\subfigure[tripclick, s=0.005]{\label{sfig:vareps_tripclick0.005}\includegraphics[width=0.32\linewidth]{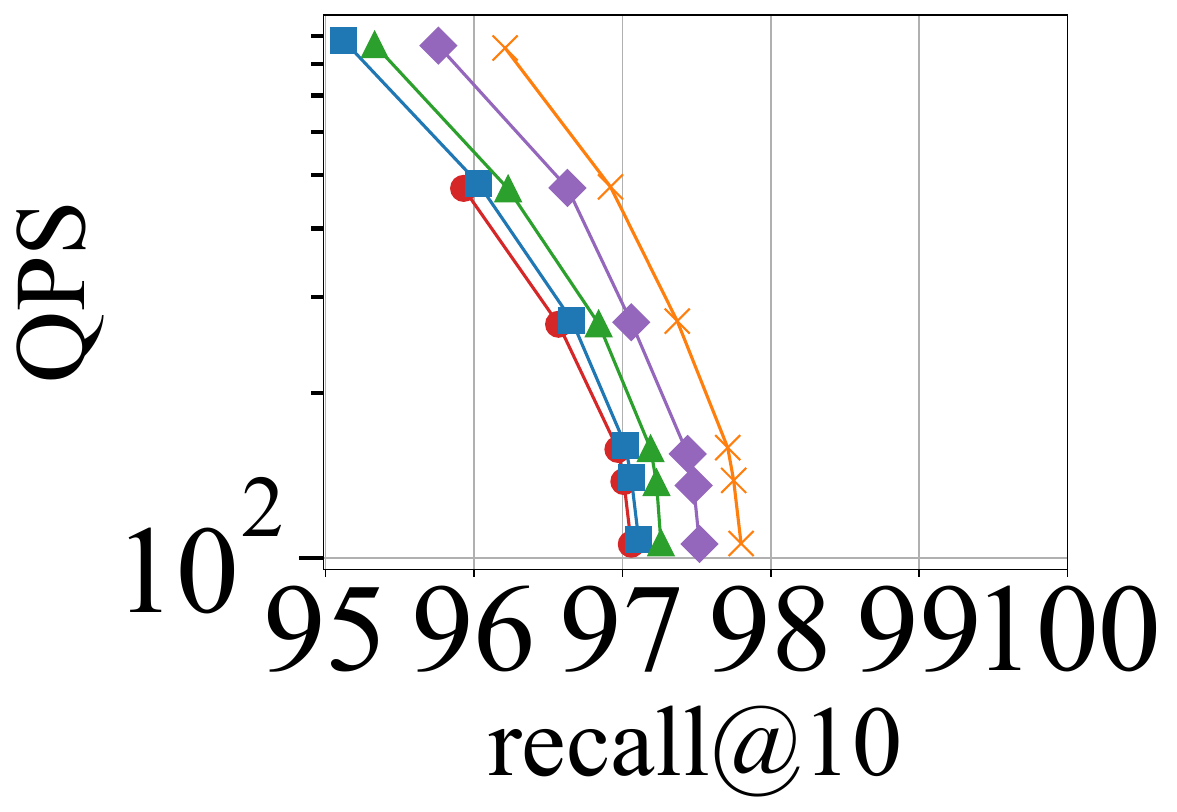}}
\end{center}
\vspace*{-0.3cm}
\caption{Results of our method with varying $\epsilon$}
\label{fig:varepsilon} 
\vspace*{-0.4cm}
\end{figure}

\stitle{Exp‑8: Sensitivity to sampling parameter $\epsilon$.}
We investigate the impact of $\epsilon$, which controls the initial seed count $\epsilon\sqrt{n}$ (Algorithm~\ref{alg:search}). Figure~\ref{fig:varepsilon} reveals a clear pattern: the optimal seed count is inversely correlated with query selectivity $s$. For high selectivity, a small $\epsilon$ suffices because the filtered subgraph is dense and well‑connected. For low selectivity, increasing $\epsilon$ yields significant gains. On \texttt{tripclick} with $s=0.005$, raising $\epsilon$ from $0.01$ to $1$ lifts Recall@10 from $97.06\%$ to $97.8\%$—a non‑trivial improvement in the challenging high‑recall regime. These results suggest a practical tuning heuristic: set $\epsilon$ inversely proportional to the expected selectivity.

\begin{figure*}[t!]
	\vspace*{-0.3cm}
	\begin{center}
\includegraphics[width=0.6\linewidth]{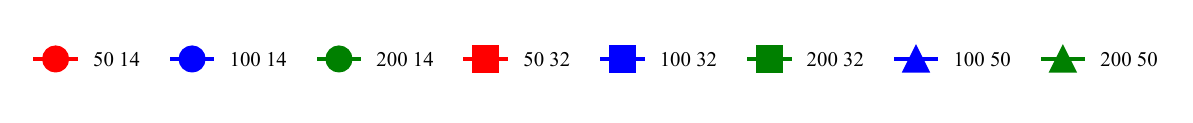}\vspace*{-0.5cm}\\
\subfigure[sift1M]{\label{sfig:vareps_sift1M}\includegraphics[width=0.195\linewidth]{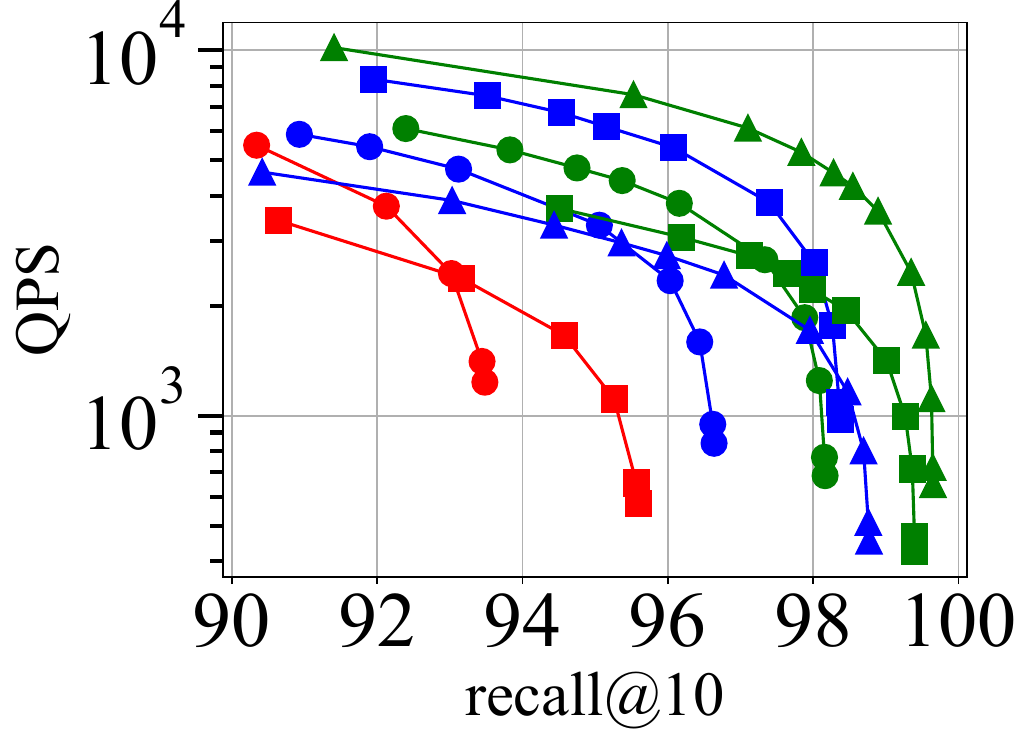}}
\subfigure[deep1M]{\label{sfig:vareps_deep1M}\includegraphics[width=0.195\linewidth]{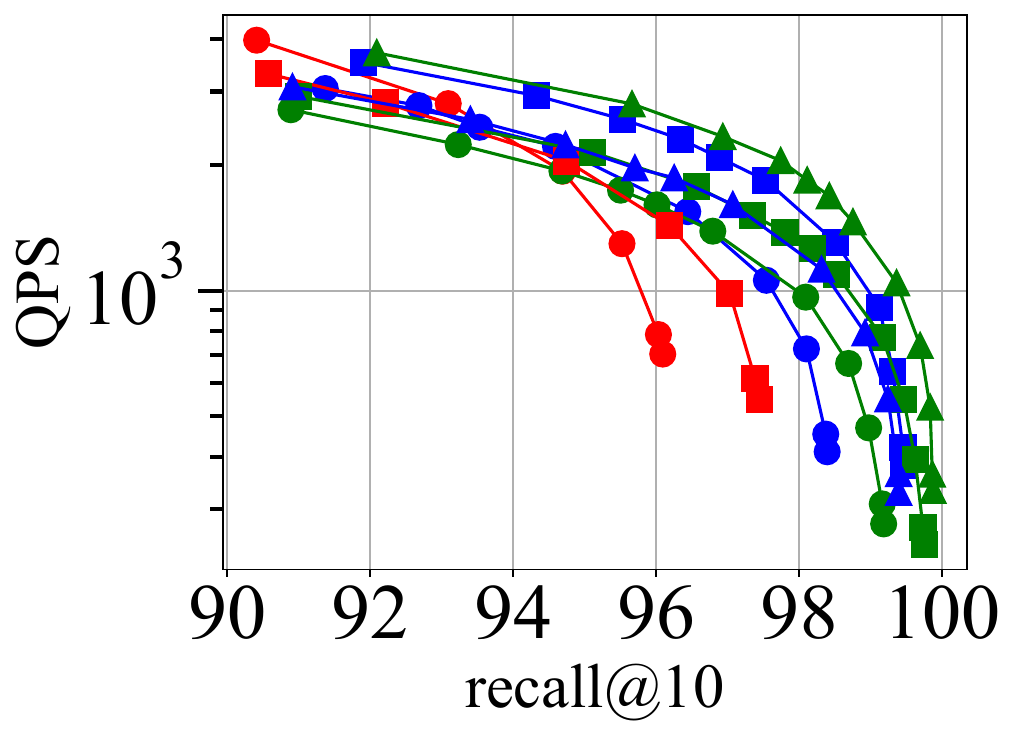}}
\subfigure[glove2.2m]{\label{sfig:vareps_glove2.2m}\includegraphics[width=0.195\linewidth]{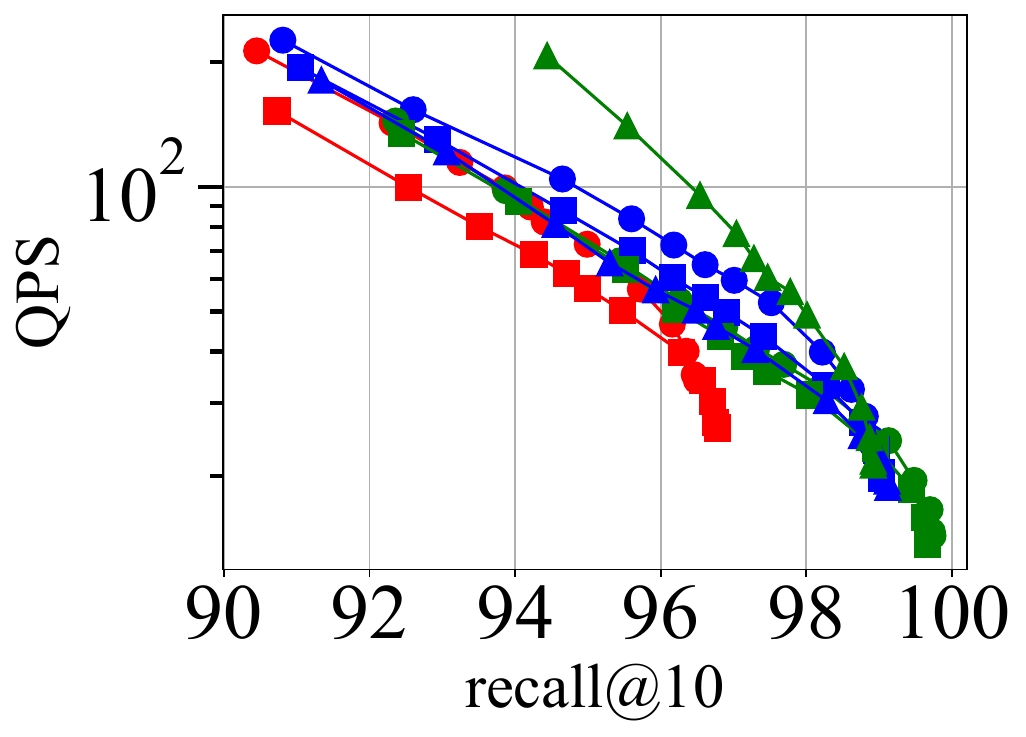}}
\subfigure[gist1M]{\label{sfig:vareps_gist1M}\includegraphics[width=0.195\linewidth]{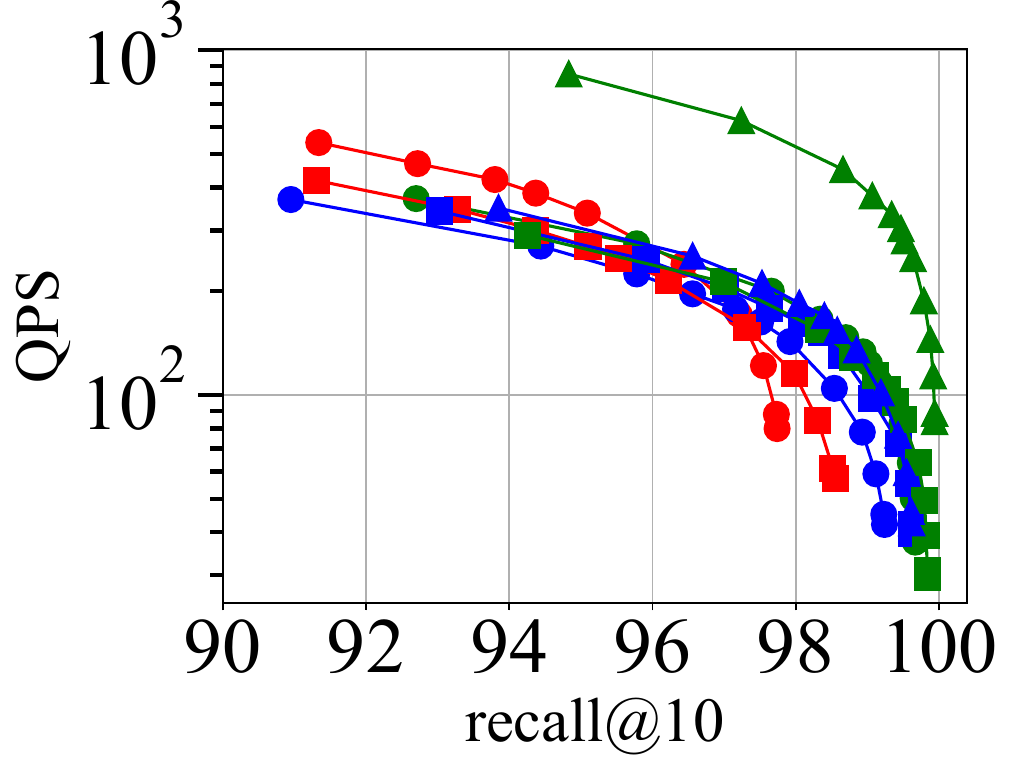}}
\subfigure[wit]{\label{sfig:vareps_wit}\includegraphics[width=0.195\linewidth]{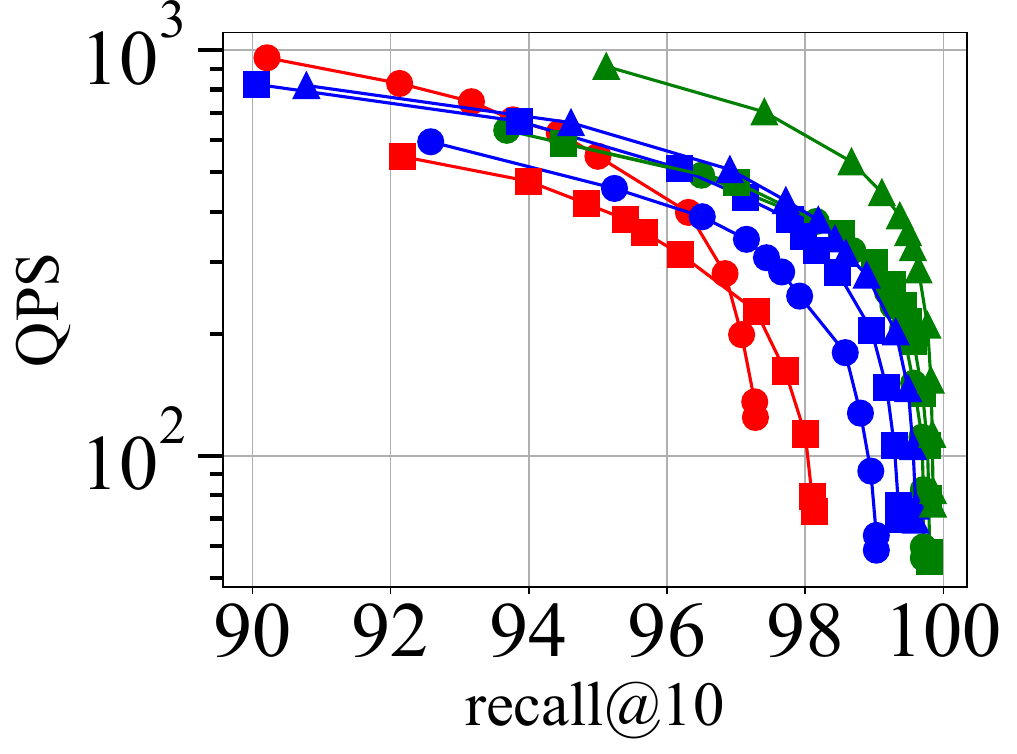}}

\end{center}
\vspace*{-0.4cm}
\caption{Results of our method with varying ($k'$, $\tau$)}
\label{fig:varcliqueminlubild} 
\vspace*{-0.2cm}
\end{figure*}

\stitle{Exp-9: Sensitivity to construction parameters ($k'$ and $\tau$).}
Figure~\ref{fig:varcliqueminlubild} analyzes the impact of the construction parameters $k'$ (initial neighborhood size) and $\tau$ (clique overlap threshold). In the plots, varying $k'$ values ($50, 100, 200$) are represented by red, blue, and green lines, respectively, while different $\tau$ thresholds ($14, 32, 50$) are distinguished by circular, square, and triangular markers.

The results indicate that performance is primarily determined by $k'$. As illustrated, the green curves ($k'=200$) consistently yield better recall/QPS trade-offs than the blue curves ($k'=100$), which in turn outperform the red curves ($k'=50$). In contrast, \mci demonstrates significant robustness to the parameter $\tau$. For instance, on the \texttt{wit} dataset (Figure~\ref{sfig:vareps_wit}), trajectories with the same color (constant $k'$) exhibit nearly identical performance profiles despite differences in $\tau$. This implies that $k'$ is the critical determinant of index quality, while the algorithm is relatively insensitive to the precise tuning of $\tau$.

\begin{table}[t!]
		\small
	\centering
		\vspace*{-0.2cm}
	\caption{Average out-degree of \mci}
		\vspace*{-0.3cm}
	\begin{tabular}{c|c|c|c|c|c|c}
		\toprule
		\multirow{2}{*}{\textbf{Datasets}}	 & \multicolumn{3}{c|}{$k'=100$}   & \multicolumn{3}{c}{$k'=200$}  \\
	\cmidrule{2-7}
	&$\tau=14$	& $\tau=32$ & $\tau=50$& $\tau=14$& $\tau=32$& $\tau=50$ \\
		\midrule
sift1M & 297.1 & 405.4 & 454.2 & 532.8 & 797.5 & 908.9  \\
deep1M & 415.2 & 644.2 & 616.8 & 726.3 & 1176.2 & 1305.7  \\
glove2.2m & 512.8 & 535.5 & 561.9 & 1953.2 & 2067.8 & 1093.7  \\
gist1M & 657.2 & 782.2 & 892.9 & 1457.8 & 1676.7 & 1871.3  \\
wit & 553.4 & 636.1 & 715.7 & 1192.4 & 1329.5 & 1453.7  \\

\bottomrule
\end{tabular}
	\vspace*{-0.4cm}
\label{tab:outdegree}
\end{table}

\stitle{Exp-10: Average out-degree of \mci.}
Table~\ref{tab:outdegree} reports the average out-degree of the \mci graph structure across various parameter settings. A key strength of \mci is its ability to produce an NNG whose effective average degree far exceeds the initial parameter $k'$.
For $k'=100$, the average out‑degree consistently surpasses 300 across all datasets, reaching over 600 in some configurations. This structural densification is pivotal for search performance, particularly in low-selectivity scenarios. Even when a significant fraction of neighbors are pruned by the query predicate, the expanded neighborhood ensures that a sufficient number of valid neighbors remain. Consequently, this high connectivity directly contributes to the algorithm's robustness and high recall. 

\begin{figure}[t]
	\vspace*{-0.3cm}
	\begin{center}
	\end{center}
	\includegraphics[width=0.5\linewidth]{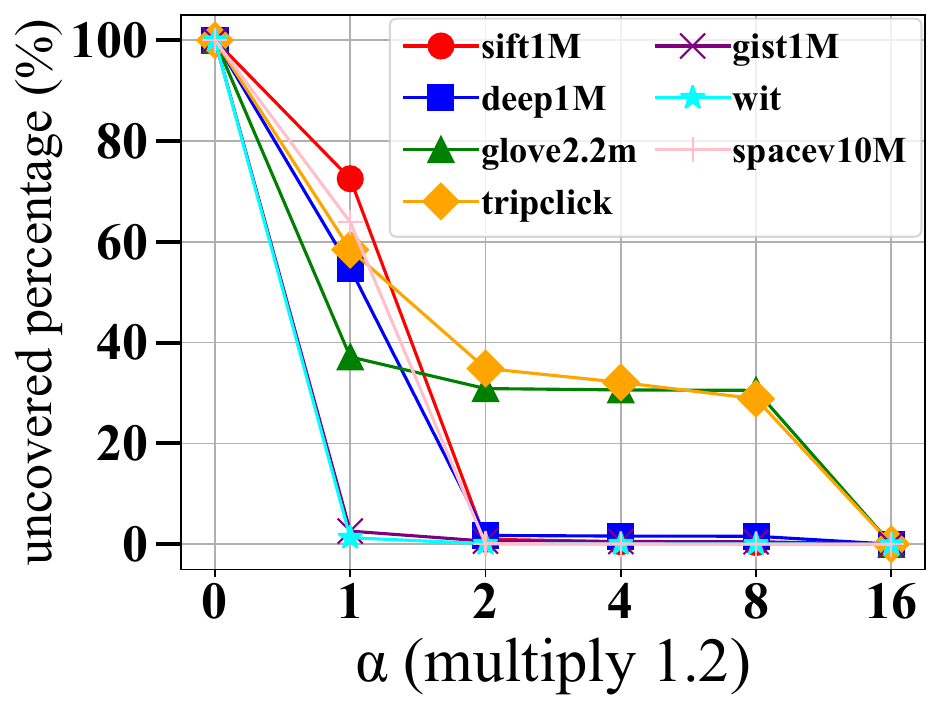}
    \vspace*{-0.3cm}
	\caption{Results of $\alpha$ vs. the percentage of uncovered nodes}
	\label{fig:alphaChange} 
	\vspace*{-0.4cm}
\end{figure}

\stitle{Exp‑11: $\alpha$ vs. percentage of uncovered nodes.}
Figure~\ref{fig:alphaChange} illustrates how the percentage of uncovered nodes decreases as $\alpha$ increases (with $\alpha_{\max}=10$). At $\alpha=2$, over $60\%$ of nodes are already covered across all datasets; the remaining uncovered nodes belong to small, isolated clusters. Full coverage is achieved when $\alpha \ge \alpha_{\max}$, as guaranteed by the fallback mechanism in Lines~17--18 of Algorithm~\ref{alg:construction}.
These results highlight two important properties: (1) the construction algorithm converges in few iterations, and (2) the first two iterations are sufficient to cover the dense clusters, leaving only sparse outliers for later rounds.

\begin{figure}[t!]
	\begin{center}
		\subfigure[sift1M]{\label{sfig:parallelB_sift1M}\includegraphics[width=0.4\linewidth]{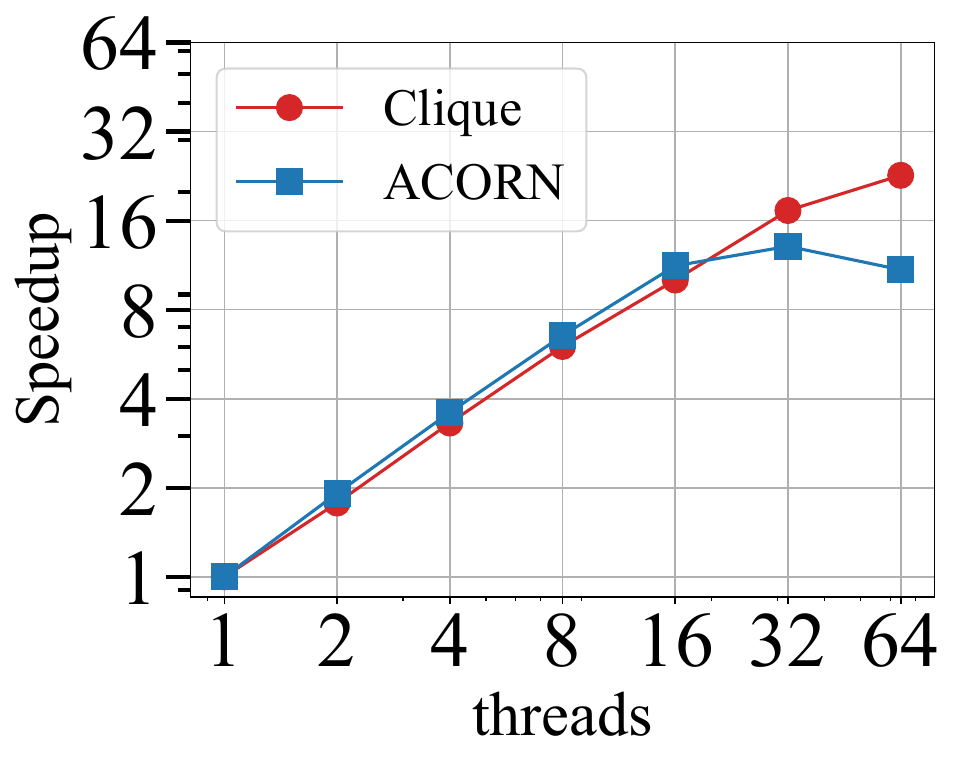}}
		\subfigure[deep1M]{\label{sfig:parallelB_deep1M}\includegraphics[width=0.4\linewidth]{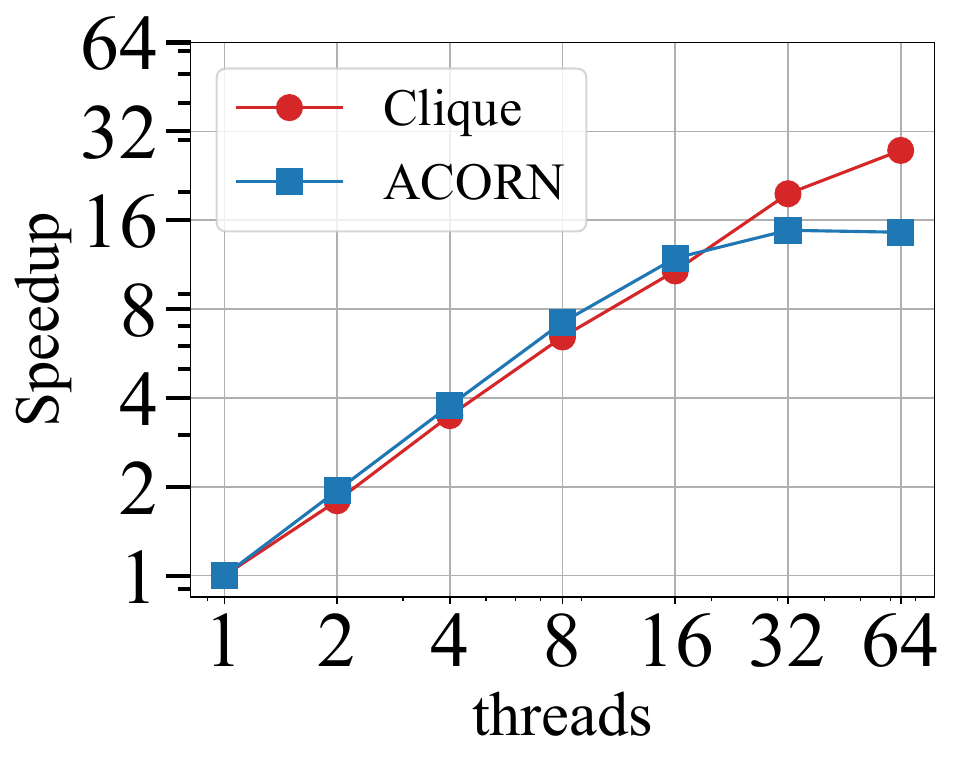}}

	\end{center}
	\vspace*{-0.4cm}
	\caption{Parallel performance of index building algorithms}
	\label{fig:parallelBuilding} 
	\vspace*{-0.3cm}
\end{figure}

\stitle{Exp-12: Parallel speedup of index construction.} Figure~\ref{fig:parallelBuilding} plots the parallel speedup of \mci and \acorn. With 16 or fewer threads, both algorithms exhibit similar speedup ratios. However, at 32 and 64 threads on the sift1M dataset, the speedup of \acorn drops to 13.10 and 10.91, respectively, whereas \mci achieves 17.31 and 22.76. The reduced parallel efficiency of \acorn at higher thread counts stems from its reliance on locks in the HNSW structure~\cite{HNSW}. In contrast, \mci benefits from a lock-free design, allowing it to scale better with available resources. These results confirm our analysis in Section~\ref{subsec:index-construct-alg}.

\subsection{Summary of experimental results}
Table~\ref{tab:summary} presents a systematic comparison of \mci and state-of-the-art methods across key performance metrics, rated using a star system ($\star$: more stars = better performance). The analysis reveals \mci's distinctive advantages: First, \mci matches the storage efficiency of quantization-based methods (like IVFPQ), a notable achievement for a graph-based index. This compact representation reduces memory overhead while preserving rich connectivity. Second, in index construction time (ICT), \mci significantly outperforms ACORN (43\% of ACORN's time) and remains competitive with HNSW. This efficiency stems from our lock-free parallel design and the geometric expansion strategy.
Most importantly, \mci delivers the best query performance—achieving the highest recall at given latency levels, especially under low selectivity. This addresses a critical limitation of existing filtered-ANNS methods. Together, these strengths give \mci the highest overall robustness rating, demonstrating its suitability for diverse real-world workloads where no single dimension can be sacrificed.

\comment{
Table~\ref{tab:summary} presents a qualitative comparison of \mci against state-of-the-art methods across key performance metrics. We employ a star-rating system ($\star$) to denote relative proficiency, where a higher count indicates superior performance (e.g., smaller memory footprint, faster construction, or higher QPS). In terms of Index Size, \mci achieves top-tier efficiency, matching the compactness of quantization-based methods like IVFPQ. Regarding the Index Construction Time (ICT), \mci remains highly competitive; while slightly slower than pure HNSW implementations, it builds significantly faster than ACORN. Crucially, \mci secures the highest rating in Query performance, offering the best trade-off between recall and latency. Consequently, the analysis highlights that \mci provides the most robust overall performance.
}

\comment{
\section{Related Work} \label{sec:rw}
\stitle{Unfiltered ANNS.}
Approximate Nearest Neighbor Search (ANNS) has been extensively studied, with existing indexing techniques broadly categorized into four primary families: tree-based structures \cite{DBLP:journals/pvldb/AroraSK018, DBLP:conf/nips/AndoniILRS15, DBLP:journals/pvldb/ChatzakisFKPP23, DBLP:journals/pami/JegouDS11, DBLP:conf/cvpr/Silpa-AnanH08}, proximity graphs \cite{DiskANN, VISTA, HCNNG, DBLP:conf/www/JungPLOL25, NSW14, CSPG, DEG, NSG, DBLP:journals/pvldb/ChenZHJW24, SteinerHardness, DBLP:journals/pacmmod/PengCCYX23, FANNG, DPG}, learning-based methods \cite{DBLP:conf/sigir/VecchiatoLNB24, DBLP:journals/jmlr/HyvonenJR24, DBLP:conf/dac/0003LYCWM024}, and Locality-Sensitive Hashing (LSH) \cite{DBLSH, PMLSH, DETLSH, DBLP:journals/pvldb/ZhaoTHZZ23}. Among these, Proximity Graph (PG)-based methods have consistently demonstrated superior performance in terms of recall and latency \cite{survey21}.

A central design principle in this domain is the approximation of ideal geometric structures. Many state-of-the-art graph indices, including HNSW \cite{HNSW}, NSG \cite{NSG}, and Vamana \cite{DiskANN}, are engineered as efficient approximations of the Relative Neighborhood Graph (RNG) \cite{DBLP:journals/pr/Toussaint80}. While some methods, such as $\tau$-MNG \cite{DBLP:journals/pacmmod/PengCCYX23} and FANNG \cite{FANNG}, offer theoretical guarantees for greedy routing success, others like HNSW \cite{HNSW} and DPG \cite{DPG} prioritize empirical efficiency over strict guarantees.

Parallel to algorithmic structural improvements, significant research has focused on mitigating the distance computation bottleneck. Key directions include dimensionality reduction and quantization techniques \cite{SymphonyQG, SubspaceCollision, DimReducSurvey, DBLP:journals/pvldb/DengCZWZZ24, FINGER, Gao23, DBLP:conf/asplos/LiuNLFGCLG024, DBLP:conf/nips/JaasaariHR24, DBLP:conf/icde/YueX0TL24, RaBitQ, Rabitqv2, DBLP:conf/icml/Lu0I24}, as well as hardware-aware optimizations targeting CPU pipelining \cite{DBLP:conf/ppopp/ManoharSBD0S024}, SIMD vectorization \cite{DBLP:conf/mir/AndreKS17}, GPU acceleration \cite{DBLP:conf/icde/OotomoNNWFW24}, and I/O efficiency \cite{DBLP:conf/isca/WangLZSLCLC24, DBLP:conf/usenix/TianLDL0024}.

\stitle{Filtered ANNS.}
The challenge of ANNS intensifies when queries incorporate filtering predicates. Prior research has largely focused on optimizing for specific filter types.
For \textit{range filtering}, where vectors possess numerical attributes and queries specify target ranges, approaches typically construct specialized indices tailored to the attribute distribution. Prominent examples include SeRF \cite{SeRF}, which embeds range-aware edges into the graph, and hybrid structures like $\beta$-WST \cite{DBLP:conf/icml/EngelsLYDS24} and iRange \cite{iRangeGraph}, which integrate binary search trees over attribute ranges \cite{DBLP:conf/icml/EngelsLYDS24, SeRF, DSG, iRangeGraph, UNIFY}. A related variant, range-containment filtering \cite{DBLP:conf/kdd/0043CZ25}, further extends this to scenarios where both data and queries possess range extents.
For \textit{categorical or label-based filtering} \cite{filteredDiskann,NHQ}, state-of-the-art methods such as UNG \cite{UNG} physically partition the dataset by label and utilize Directed Acyclic Graphs (DAGs) to encode label-set inclusion relationships, achieving high efficiency for this specific use case.

However, these specialized algorithms lack the flexibility to handle \textit{arbitrary} filtering conditions, where predicates are unpredictable and not confined to a single attribute type. General-purpose systems like FAISS \cite{FAISS} and Milvus \cite{milvus} support filtered search typically via post-filtering or simple pre-filtering heuristics, which often yield suboptimal performance. Recent learning-based approaches, such as SIEVE \cite{SIEVE}, attempt to optimize this process but rely on stable historical query workloads for training, severely limiting their applicability in dynamic or cold-start scenarios. While ACORN \cite{ACORN} represents a significant advance in handling arbitrary filters via HNSW-based navigation, our empirical analysis reveals limitations in achieving high recall under low-selectivity conditions. This gap motivates the design of \mci: a robust, graph-based index explicitly engineered to deliver high-performance arbitrary filtered ANNS across diverse selectivity regimes.
}

\section{Related work} \label{sec:rw}

\stitle{Unfiltered ANNS.}
Approximate Nearest Neighbor Search (ANNS) is a well‑studied problem, with methods falling into four main categories: tree‑based structures \cite{DBLP:journals/pvldb/AroraSK018, DBLP:conf/nips/AndoniILRS15, DBLP:journals/pvldb/ChatzakisFKPP23, DBLP:journals/pami/JegouDS11, DBLP:conf/cvpr/Silpa-AnanH08}, proximity graphs (PG) \cite{DiskANN, VISTA, HCNNG, DBLP:conf/www/JungPLOL25, NSW14, CSPG, DEG, NSG, DBLP:journals/pvldb/ChenZHJW24, SteinerHardness, DBLP:journals/pacmmod/PengCCYX23, FANNG, DPG}, learning‑based methods \cite{DBLP:conf/sigir/VecchiatoLNB24, DBLP:journals/jmlr/HyvonenJR24, DBLP:conf/dac/0003LYCWM024}, and Locality‑Sensitive Hashing (LSH) \cite{DBLSH, PMLSH, DETLSH, DBLP:journals/pvldb/ZhaoTHZZ23}. Among these, PG‑based indexes have shown the best recall‑latency trade‑off in practice \cite{survey21}. A key design principle is approximating ideal geometric structures. Many top‑performing graphs (e.g., HNSW \cite{HNSW}, NSG \cite{NSG}, Vamana \cite{DiskANN}) are efficient approximations of the Relative Neighborhood Graph (RNG) \cite{DBLP:journals/pr/Toussaint80}. Some methods (e.g., $\tau$‑MNG \cite{DBLP:journals/pacmmod/PengCCYX23}, FANNG \cite{FANNG}) provide theoretical routing guarantees, while others (e.g., HNSW \cite{HNSW}, DPG \cite{DPG}) prioritize empirical efficiency. Complementary research focuses on reducing the distance‑computation bottleneck via quantization \cite{SymphonyQG, SubspaceCollision, DimReducSurvey, DBLP:journals/pvldb/DengCZWZZ24, FINGER, Gao23, DBLP:conf/asplos/LiuNLFGCLG024, DBLP:conf/nips/JaasaariHR24, DBLP:conf/icde/YueX0TL24, RaBitQ, Rabitqv2, DBLP:conf/icml/Lu0I24} and hardware‑aware optimizations (CPU pipelining \cite{DBLP:conf/ppopp/ManoharSBD0S024}, SIMD \cite{DBLP:conf/mir/AndreKS17}, GPU \cite{DBLP:conf/icde/OotomoNNWFW24}, I/O \cite{DBLP:conf/isca/WangLZSLCLC24, DBLP:conf/usenix/TianLDL0024}).

\stitle{Filtered ANNS.}
Incorporating filtering predicates makes ANNS substantially more challenging because the index must preserve both geometric proximity and predicate reachability. As summarized in Table~\ref{tab:methods}, most prior work is specialized to one predicate family. For \textit{range filtering} on numerical attributes, static indexes such as SeRF \cite{SeRF}, SuperPF/the $\beta$‑WST framework of Engels et al. \cite{DBLP:conf/icml/EngelsLYDS24}, iRangeGraph \cite{iRangeGraph}, and UNIFY \cite{UNIFY} exploit attribute order through range-aware graph structures or hybrid trees. More recent \textit{dynamic} RFANNS methods, including DSG \cite{DSG}, DIGRA \cite{DIGRA}, and RangePQ+ \cite{RangePQ}, further support online updates while still assuming scalar attributes and interval predicates. For \textit{keyword, label, or tag filtering}, Filtered-DiskANN \cite{filteredDiskann}, NHQ \cite{NHQ}, UNG \cite{UNG}, and TFANNS \cite{TFANNS} leverage discrete set-membership semantics via label-aware graphs, attribute indexes, DAG-style inclusion structures, or tag-frequency-aware graph construction; these methods are highly effective in their target settings but typically assume static categorical metadata.

For \textit{arbitrary filtering}, the classical baselines are pre-filtering and post-filtering, which are effective only in favorable selectivity regimes. General systems such as FAISS \cite{FAISS} and Milvus \cite{milvus} usually implement filtered search through these heuristics or lightweight variants, often sacrificing robustness across mixed workloads. Learning-based methods such as SIEVE \cite{SIEVE} assume stable historical query distributions, which limits applicability in dynamic or cold-start settings. ACORN \cite{ACORN} is the main predicate-agnostic graph index in this line, extending HNSW with enlarged predicate-robust neighborhoods; however, as also confirmed by our experiments, it still incurs substantial memory overhead and can struggle to maintain high recall under low selectivity. This gap motivates \mci, which seeks to retain the generality of arbitrary filtering without relying on predicate-specific structures or workload-specific assumptions.

\newcommand{\stars}[1]{\ifcase#1\or $\star$\or $\star\star$\or $\star\star\star$\or $\star\star\star\star$\or $\star\star\star\star\star$\fi}

\begin{table}[t!]
     \small
    \centering
    \caption{Summary of performance comparison}
    \vspace*{-0.2cm}
    \begin{tabular}{lcccc}
        \toprule
        \textbf{Method} & \textbf{Index Size} & \textbf{ICT} & \textbf{Query} & \textbf{Overall} \\
        \midrule
        \acorn \cite{ACORN}         & \stars{3} & \stars{3} & \stars{3} & \stars{3} \\
        Faiss-HNSW \cite{FAISS}     & \stars{3} & \stars{5} & \stars{3} & \stars{3} \\
        Milvus-HNSW \cite{milvus}   & \stars{2} & \stars{5} & \stars{3} & \stars{3} \\
        IVFPQ \cite{FAISS}          & \stars{5} & \stars{4} & \stars{2} & \stars{2} \\
        \textbf{\mci (Ours)}        & \stars{5} & \stars{4} & \stars{5} & \stars{5} \\
        \bottomrule
    \end{tabular}
    \vspace*{-0.5cm}
    \label{tab:summary}
\end{table}

\section{Conclusion} \label{sec:conclusion}
This work addresses the Arbitrary Filtered Approximate Nearest Neighbor Search (AFANNS) problem. We present the Maximal Clique Index (\mci), a novel graph-based index designed for AFANNS. By employing a Maximal Clique Cover strategy and a  neighborhood graphs geometric densification strategy, \mci exploits the inherent clique structure of the graph to achieve  high compression and connectivity. Our results demonstrate that \mci significantly outperforms state-of-the-art baselines in both high-recall regimes and storage efficiency, offering a robust solution for the complex filtering requirements of modern vector databases.

\bibliographystyle{ACM-Reference-Format}
\balance
\bibliography{reference}

\newpage
\appendix
\section{Discussion on dynamic updates}

Although our current implementation focuses on static construction, \mci is designed to support efficient dynamic operations without global re-indexing. We propose the following lightweight strategies, which we leave for future work.

\stitle{Insertion.} We utilize a heuristic grounded in geometric transitivity (Theorem~\ref{the:nnarenn}). To insert a node $u$, we first retrieve its approximate nearest neighbors $R$ using the existing index (cost $\approx O(n^{0.55})$). We then append $u$ to any maximal clique $C$ satisfying the overlap condition $|C \cap R| \ge \sqrt{|C|}$. This overlap serves as a proxy for full connectivity, leveraging distance concentration to avoid expensive verification. If $u$ is an outlier (i.e., no such clique exists), we locally mine a new maximal clique from $R \cup \{u\}$ using the \minCliques procedure in Algorithm~\ref{alg:construction}.

\stitle{Deletion.} Deletion is straightforward: the target node is simply removed from all maximal cliques containing it. If a clique's size drops below the threshold $\tau$, locally mine the maximal cliques for remaining nodes using the \minCliques procedure in Algorithm~\ref{alg:construction}.

\section{Additional experiments}

\stitle{Exp-13: Sensitivity to the quality of $k'$-NNG.} The NN-Descent algorithm relies on iterations to converge; generally, a higher number of iterations yields a higher-quality $k'$-NNG. We measure the quality of a $k'$-NNG using Recall@$k'$. At 6, 8, 10, and 12 iterations, the quality (Recall@$k'$) is 0.39, 0.67, 0.81, and 0.91, respectively. Although iterations 8, 10, and 12 produce $k'$-NNGs of varying qualities, the query performance (Recall-QPS) of \mci remains stable, as shown in Figure~\ref{sifg:nnvarite}. These results demonstrate that \mci is not sensitive to the quality of the underlying $k'$-NNG.

\stitle{Exp-14: Sensitivity to $\alpha$ expansion ratio.} In the default setting, $\alpha$ doubles at each step (Line~8 of Algorithm~\ref{alg:construction}). Figure~\ref{sfig:alphaExpansionRatio} plots the performance under various expansion ratios: $1.25, 1.5, 1.75,$ and $2.0$. An expansion ratio of $1.25$ implies the update $\alpha \leftarrow \alpha \times 1.25$ at Line~8 of Algorithm~\ref{alg:construction}. The results in Figure~\ref{sfig:alphaExpansionRatio} indicate that \mci is insensitive to the specific choice of expansion ratio.

\begin{figure}[t!]
	\begin{center}
		
		\subfigure[Various quality of $k'$-NNG]{\label{sifg:nnvarite}\includegraphics[width=0.4\linewidth]{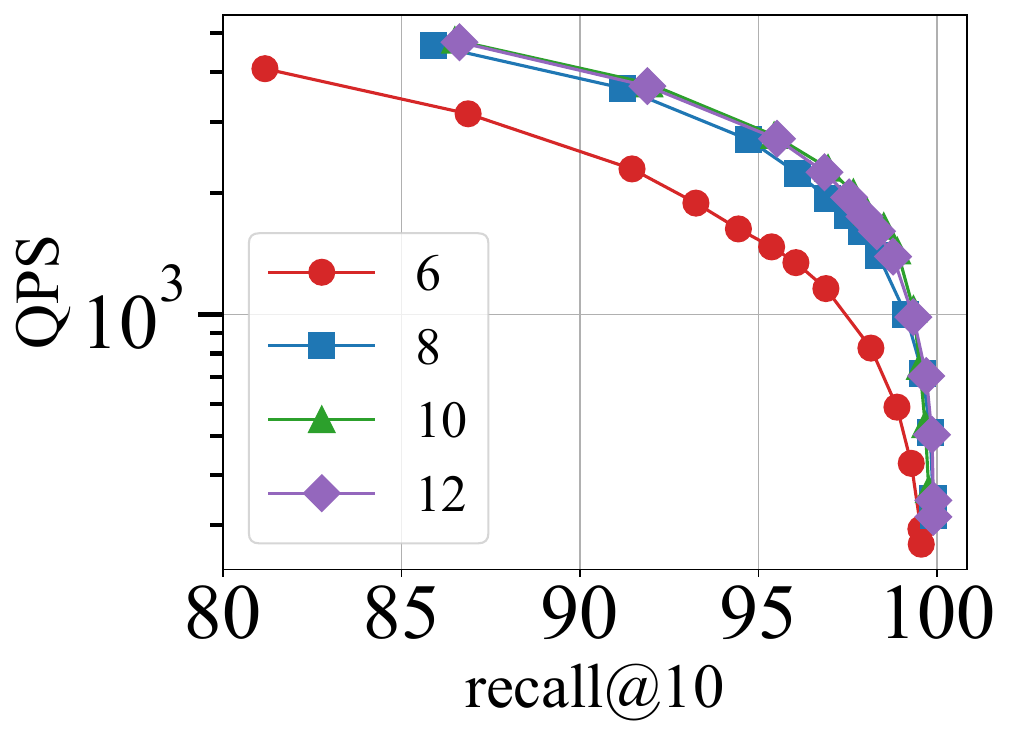}}
		\subfigure[Various $\alpha$ expansion ratio]{\label{sfig:alphaExpansionRatio}\includegraphics[width=0.4\linewidth]{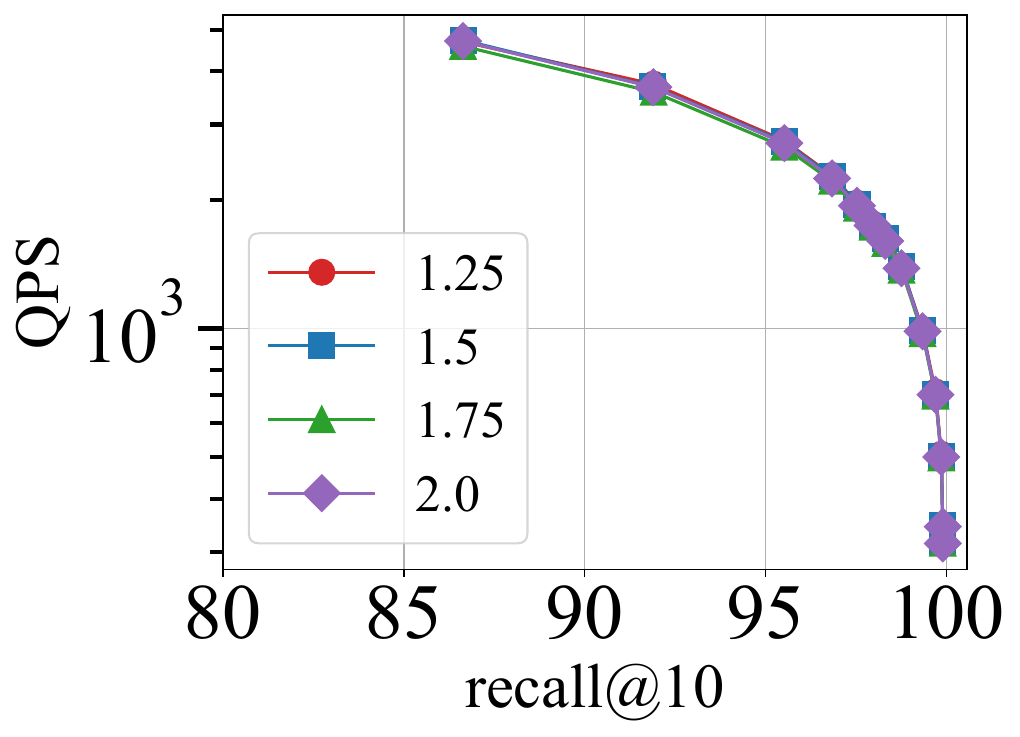}}
		
	\end{center}
	\caption{Additional experiments on \texttt{deep1M}}
	\label{fig:appendixexp} 
\end{figure}


\end{document}